\begin{document}
\newtheorem{Theorem}{Theorem}
\newtheorem{Proposition}{Proposition}
\newtheorem{Lemma}{Lemma}
\newtheorem{Definition}{Definition}
\newtheorem{Corollary}{Corollary}
\newtheorem{Example}{Example}
\newtheorem{Remark}{Remark}
\newfont{\msbm}{msbm10 scaled\magstep1}
\newfont{\eufm}{eufm10 scaled\magstep1}
\newfont{\msam}{msam10 scaled\magstep1}
\numberwithin{equation}{section}
\def\R{\mbox{\msbm R}}
\def \e {\epsilon}
\def \ar {\rightarrow}
\def \exp {\rm exp}
\def \In   {\rm In}
\def \s {\sigma}

\title{On the de la Garza Phenomenon}

\author{ \sc Min Yang \\
University of Missouri}
\date{}
\maketitle \footnote {Research sponsored by NSF grants DMS-0707013
and DMS-0748409}

\begin{abstract}
Deriving optimal designs for nonlinear models is in general
challenging. One crucial step is to determine the number of
support points needed. Current tools handle this on a case-by-case
basis. Each combination of model, optimality criterion and
objective requires its own proof. The celebrated de la Garza
Phenomenon states that under a $(p-1)$th-degree polynomial
regression model, any optimal design can be based on at most $p$
design points, the minimum number of support points such that all
parameters are estimable. Does this conclusion also hold for
nonlinear models? If the answer is yes, it would be relatively
easy to derive any optimal design, analytically or numerically. In
this paper,  a novel approach is developed to address this
question. Using this new approach, it can be easily shown that the
de la Garza phenomenon exists for many commonly studied nonlinear
models, such as the Emax model, exponential model, three- and
four-parameter log-linear models, Emax-PK1 model, as well as many
classical polynomial regression models. The proposed approach
unifies and extends many well-known results in the optimal design
literature. It has four advantages over current tools: (i) it can
be applied to many forms of nonlinear models; to continuous or
discrete data; to data with homogeneous or non-homogeneous errors;
(ii) it can be applied to any design region; (iii) it can be
applied to multiple-stage optimal design; and (iv) it can be
easily implemented.

\end{abstract}

KEY WORDS:  Locally optimal; Loewner ordering; Support points.

\section{Introduction}
The usefulness and popularity of nonlinear models have spurred a
large literature on data analysis, but research on design
selection has not kept pace.  One complication in studying optimal
designs for nonlinear models is that information matrices and
optimal designs depend on unknown parameters. A common approach to
solve this dilemma is to use locally optimal designs, which are
based on one's best guess of the unknown parameters. While a good
guess may not always be available, this approach remains of value
to obtain benchmarks for all designs (Ford, Torsney, and Wu,
1992). In fact, most available results are under the context of
locally optimal designs. (Hereafter, the word ``locally" is
omitted for simplicity.)

There is a vast literature on identifying good designs for a wide
variety of linear models, but the problem is much more difficult
and not nearly as well understood for nonlinear models. Relevant
references will be provided in later sections in this paper.

In the field of optimal designs, there exist no general approaches
for identifying good designs for nonlinear models. There are three
main reasons for this significant research gap. First, in
nonlinear models the mathematics tends to become more difficult,
which makes proving optimality of designs a more intricate
problem. Current available tools are mainly based on the geometric
approach by Elfving (1952) or the equivalence approach by Kiefer
and Wolfowitz (1960). This typically means that results can only
be obtained on a case-by-case basis. Each combination of model,
optimality criterion and objective requires its own proof. It is
not feasible to derive a general solution. Second, while linear
models are all of the form $E(y) = X\beta$, there is no simple
canonical form for nonlinear models. Coupled with the first
challenge, this means it is very difficult to establish unifying
and overarching results for nonlinear models. Again, this means
that individual consideration is typically needed for different
models, different optimality criteria, and different objectives.
Third, when considering the important practical problem of
multi-stage experiments, the search for optimal designs becomes
even more complicated because one needs to add design points on
top of an existing design.

Is there a practical way to overcome these challenges and derive a
general approach for finding optimal designs for nonlinear models?
One feasible strategy is to identify a subclass of designs with a
simple format, so that one can restrict considerations to this
subclass for any optimality problem. With a simple format, it
would be relatively easy to derive an optimal design, analytically
or numerically.

To make this strategy meaningful, the number of support points for
designs  in the subclass should be as small as possible.  By
Carath$\acute{\text{e}}$odory's theorem, we can always restrict
our consideration to at most $p(p+1)/2$  design points (where $p$
is the number of parameters). On the other hand, if we want all
parameters to be estimable, the minimum number of support points
should be at least $p$. Thus, the ideal situation is that the
designs in the subclass have no more than $p$ points. This reminds
one of de la Garza (1954)'s result, which was discussed in detail
by Pukelsheim (2006)  under the concept of ``admissibility". This
result was named  the celebrated de la Garza Phenomenon by Khuri,
Mukherjee, Sinha, and Ghosh (2006).

The de la Garza phenomenon can be explained as follows: Suppose we
consider a $(p-1)$th-degree polynomial regression model ($p$
parameters in total) with i.i.d. random errors. For any  $n$ point
design where $n>p$, there exists a design with exactly $p $
support points such that the information matrix of the latter one
is not inferior to that of the former one under Loewner ordering.
Does this phenomenon also exist for other models? For nonlinear
models with two parameters, Yang and Stufken (2009)  provided an
approach to identify the subclass of designs: for any design $\xi$
which does not belong to this class, there is a design in the
class with an information matrix that dominates $\xi$ in the
Loewner ordering. By applying this approach, they showed that many
commonly studied models, such as logistic and probit models, are
based on two design points. This result unifies and extends most
available optimality results for binary response models. However,
a limitation exists since it can only be applied to nonlinear
models with one or two parameters.

The purpose of this paper is to generalize Yang and Stufken (2009)
to nonlinear models with an arbitrary number of parameters.  The
proposed approach makes it relatively easy to prove the de la
Garza Phenomenon for many nonlinear models. In fact, for many
commonly studied nonlinear models, including the Emax model,
exponential model, three- and four-parameter log-linear models,
Emax-PK1 model, as well as many classical polynomial regression
models, it can be shown that for any given design $\xi$, there
exists a design $\xi^*$ with at most $p$ (number of parameters)
points, where  the information matrix under $\xi^*$ is not
inferior to that of $\xi$ under Loewner ordering. Thus, when
searching for an optimal design, one can restrict consideration to
this subclass of designs, both for one-stage and multi-stage
problems. Here, the optimal design can be for arbitrary parameter
functions under any information matrix based-optimality criterion,
including the commonly used $A$-, $D$-, $E$-, $\Phi_p$-, etc.
criteria as well as standardized optimality criteria proposed by
Dette (1997). Refer to Yang and Stufken (2009) for more details on
the significance of these flexibilities.

This paper is organized as follows. In Section 2, we introduce the
strategy. Main results are presented in Section 3.  Applications
to many commonly studied nonlinear models are presented in Section
4. Section 5 is a short discussion. Most proofs are included in
the Appendix.

\section{The strategy}
Suppose we have a nonlinear regression model for which at each point $x$ the experimenter observes a response $y$. We assume that
the $y$'s are independent and follow some exponential distribution $G$ with mean $\eta(x,\theta)$, where $\theta$ is $p\times 1$ parameters vector. Typically, the optimal nonlinear designs are studied under approximate theory, i.e.,  instead of exact sample sizes for design points, design weights are used. An approximate design $\xi$ can be written as  $\xi=\{(x_i,\omega_i), i=1,\ldots,n\}$, where $\omega_i>0$ is the design weight for design point $x_i$ and $\sum_{i=1}^n \omega_i=1$. It is more convenient to rewrite $\xi$ as $\xi=\{(c_i,\omega_i), i=1,\ldots,n\}$, where $c_i\in [A,B]$ may depend on $\theta$ and is one-to-one map of  $x_i\in[U,V]$. Typically, the information matrix for $\theta$ under design $\xi$ can be written as
\begin{equation}\begin{split}
I_{\xi}(\theta)=P(\theta) \left(\sum_{i=1}^n\omega_i  C(\theta, c_i)\right)(P(\theta))^T. \label{infor1}
\end{split}\end{equation}
where
\begin{equation}\begin{split}
C(\theta, c_i)= \begin{pmatrix}
 \Psi_{11}(c_i) &  \Psi_{12}(c_i) & \ldots & \Psi_{1p}(c_i)\\
 \Psi_{12}(c_i) &  \Psi_{22}(c_i) & \ldots  &\Psi_{2p}(c_i)\\
\vdots & \vdots & \ddots  & \vdots\\
 \Psi_{1p}(c_i) &  \Psi_{2p}(c_i) & \ldots  &\Psi_{pp}(c_i)
\end{pmatrix}
 \label{infor2}
\end{split}\end{equation}
Here,  $P(\theta)$ is a $p\times p$ nonsingular matrix that depends on the value of $\theta$ only. Notice that while $I_{\xi}(\theta)$ is fixed for given $\theta$ and $\xi$, there is flexibility on $P(\theta)$ and
$C(\theta, c_i)$. For many models, we can adjust $P(\theta)$ so that all $\Psi_{lt}$'s in (\ref{infor2}) are free of $\xi$ and $\theta$. Some examples of (\ref{infor1}) and (\ref{infor2}) are given in Section 4.

Under locally optimality context, for two given designs  $\xi=\{(c_i,\omega_i), i=1,\ldots,n\}$ and  $\xi^*=\{(\tilde{c}_j,\tilde{\omega}_j), j=1,\ldots,\tilde{n}\}$,  $I_{\xi}(\theta)\leq I_{\xi^*}(\theta)$ is equivalent to $\sum_{i=1}^n\omega_i  C(\theta, c_i)\leq \sum_{j=1}^{\tilde{n}}\tilde{\omega}_j  C(\theta, \tilde{c}_j)$ (here and elsewhere in this paper,
matrix inequalities are under the Loewner ordering). One strategy  to show  $I_{\xi}(\theta)\leq I_{\xi^*}(\theta)$ is to
prove that  the following equations hold:
\begin{equation}\begin{split}
\sum_{i=1}^n \omega_i\Psi_{lt}(c_i)= \sum_{j=1}^{\tilde{n}} \tilde{\omega}_j\Psi_{lt}(\tilde{c}_j),  \label{strategy:1}
\end{split}\end{equation}
for $1\leq l\leq t\leq p$ except for some $l=t$ (one or more)
\begin{equation}\begin{split}
\sum_{i=1}^n \omega_i\Psi_{ll}(c_i)\leq \sum_{j=1}^{\tilde{n}} \tilde{\omega}_j\Psi_{ll}(\tilde{c}_j). \label{strategy:2}
\end{split}\end{equation}
The development of the new tool is based on this strategy. Notice that Yang and Stufken (2009) used the same strategy for the $p=2$ case. However, the picture for a general $p$ is completely different. This is because when $p=2$,  the existence of $\xi^*$ can be based on the existence of one $\tilde{c}$ and one $\tilde{\omega}$ satisfying two nonlinear equations, which can be solved explicitly. For an arbitrary $p$, it is unlikely to derive such explicit expressions for $\xi^*$ since the existence of $\xi^*$ is based on the existence of multiple (approximately $p(p+1)/4$) $\tilde{c}$'s and $\tilde{\omega}$'s satisfying multiple (approximately $p(p+1)/2$) nonlinear equations.  Alternative approaches must be employed. In the next section, some new algebra results will be provided to address these needs.

\section{The approach}
In this section, we shall show that, under certain conditions, for a general nonlinear model, there exists a subclass of designs such that for any given design $\xi$, there exists a design $\tilde{\xi}$ in this subclass such that $I_{\xi}(\theta)\leq I_{\xi^*}(\theta)$. We first introduce some new algebra results.
\subsection{Algebra results}
Let $\Psi_1, \ldots,
\Psi_k$ be $k$ functions defined on $[A,B]$. Throughout this paper, we have the following assumptions:

\textbf{Assumption:}
\begin{itemize}
\item[(i)] $\Psi_1, \ldots, \Psi_k$ are infinity differentiable;
\item[(ii)] $f_{l,l}$ has no zero value on $[A,B]$.
\end{itemize}
Here, $f_{l,t}$, $1\leq t\leq k; t\leq l\leq k$ are defined as follows:
\begin{equation}\begin{split}
f_{l,t}(c)=\left\{
  \begin{array}{ll}
    \Psi_l'(c), & \hbox{$t=1$, $l=1,\ldots,k$} \\
    \left(\frac{f_{l,t-1}(c)}{f_{t-1,t-1}(c)}\right)', & \hbox{$2\leq t\leq k$, $t\leq l\leq k$.}
  \end{array}
\right. \label{def:df}
\end{split}\end{equation}

The structure of computations of
$f_{l,t}$ can be viewed as the following lower triangular matrix.

\begin{equation}\begin{split}
\begin{pmatrix}
  f_{1,1}=\Psi_1' &  &  &  &   \\
  f_{2,1}=\Psi_2' & f_{2,2}=\left(\frac{f_{2,1}}{f_{1,1}}\right)' &  &  &
  \\
  f_{3,1}=\Psi_3' & f_{3,2}=\left(\frac{f_{3,1}}{f_{1,1}}\right)'  & f_{3,3}=\left(\frac{f_{3,2}}{f_{2,2}}\right)' &  &
  \\
  f_{4,1}=\Psi_4' & f_{4,2}=\left(\frac{f_{4,1}}{f_{1,1}}\right)' & f_{4,3}=\left(\frac{f_{4,2}}{f_{2,2}}\right)'  & f_{4,4}=\left(\frac{f_{4,3}}{f_{3,3}}\right)'  &
\\
   \vdots & \vdots  & \vdots  & \vdots   & \ddots
\end{pmatrix} \label{df}
\end{split}\end{equation}

The $(t+1)$th column is obtained from the $t$th column. The $l$th ($l\geq t+1$) element of the $(t+1)$th column is the derivative of the ratio between the $l$'th  and the $t$'th element of the $t$th column.

\begin{Lemma} \label{exist}
Let $\Psi_1, \ldots,
\Psi_k$ be $k$ functions defined on $[A,B]$. Assume that $f_{l,l}(c)>0$, $c\in[A,B]$, $l=1,\ldots,k$. Then we have following conclusions:
\begin{itemize}
\item [(a)] when $k=2n-1$. For any given  $A\leq \tilde{c}_0<c_1<\ldots <c_n\leq B$ and $\omega_i>0$, $i=1,\ldots,n$, there exist n pairs $(\tilde{c}_j,\tilde{\omega}_j)$, $j=0,\ldots,n-1$, where $\tilde{c}_0<c_1<\tilde{c}_1<c_2 <\ldots <\tilde{c}_{n-1}<c_n$ and $\tilde{\omega}_j>0$, such that (\ref{exist:1}) and (\ref{exist:2}) hold, and
(\ref{exist:3})$>0$.

\item [(b)] when $k=2n-1$. For any given  $A\leq c_1<\ldots <c_n<\tilde{c}_n\leq B$ and $\omega_i>0$, $i=1,\ldots,n$, there exist n pairs $(\tilde{c}_j,\tilde{\omega}_j)$, $j=1,\ldots,n$, where $A\leq c_1<\tilde{c}_1<c_2 <\ldots <\tilde{c}_{n-1}<c_n<\tilde{c}_n\leq B$ and $\tilde{\omega}_j>0$, such that (\ref{exist:1}) and (\ref{exist:2}) hold, and
(\ref{exist:3})$<0$.

  \item [(c)] when $k=2n$. For any given $A\leq \tilde{c}_0<c_1<\ldots <c_n<\tilde{c}_n\leq B$ and $\omega_i>0$, $i=1,\ldots,n$, there exist $n+1$ pairs $(\tilde{c}_j,\tilde{\omega}_j)$, $j=0,\ldots,n$, where $A\leq \tilde{c}_0<c_1<\tilde{c}_1< \ldots <c_n<\tilde{c}_n \leq B$ and $\tilde{\omega}_j>0$, such that (\ref{exist:1}) and (\ref{exist:2}) hold, and
  (\ref{exist:3})$<0$.

  \item [(d)] when $k=2n$. For any given  $A\leq c_1<\ldots <c_{n+1} \leq B$ and $\omega_i>0$, $i=1,\ldots,n+1$, there exist $n$ pairs $(\tilde{c}_j,\tilde{\omega}_j)$, $j=1,\ldots,n$, where $A\leq c_1<\tilde{c}_1< \ldots <c_n<\tilde{c}_n <c_{n+1}\leq B$ and $\tilde{\omega}_j>0$, such that (\ref{exist:1}) and (\ref{exist:2}) hold, and
  (\ref{exist:3})$>0$.
\end{itemize}

Here,
\begin{eqnarray}
&&\sum_{i} \omega_i=\sum_{j}\tilde{\omega}_j; \label{exist:1}\\
&&\sum_{i}\omega_i\Psi_l(c_{i})=\sum_{j}\tilde{\omega}_j\Psi_l(\tilde{c}_{j}),
l=1,\ldots,k-1; \label{exist:2}\\
&&
\sum_{i}\omega_i\Psi_k(c_{i})-\sum_{j}\tilde{\omega}_j\Psi_k(\tilde{c}_{j}).\label{exist:3}
\end{eqnarray}
\end{Lemma}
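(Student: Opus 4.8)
The plan is to recognize that the positivity hypotheses $f_{l,l}(c)>0$ for $l=1,\ldots,k$ are exactly the conditions making $\{1,\Psi_1,\ldots,\Psi_k\}$ an extended complete Tchebycheff (ECT) system on $[A,B]$: the recursive table \eqref{df} is the standard construction of the generalized derivatives of such a system, and positivity of every diagonal entry $f_{l,l}$ is equivalent to positivity of all the generalized Wronskians. The payoff of this identification, which I would establish first, is that any nonzero generalized polynomial $\sum_{l=0}^{k}a_l\Psi_l$ (with $\Psi_0\equiv 1$) has at most $k$ zeros on $[A,B]$, counted with multiplicity. Every later step rests on this zero-count.

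Next I would recast all four claims as a single statement about a signed measure. Writing $\mu=\sum_i\omega_i\delta_{c_i}$ and $\nu=\sum_j\tilde\omega_j\delta_{\tilde c_j}$, conditions \eqref{exist:1}--\eqref{exist:2} say precisely that $\lambda=\mu-\nu$ annihilates $\mathrm{span}\{1,\Psi_1,\ldots,\Psi_{k-1}\}$, while \eqref{exist:3} asks for the sign of $\int\Psi_k\,d\lambda$. In each case the prescribed interlacing of the $c_i$ and $\tilde c_j$ forces the atoms of $\lambda$ to alternate in sign, so what must be produced is a principal (canonical) representation $\nu$ of the moment point carried by $\mu$, the sign demanded by \eqref{exist:3} selecting the lower or the upper one. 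Existence of $\nu$ with strictly interlacing nodes, strictly positive weights, and the dictated pattern of which extreme nodes are present is the classical Markov--Krein canonical-representation theory for ECT systems (Karlin and Studden); the parity of $k$ decides whether $n$, $n$, $n+1$, or $n$ nodes occur and which of $\tilde c_0,\tilde c_n$ must be supplied in advance. I would either invoke this directly or, to keep the argument self-contained, prove it by induction on $n$, reducing $(k,n)$ to $(k-2,n-1)$ through the change of variable $u=\Psi_1(c)$ (which makes $\Psi_1$ linear) followed by the second-order $f_{\cdot,2}$-reduction, whose diagonal positivity is inherited from \eqref{def:df}; the base case $n=1$ is just the strict monotonicity of $\Psi_1$.

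The sign of \eqref{exist:3} I would then settle by one-sided approximation. Choose $P\in\mathrm{span}\{1,\Psi_1,\ldots,\Psi_{k-1}\}$ so that $Q=\Psi_k-P$ has exactly $k$ zeros on $[A,B]$, placed at $\mathrm{supp}(\nu)$: a double zero at each interior node and a simple zero at each extreme node $\tilde c_0$ and/or $\tilde c_n$ that the case supplies. This uses the full zero budget $k$ permitted by the ECT property, so such a $Q$ exists, vanishes on $\mathrm{supp}(\nu)$, and changes sign only at its simple zeros. Since $\lambda\perp P$, we get $\int\Psi_k\,d\lambda=\int Q\,d\mu-\int Q\,d\nu=\int Q\,d\mu$, the last equality because $Q$ kills every atom of $\nu$. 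Now $Q$ has one definite sign throughout the region occupied by $\mathrm{supp}(\mu)$ (interior double zeros do not change it), and that sign is pinned down by the monic orientation of $\Psi_k$ (so $Q>0$ to the right of its largest zero) together with the parity of the number of simple zeros lying to the right of $\mathrm{supp}(\mu)$: none in (a) and (d), giving $\int Q\,d\mu>0$; one, namely $\tilde c_n$, in (b) and (c), giving $\int Q\,d\mu<0$. Strictness holds because the interlacing keeps every $c_i$ off $\mathrm{supp}(\nu)=\{\,Q=0\,\}$.

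The main obstacle is this last sign bookkeeping, fused with the existence step. The delicate point is not the existence of $\nu$ in isolation but controlling, simultaneously across all four endpoint configurations, (i) exactly which extreme nodes carry a simple versus a double zero of $Q$, and (ii) the resulting orientation of $Q$ on the interval meeting $\mathrm{supp}(\mu)$; a single parity miscount flips the conclusion between $>0$ and $<0$. I expect the cleanest route to be the inductive one, where the reduction $(k,n)\to(k-2,n-1)$ preserves parity and lets the sign be propagated alongside the interlacing rather than reconstructed at the end.
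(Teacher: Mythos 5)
Your identification of the hypothesis $f_{l,l}>0$, $l=1,\ldots,k$, with the statement that $\{1,\Psi_1,\ldots,\Psi_k\}$ is an ECT system is correct --- the table (\ref{df}) is precisely the classical reduction producing the weight functions of such a system --- and your route is genuinely different from the paper's. The paper is entirely self-contained: it proves the relevant T-system determinant inequalities from scratch (Propositions \ref{prop2}--\ref{prop3} and their corollary), obtains positivity of the $\tilde{\omega}_j$ and the sign of (\ref{exist:3}) by writing the weights via Cramer's rule as ratios of these positive determinants (Propositions \ref{even} and \ref{odd}, culminating in (\ref{even:9})--(\ref{even:11})), and --- this is the bulk of the appendix --- establishes existence of the interlacing nodes by an induction on $k$ driven by intermediate-value-theorem arguments on the solution maps $S^{I},\ldots,S^{IV}$, whose continuity and monotonicity properties occupy Proposition \ref{property}. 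Your dual-polynomial argument for the sign of (\ref{exist:3}) (take $Q=\Psi_k-P$ vanishing doubly at the free nodes and simply at the supplied node(s), so $\int\Psi_k\,d\lambda=\int Q\,d\mu$) is the standard proof of the Markov--Krein inequality; the zero budget $k$ is met exactly in all four cases once ``extreme node'' is read as ``node supplied in advance'' rather than ``endpoint of $[A,B]$'' (in case (a) the supplied node $\tilde{c}_0$ may be interior, and a simple zero there still suffices since all of $\mathrm{supp}(\mu)$ lies to its right), and the sign bookkeeping you describe checks out in all four cases. This part is arguably cleaner than the paper's determinant manipulations.

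The gap is in the existence step, which is where nearly all of the paper's work lies. The bare Markov--Krein theorem supplies the two principal representations, but cases (a)--(c) require a canonical representation with a node prescribed at an \emph{arbitrary} given point $\tilde{c}_0<c_1$ (or $\tilde{c}_n>c_n$), together with strict interlacing against the given $c_i$ and strictly positive weights; this is the one-parameter family of canonical representations, and you would need to cite and verify that precise statement rather than gesture at ``lower or upper.'' Your self-contained alternative --- induction reducing $(k,n)$ to $(k-2,n-1)$ via $u=\Psi_1(c)$ and the $f_{\cdot,2}$-reduction --- is not worked out: you do not explain how the prescribed node, the interlacing pattern, or the moment data of $\mu$ transform under that reduction, and the paper's own induction (Propositions \ref{dedu2} and \ref{dedu1}) needs several pages of delicate limit arguments precisely at this point. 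As written, the proposal correctly settles the weight-positivity and sign assertions conditional on existence, but leaves existence of the configuration $(\tilde{c}_j,\tilde{\omega}_j)$ either to an unspecified external theorem or to an unelaborated induction.
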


Yang and Stufken (2009) has proven Lemma \ref{exist} for $k=2$ and
3. For arbitrary $k$, the proof is rather complicated (see
Appendix). Lemma \ref{exist} requires that $f_{l,l}(c)>0$ for
every $l=1,\ldots,k$, which is very demanding. In fact, such
strict conditions are not required. Suppose there are some
$f_{l,l}(c)<0$, we can consider  $-\Psi_l(c)$ instead of
$\Psi_l(c)$ depending on the situation, such that the
corresponding $f_{l,l}(c)>0$ for every $l=1,\ldots,k$. Notice that
(\ref{exist:2}) is invariant to such transformation and the sign
of the inequality (\ref{exist:3}) may need to be reversed. Thus we
can have similar results as Lemma \ref{exist} with a relaxed
condition. We are ready to present our first main theorem.

\begin{Theorem} \label{main1}
Let $\Psi_1, \ldots,
\Psi_k$ be $k$ functions defined on $[A,B]$. Let $F(c)=\prod_{l=1}^kf_{l,l}(c)$. For any given $N$ pairs $(c_i,\omega_i)$, where $c_i\in[A,B]$ and $\omega_i>0$, $i=1,\ldots,N$, there exists a set of pairs $(\tilde{c}_j,\tilde{\omega}_j)$, where $\tilde{c}_j\in[A,B]$ and $\tilde{\omega}_j>0$, such that  (\ref{exist:1}) and (\ref{exist:2}) holds, and
(\ref{exist:3})$<0$. Specifically,

\begin{itemize}
\item [(a)] when $k=2n-1$, $N\geq n$, and $F(c)<0$ for $c\in[A,B]$,  there are $n$ pairs $(\tilde{c}_j,\tilde{\omega}_j)$ in the set and one of $\tilde{c}_j$'s  is $A$;

\item [(b)] when $k=2n-1$, $N\geq n$, and $F(c)>0$ for $c\in[A,B]$, there are $n$ pairs $(\tilde{c}_j,\tilde{\omega}_j)$ in the set and one of $\tilde{c}_j$'s  is $B$;

  \item [(c)] when $k=2n$, $N\geq n$, and $F(c)>0$ for $c\in[A,B]$, there are $n+1$ pairs $(\tilde{c}_j,\tilde{\omega}_j)$ in the set and two of $\tilde{c}_j$'s  are $A$ and $B$;

  \item [(d)] when $k=2n$, $N\geq n+1$, and $F(c)<0$ for $c\in[A,B]$, there are $n$ pairs $(\tilde{c}_j,\tilde{\omega}_j)$ in the set.
  \end{itemize}

\end{Theorem}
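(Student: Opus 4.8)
The plan is to deduce Theorem~\ref{main1} from Lemma~\ref{exist} in two stages: first normalize the functions so that the hypothesis $f_{l,l}>0$ of the Lemma holds, and then pass from the arbitrary input of $N$ points to the minimal configuration on which the Lemma operates. Throughout, the only role of the given design is through the quantities $\mu_l=\sum_i\omega_i\Psi_l(c_i)$ for $l=0,\dots,k$ (with $\Psi_0\equiv 1$), since (\ref{exist:1})--(\ref{exist:2}) ask precisely that the new design reproduce $\mu_0,\dots,\mu_{k-1}$, while (\ref{exist:3}) compares the $k$th moments.

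First I would dispose of the sign hypothesis. If some $f_{l,l}$ are negative, replace $\Psi_l$ by $-\Psi_l$ as in the Remark following Lemma~\ref{exist}; by Assumption~(ii) each $f_{l,l}$ is continuous and nonvanishing, hence of constant sign on $[A,B]$, so finitely many such replacements make every diagonal entry positive. The key bookkeeping is to track how a replacement $\Psi_l\mapsto\epsilon_l\Psi_l$, $\epsilon_l\in\{\pm1\}$, propagates through (\ref{def:df}). A direct induction on the columns of (\ref{df}) shows that $f_{l,t}$ scales by the factor $\epsilon_l\epsilon_{t-1}$ (with $\epsilon_0=1$), because the ratio $f_{l,t-1}/f_{t-1,t-1}$ scales by $\epsilon_l/\epsilon_{t-1}=\epsilon_l\epsilon_{t-1}$; in particular $\mathrm{sign}(\hat f_{l,l})=\epsilon_{l-1}\epsilon_l\,\mathrm{sign}(f_{l,l})$. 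Imposing $\hat f_{l,l}>0$ forces $\epsilon_l=\mathrm{sign}\left(\prod_{m=1}^{l}f_{m,m}\right)$, and in particular $\epsilon_k=\mathrm{sign}(F)$. Since (\ref{exist:1})--(\ref{exist:2}) are invariant under each flip while (\ref{exist:3}) is multiplied by $\epsilon_k$, requiring (\ref{exist:3})$<0$ in the original problem is equivalent to prescribing the sign of (\ref{exist:3}) for the normalized functions as $\mathrm{sign}(F)$. Matching this against the four outcomes of Lemma~\ref{exist} then pins down which case to invoke and where the free endpoints land: for $k$ odd, $F<0$ sends us to Lemma~\ref{exist}(a) with the free point placed at $A$, and $F>0$ to (b) with the free point at $B$; for $k$ even, $F>0$ sends us to (c) with endpoints $A$ and $B$, and $F<0$ to (d). This reproduces the support counts and boundary placements asserted in (a)--(d).

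It remains to pass from an arbitrary input of $N$ points to the exact input size demanded by the Lemma ($n$ points in (a)--(c), $n+1$ in (d)). I would argue by induction on $N$, the base case being precisely Lemma~\ref{exist}. For the inductive step, when $N$ exceeds the base size I would exhibit a design with one fewer support point that still reproduces $\mu_0,\dots,\mu_{k-1}$ and keeps (\ref{exist:3}) of the required sign, then invoke the inductive hypothesis. Such a reduction exists because, with more than the minimal number of atoms, the designs supported on the current points matching the $k$ conditions $\mu_0,\dots,\mu_{k-1}$ form a set of positive dimension; moving along an admissible direction until a weight vanishes removes a point, and the Chebyshev structure encoded by $f_{l,l}>0$ lets one do this without decreasing the relevant $k$th moment. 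Equivalently, one may view the target as the design maximizing $\sum_j\tilde\omega_j\Psi_k(\tilde c_j)$ over all designs on $[A,B]$ with the prescribed lower moments; the maximizer is an extreme point whose support size and endpoint behavior are those of the normalized Lemma, and strictness of (\ref{exist:3}) follows from the strict interlacing in Lemma~\ref{exist} whenever the input is not already this extremal representation.

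I expect the count reduction to be the main obstacle. The delicate point is to reduce the number of support points while holding \emph{all} of $\mu_0,\dots,\mu_{k-1}$ fixed \emph{exactly} and simultaneously keeping (\ref{exist:3}) strictly on the correct side; a single application of Lemma~\ref{exist} only controls the sign of the top moment rather than preserving it, so the reduction must be organized to preserve the lower moments exactly at every step and to let the boundary points $A$, $B$ emerge in the limiting configuration. The sign bookkeeping of the first stage, by contrast, is routine once the transformation law $\epsilon_k=\mathrm{sign}(F)$ is established.
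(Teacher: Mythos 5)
Your first stage (sign normalization) is sound and in substance matches the paper's argument: the paper flips blocks of consecutive $\Psi_l$'s determined by the positions of the negative $f_{l,l}$'s, while your transformation law $\epsilon_l=\mathrm{sign}\bigl(\prod_{m=1}^{l}f_{m,m}\bigr)$, hence $\epsilon_k=\mathrm{sign}(F)$, is an equivalent and arguably cleaner piece of bookkeeping; the resulting case-matching with Lemma \ref{exist}(a)--(d) and the placement of the endpoints come out the same.

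The gap is in the second stage, and it sits exactly where you predicted trouble. Your primary reduction mechanism --- fix the support, observe that the weight vectors satisfying the $k$ conditions $\mu_0,\dots,\mu_{k-1}$ form a positive-dimensional affine set, and slide along it until a weight vanishes --- requires $N>k$, not merely $N$ larger than the target support size $n\approx k/2$. For $n<N\leq k$ the weights on a fixed support are generically the \emph{unique} solution of the moment equations, so there is no admissible direction to move along; to create room you would have to vary the support points as well, and then it is no longer automatic that a weight hits zero before points collide or leave $[A,B]$, nor that the top moment is monotone along the path. Your fallback (characterize the output as the maximizer of the $\Psi_k$-moment subject to the lower moments, i.e., a principal representation in the Markov--Krein sense) is a legitimate classical route, but as written it is an assertion: it needs the Chebyshev property of $\{1,\Psi_1,\dots,\Psi_k\}$ (which the paper supplies as Proposition \ref{prop2}) together with the structure theorem for extremal representations, neither of which you derive. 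The paper sidesteps all of this with a different inductive step: having compressed the first $M$ points into $n$ points, one of which is the prescribed endpoint, it forms a \emph{new} $n$-point input from the $n-1$ non-endpoint outputs together with the single leftover point $c_{M+1}$ and applies the $N=n$ base case to that configuration; each of the two applications preserves the lower moments exactly and strictly increases the top moment, and the two point masses at the endpoint merge, so the support count stays at $n$. That recombination trick is the missing idea in your proposal.
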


\begin{proof}
The proofs for the above four cases are completely analogous. Here we will provide the proof of Theorem \ref{main1} (a). First, we prove the conclusion holds when $N=n$. From (\ref{df}), it is easy to verify that if we change only one $\Psi_l(c)$ to $-\Psi_l(c)$, say, $l=l_0$, and keep all other $\Psi_l(c)$'s the same, then all $f_{l,l}(x)$ will maintain their original signs with two exceptions: (i) $f_{l_0,l_0}(c)$ and $f_{l_0+1,l_0+1}(c)$ reverse the sign when $l_0<k$ or (ii) $f_{k,k}(c)$  reverse the sign when $l_0=k$. Among all $f_{l,l}$, $l=1,\ldots,k$, suppose $a$ of them are negative, say $f_{l_1,l_1},\ldots,f_{l_a,l_a}$. Here, $l_1<\ldots<l_a$ and  $a$ must be an odd number.

When $l_{2b-1}\leq l <l_{2b}$ ($1\leq b \leq (a-1)/2$) or $l\geq l_a$, $\widetilde{\Psi}_l(c)$ is defined as $-\Psi_l(c)$. Otherwise, $\widetilde{\Psi}_l(c)=\Psi_l(c)$.
We can verify that the corresponding $\widetilde{f}_{l,l}(c)>0, l=1,\ldots,k$ by repeatedly using the argument for  the change of signs of $f_{l,l}(c)$'s when we change only one $\Psi_l(c)$ to $-\Psi_l(c)$ each time. Now let $\tilde{c}_0=A$, and notice that  $\widetilde{\Psi}_k(c)=-\Psi_k(c)$, by Lemma \ref{exist} (a), the conclusion follows.

Assume that Lemma \ref{exist} (a) holds for $n\leq N\leq M$. Now we consider $N=M+1$. Following this assumption, for the $M$ pairs $(c_i,\omega_i)$, $1\leq i\leq M$, there exist $n$ pairs $(\overline{c}_j,\overline{\omega}_j)$, $j=0,\ldots,n-1$, where $\overline{c}_0=A$, such that
\begin{equation}\begin{split}
&\sum_{i=1}^M \omega_i=\sum_{j=0}^{n-1}\overline{\omega}_j, \\
&\sum_{i=1}^M \omega_i\Psi_l(c_{i})=\sum_{j=0}^{n-1} \overline{\omega}_j\Psi_l(\overline{c}_{j}),
l=1,\ldots,k-1,\\
&
\sum_{i=1}^M  \omega_i\Psi_k(c_{i})<\sum_{j=0}^{n-1} \overline{\omega}_j\Psi_k(\overline{c}_{j}).\label{main1:1}
\end{split}\end{equation}
Consider the $n-1$ pairs $(\overline{c}_j,\overline{\omega}_j)$, $j=1,\ldots,n-1$ and $(c_{M+1},\omega_{M+1})$. Apply (a) when $N=n$, there exist $n$ pairs $(\tilde{c}_j,\tilde{\omega}_j)$, $j=0,\ldots,n-1$ where $\tilde{c}_0=A$, such that
\begin{equation}\begin{split}
&\omega_{M+1}+\sum_{j=1}^{n-1}\overline{\omega}_j=\sum_{j=0}^{n-1}\tilde{\omega}_j, \\
&\omega_{M+1}\Psi_l(c_{M+1})+\sum_{j=1}^{n-1}\overline{\omega}_j\Psi_l(\overline{c}_{j})=\sum_{j=0}^{n-1}\tilde{\omega}_j\Psi_l(\tilde{c}_j),
l=1,\ldots,k-1,\\
&
\omega_{M+1}\Psi_k(c_{M+1})+\sum_{j=1}^{n-1}\overline{\omega}_j\Psi_k(\overline{c}_{j})<\sum_{j=0}^{n-1}\tilde{\omega}_j\Psi_k(\tilde{c}_j).\label{main1:2}
\end{split}\end{equation}
Combining (\ref{main1:1}) and (\ref{main1:2}), we establish Lemma \ref{exist} (a) when $N=M+1$. By mathematical induction, the conclusion follows.
\end{proof}

\subsection{The main tools}
We are now ready to present our main tools.
\begin{Theorem}\label{main2}
For a nonlinear regression model, suppose the information matrix can be written as (\ref{infor1}) and $c_i\in[A,B]$.
Rename all distinct $\Psi_{lt}$, $1\leq l\leq t\leq p$ to $\Psi_1, \ldots,
\Psi_k$ such that (i) $\Psi_k$ is one of $\Psi_{ll}$, $1\leq l\leq p$ and (ii) there is no $\Psi_{lt}=\Psi_k$ for $l<t$. Let $F(c)=\prod_{l=1}^kf_{l,l}(c)$, $c\in[A,B]$. For any given design $\xi$, there exists a design $\tilde{\xi}$, such that $I_{\xi}(\theta)\leq I_{\tilde{\xi}}(\theta)$. Here, $\tilde{\xi}$ depends on different situations.
\begin{itemize}
\item [(a)] when $k$ is odd and $F(c)<0$, $\tilde{\xi}$ is based on at most $(k+1)/2$ points including point $A$.
\item [(b)] when $k$ is odd and $F(c)>0$, $\tilde{\xi}$ is based on at most $(k+1)/2$ points including point  $B$.

\item [(c)] when $k$ is even and $F(c)>0$, $\tilde{\xi}$ is based on at most $k/2+1$ points including points $A$ and $B$.

  \item [(d)] when $k$ is even and $F(c)<0$, $\tilde{\xi}$ is based on at most $k/2$ points.
\end{itemize}
\end{Theorem}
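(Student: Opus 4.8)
The plan is to deduce the Loewner-order statement from the scalar identities and the single inequality supplied by Theorem \ref{main1}, by analyzing the difference of the two $C$-matrices entrywise. First I would remove the factor $P(\theta)$: since $P(\theta)$ is a fixed nonsingular matrix and congruence by such a matrix preserves the Loewner ordering, $I_{\xi}(\theta)\le I_{\tilde{\xi}}(\theta)$ is equivalent to $\sum_{i}\omega_i C(\theta,c_i)\le \sum_{j}\tilde{\omega}_j C(\theta,\tilde{c}_j)$. Hence it suffices to produce $\tilde{\xi}$ for which the symmetric matrix $D:=\sum_{j}\tilde{\omega}_j C(\theta,\tilde{c}_j)-\sum_{i}\omega_i C(\theta,c_i)$ is positive semidefinite.

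The core step is to read off the structure of $D$ entry by entry. Its $(l,t)$ entry equals $\sum_{j}\tilde{\omega}_j\Psi_{lt}(\tilde{c}_j)-\sum_{i}\omega_i\Psi_{lt}(c_i)$, and each $\Psi_{lt}$ coincides with one of the renamed functions $\Psi_1,\ldots,\Psi_k$. This is where the renaming conditions do all the work. Any entry whose associated function lies among $\Psi_1,\ldots,\Psi_{k-1}$ vanishes exactly, by (\ref{exist:2}). By condition (ii) the distinguished function $\Psi_k$ never occurs in an off-diagonal slot, so every off-diagonal entry of $D$ is zero; by condition (i) it does occur on the diagonal, and at precisely those diagonal positions the entry equals $\sum_{j}\tilde{\omega}_j\Psi_k(\tilde{c}_j)-\sum_{i}\omega_i\Psi_k(c_i)$, which is strictly positive because (\ref{exist:3})$<0$. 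Thus $D$ is a diagonal matrix whose diagonal entries are either $0$ or positive, so $D\ge 0$ and $I_{\xi}(\theta)\le I_{\tilde{\xi}}(\theta)$ follows.

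The existence of a $\tilde{\xi}$ realizing (\ref{exist:1}), (\ref{exist:2}) and (\ref{exist:3})$<0$ is exactly Theorem \ref{main1} applied to $\Psi_1,\ldots,\Psi_k$ on $[A,B]$ with the given pairs $(c_i,\omega_i)$. Identity (\ref{exist:1}) gives $\sum_j\tilde{\omega}_j=\sum_i\omega_i=1$, so $\tilde{\xi}$ is a bona fide design, and positivity of the $\tilde{\omega}_j$ is part of that theorem. The four cases here correspond one-to-one to cases (a)--(d) of Theorem \ref{main1}: the parity of $k$ together with the sign of $F(c)=\prod_{l}f_{l,l}(c)$ selects the case, which fixes both the number of support points ($(k+1)/2$ for $k$ odd, $k/2$ or $k/2+1$ for $k$ even) and which of $A,B$ must be included. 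Reading off these counts completes the argument.

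Rather than a deep obstacle — the analytic difficulty is already absorbed into Theorem \ref{main1} — the step that must be handled with care is exactly the bookkeeping in the second paragraph, which converts the single scalar inequality (\ref{exist:3}) into a positive semidefinite $D$. This is why the renaming is forced to place $\Psi_k$ on the diagonal only: were $\Psi_k$ to appear off-diagonal, that slot would acquire the positive excess and generically destroy positive semidefiniteness, so conditions (i)--(ii) are not cosmetic but essential. The one remaining loose end is the regime in which $\xi$ has fewer support points than the threshold $N$ demanded by Theorem \ref{main1} (for instance $N\ge n$ in case (a)); this is not the relevant case for point reduction, since there $\xi$ already has at most the stated number of points, and one simply takes $\tilde{\xi}=\xi$.
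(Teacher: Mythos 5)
Your proposal is correct and follows essentially the same route as the paper: reduce to the $C$-matrices via congruence by the nonsingular $P(\theta)$, invoke Theorem \ref{main1} to get the replacement support set, and observe that conditions (i)--(ii) on the renaming force the difference matrix to be diagonal with entries that are $0$ (from (\ref{exist:2}) and (\ref{exist:1})) or positive (from (\ref{exist:3})), with $\tilde{\xi}=\xi$ in the small-$N$ regime. The paper's own proof states this more tersely (``direct computation shows\ldots''), so your entrywise bookkeeping is just a fuller writing-out of the same argument.
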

\begin{proof}
The proof for the four cases are completely analogous. Here we provide the proof of Theorem \ref{main2} (a).

By (\ref{infor1}) and the fact that $P(\theta)$ depends on
$\theta$ only, it is sufficient to show that $C_{\xi}(\theta)\leq
C_{\tilde{\xi}}(\theta)$. Let $\xi=\{(c_i,\omega_i),
i=1,\ldots,N\}$. If $N< n$, then we can just take
$\tilde{\xi}=\xi$. If $N\geq n$, by (a) of Theorem \ref{main1},
there exist $n$ paris $(\tilde{c}_j, \tilde{\omega}_j)$,
$j=0,\ldots, n-1$, where $\tilde{c}_0=A$, such that
(\ref{exist:1}), (\ref{exist:2}), and (\ref{exist:3})$<0$ holds.
Let $\tilde{\xi}=\{(\tilde{c}_j, \tilde{\omega}_j), j=0,\ldots,
n-1\}$. Direct computation shows that the diagonal elements of
$C_{\tilde{\xi}}(\theta)-C_{\xi}(\theta)$ are either 0 or greater
than 0, and the off-diagonal elements are all 0. Thus the
conclusion follows.
\end{proof}

\begin{Remark}
For Cases (a), (b), and (d) of Theorem \ref{main2}, the conclusions stay the same if the interval $[A,B]$ is not finite. For Case (a), $[A,B]$ can be replaced by $[A,\infty)$. In this situation, for any given design $\xi$, we can choose $B=\mathrm{Max}_{1\leq i\leq N} c_i$ and the same conclusion follows. Similarly, $[A,B]$ can be replaced  by $(-\infty,B]$ in Case (b) or $(-\infty,\infty)$ in Case (d).
\end{Remark}

\begin{Remark}
There are many different ways to rename all distinct $\Psi_{lt}$, $1\leq l\leq t\leq p$ to $\Psi_1, \ldots,
\Psi_k$. Not all orders can satisfy the requirements in Theorem \ref{main2}. However, as long as there exists one order of $\Psi_1, \ldots,
\Psi_k$ such that these requirements can be satisfied,  the conclusion holds. Notice that $\Psi_k$ must be one of  $\Psi_{ll}$, $1\leq l\leq p$.
\end{Remark}

\section{Applications}
Theorem \ref{main2} can be applied to many commonly studied statistical models. In fact, as we demonstrate next, for many models, any optimal design can be based on the minimum number of support points, i.e., number of support points such as all parameters are estimable. As we discussed earlier, this makes it much easier to study an optimal design. Theorem \ref{main2} works on the information matrix directly. It  is very general. It can be applied to any models, continuous or discrete data with homogeneous or non-homogeneous   error,  as long as the information matrix can be written as (\ref{infor1}). Here, we demonstrate its applications for the model
\begin{equation}\begin{split}
Y_{ij}=\eta(x_i,\theta)+\epsilon_{ij},  \label{model1}
\end{split}\end{equation}
where $\epsilon_{ij}$'s are i.i.d. $N(0,\sigma^2)$ with known $\sigma^2$, $x_i\in [L,U]$ is the design variable and $\theta$ is a $p\times 1$ parameter vector. Most commonly
studied models can be written as  (\ref{model1}). For a given design $\xi=\{(x_i,\omega_i), i=1,\ldots,N\}$,  the corresponding information matrix for $\theta$ can be written as
\begin{equation}\begin{split}
I_{\xi}(\theta)=\sum_{i=1}^N \omega_i \frac{\partial \eta(x_i,\theta)}{\partial \theta}\left(\frac{\partial \eta(x_i,\eta)}{\partial \theta}\right)^T.  \label{infor3}
\end{split}\end{equation}
Next, we apply Theorem \ref{main2} for some popular choices of $\eta(x,\theta)$. Notice that Theorem \ref{main2} is not limited to this model format. It can be applied to many other models. Some examples will be shown in Sections \ref{nparm} and \ref{loglinear}.

\subsection{Models with three parameters }
Dette, Bretz, Pepelyshev, and Pinheiro (2008) studied $E_{max}$, Exponential, and Log-linear models.  These models can be written in the form of  (\ref{model1}) with
\begin{equation}\begin{split}
\eta(x,\theta)=\begin{cases} \theta_0+\frac{\theta_1 x}{x+\theta_2} & \text{$E_{max}$},\\
\theta_0+\theta_1\exp(x/\theta_2) & \text{Exponential},\\
\theta_0+\theta_1\log(x+\theta_2) & \text{Log-linear.}
\end{cases} \label{model3:1}
\end{split}\end{equation}
Here, $x_i\in [L,U]\subset (0,\infty)$, $\theta_1>0$, and $\theta_2>0$. They showed that local MED-optimal designs (MED is defined as the smallest dose producing a practically relevant response) are either a two points design with low end point $L$, or a three points design with two end points $L$ and $U$. In fact, as the following theorem shows, any optimal design can be based on three points including one or two end points.
\begin{Theorem}\label{example1}
Under model (\ref{model3:1}), for an arbitrary design $\xi$, there exists a design $\xi^*$ with three support points such that  $I_{\xi}(\theta)\leq I_{\xi^*}(\theta)$. Specifically, the three points include the two end points $L$ and $U$ for the $E_{max}$ model; the upper end point $U$ for the exponential model; and  the two end points $L$ and $U$ for the log-linear model.
\end{Theorem}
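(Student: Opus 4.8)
The strategy is to apply Theorem \ref{main2} to each of the three models in (\ref{model3:1}). Since each model has $p=3$ parameters, the information matrix (\ref{infor3}) is $3\times 3$ and symmetric, so there are at most $6$ distinct entries $\Psi_{lt}$, $1\le l\le t\le 3$. The plan is, for each model, to (i) compute the gradient $\partial\eta/\partial\theta$, (ii) factor the information matrix into the form (\ref{infor1}) by pulling out a nonsingular matrix $P(\theta)$ so that the remaining entries $\Psi_{lt}(c)$ depend only on a suitably chosen one-to-one reparametrization $c=c(x)$, (iii) identify the value $k$ (the number of distinct $\Psi$'s) and rename them so that $\Psi_k$ is a diagonal entry not equal to any off-diagonal entry, and (iv) verify the sign of $F(c)=\prod_{l=1}^k f_{l,l}(c)$ on $[A,B]$. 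Once these are in place, the appropriate case of Theorem \ref{main2} delivers a design $\xi^*$ on at most three points with the claimed endpoints, and the one-to-one correspondence between $c$ and $x$ translates this back to the original design variable.

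\medskip

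\textbf{Model-by-model setup.} For the $E_{max}$ model, $\eta=\theta_0+\theta_1 x/(x+\theta_2)$, I would take the substitution $c=x/(x+\theta_2)$ (or $c=1/(x+\theta_2)$), which is monotone on $[L,U]\subset(0,\infty)$; the gradient components are rational in $x$, and after extracting $P(\theta)$ the $\Psi_{lt}$ become low-degree polynomials in $c$. For the exponential model, $\eta=\theta_0+\theta_1\exp(x/\theta_2)$, the natural choice is $c=\exp(x/\theta_2)$, again monotone, and the $\Psi_{lt}$ become polynomials (with a factor of $\log c$ appearing from the $\theta_2$-derivative, which must be absorbed carefully). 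For the log-linear model, $\eta=\theta_0+\theta_1\log(x+\theta_2)$, I would take $c=\log(x+\theta_2)$ so that the entries are polynomial in $c$. In each case the goal is to reduce to a small family of functions whose iterated ratio-derivatives $f_{l,l}$ in (\ref{def:df}) are computable in closed form.

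\medskip

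\textbf{The sign computation for $F$.} The main obstacle is the verification of the sign of $F(c)=\prod_l f_{l,l}(c)$ on the design interval, together with the bookkeeping that fixes which case (a)--(d) of Theorem \ref{main2} applies and hence which endpoints are forced. Computing the cascade (\ref{df}) of ratio-derivatives is mechanical but error-prone, and the sign of each $f_{l,l}$ must be controlled uniformly on $[A,B]$; Assumption (ii) (that each $f_{l,l}$ is nonvanishing) must also be checked so that the lemma applies. For the $E_{max}$ and log-linear models I expect $k$ to be even with $F(c)>0$, giving Case (c) and forcing both endpoints $L$ and $U$ (a three-point design when $k=4$). For the exponential model I expect an odd $k$ with $F(c)>0$, giving Case (b) and forcing only the upper endpoint $U$. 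The remaining work is then to confirm that after reverting to $x$ the support indeed consists of three points with the stated endpoints. The final step is to note that $I_\xi(\theta)\le I_{\xi^*}(\theta)$ in the Loewner ordering is invariant under the fixed reparametrization $c\leftrightarrow x$, since $P(\theta)$ depends on $\theta$ only, so the conclusion transfers back to the original model verbatim.
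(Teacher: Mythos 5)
Your plan is correct and follows essentially the same route as the paper: the same factorization of the information matrix into $P(\theta)$ and $C(\theta,c)$, equivalent monotone reparametrizations $c(x)$, and the same case assignments of Theorem \ref{main2} --- Case (c) with $k=4$ for the $E_{max}$ model (both endpoints), Case (b) with $k=5$ for the exponential model (upper endpoint only; the paper uses $c=x/\theta_2$ rather than $c=e^{x/\theta_2}$, which is immaterial), and Case (c) with $k=4$ for the log-linear model. The one detail your plan glosses over is that for the log-linear model $C(\theta,c)$ has \emph{two} diagonal entries, $\log^2(c)$ and $c^2$, that never appear off-diagonal, so the $k=4$ argument must be run twice, once with each as $\Psi_4$; since $\Psi_1,\Psi_2,\Psi_3$ (and hence the three support points produced by Lemma \ref{exist}) are identical in both runs, both diagonal entries are simultaneously non-decreased and the Loewner comparison still follows.
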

\begin{proof}
We first consider the $E_{max}$ model. By some routine algebra, it can be shown that the information matrix can be written as  in the form of  (\ref{infor1}) with
\begin{equation}\begin{split}
P(\theta)=\begin{pmatrix}
1 & 0 & 0\\
\frac{1}{\theta_2}& -\frac{1}{\theta_2} & 0\\
\frac{1}{\theta_2^2}& -\frac{1}{\theta_2^2} & \frac{1}{\theta_1\theta_2}
\end{pmatrix}^{-1} \text{ and }
C(\theta,c_i)=\begin{pmatrix}
1 &  c_i &   c_i^2\\
 c_i&  c_i^2 &  c_i^3\\
 c_i^2&  c_i^3& c_i^4
\end{pmatrix},
\end{split}\end{equation}
where $c_i=1/(x_i+\theta_2)$.
Let $\Psi_1(c)=c$, $\Psi_2(c)=c^2$, $\Psi_3(c)=c^3$, and $\Psi_4(c)=c^4$, we can verify that the corresponding $f_{1,1}=1$, $f_{2,2}=2$, $f_{3,3}=3$, and $f_{4,4}=4$. By Case (c) of Theorem \ref{main2}, the conclusion follows.

The proofs for exponential and loglinear models are similar with different $P(\theta)$ and $C_{\xi}(\theta)$. For the exponential model,
\begin{equation}\begin{split}
P(\theta)=\begin{pmatrix}
1 & 0 & 0\\
0& 1 & 0\\
0& 0& -\frac{\theta_2}{\theta_1}
\end{pmatrix}^{-1} \text{ and }
C(\theta,c_i)=\begin{pmatrix}
1 & e^{c_i} & c_ie^{c_i}\\
e^{c_i} & e^{2c_i} & c_ie^{2c_i}\\
c_ie^{c_i} & c_ie^{2c_i} &c_i^2e^{2c_i}
\end{pmatrix},
\end{split}\end{equation}
where $c_i=x_i/\theta_2$. Let $\Psi_1(c)=e^c$, $\Psi_2(c)=ce^c$, $\Psi_3(c)=e^{2c}$, $\Psi_4(c)=ce^{2c}$, and $\Psi_5(c)=c^2e^{2c}$, we can verify that the corresponding $f_{1,1}=e^c$, $f_{2,2}=1$, $f_{3,3}=2e^c$, $f_{4,4}=1$, and
$f_{5,5}=2$. By Case (b) of Theorem \ref{main2}, the conclusion follows.

For the log-linear model,
\begin{equation}\begin{split}
P(\theta)=\begin{pmatrix}
1 & 0 & 0\\
0& -1 & 0\\
0& 0& \frac{1}{\theta_1}
\end{pmatrix}^{-1} \text{ and }
C(\theta,c_i)=\begin{pmatrix}
1 & \log(c_i) & c_i\\
\log(c_i)& \log^2(c_i) & c_i\log(c_i)\\
c_i & c_i\log(c_i) &c_i^2
\end{pmatrix},
\end{split}\end{equation}
where $c_i=1/(x_i+\theta_2)$. Let $\Psi_1(c)=\log(c)$, $\Psi_2(c)=c$, $\Psi_3(c)=c\log(c)$, and $\Psi_4(c)=\log^2(c)$ or $c^2$, we can verify that the corresponding $f_{1,1}=1/c$, $f_{2,2}=1$, $f_{3,3}=1/c$, and $f_{4,4}=2/c^2$ or $4$ when $\Psi_4(c)=\log^2(c)$ or $c^2$, respectively. Apply Case $(c)$ of Theorem \ref{main1}, for any design, we can find a design with three points including end points $L$ and $U$, such that the off-diagonal elements are the same and diagonal elements are either the same or larger. Thus the conclusion follows.
\end{proof}
\begin{Remark} Han and Chaloner (2003) studied $D$- and $c$-optimal design under a slightly different exponential model where $\eta(x,\theta)=\theta_0+\theta_1\exp(-\theta_2 x)$. By applying the similar approach as used for the exponential model in Theorem \ref{example1}, we can show that for an arbitrary design $\xi$, there exists a design $\xi^*$ with three support points including low end point $L$ such that  $I_{\xi}(\theta)\leq I_{\xi^*}(\theta)$. This confirms and extends results in Han and Chaloner (2003) for this model, while  Han and Chaloner (2003) showed that  the $D$- and $c$-optimal designs are based on three points including two end points $L$ and $U$.
 \end{Remark}

Dette, Melas, and Wong (2005) studied another version of $E_{max}$ model, which can be written  in the form of  (\ref{model1}) with
\begin{equation}\begin{split}
\eta(x,\theta)=\frac{\theta_0x^{\theta_2}}{\theta_1+x^{\theta_2}}, \label{model3:2}
\end{split}\end{equation}
where $x\in[0,T]$, $\theta_0>0$, $\theta_1>0$, and $\theta_2\neq
0$. They showed that $D$- and $D_1$-optimal designs are based on
three points including end point $T$. The next theorem shows that
we can restrict ourself with three points designs for any optimal
designs.
\begin{Theorem}\label{example2}
Under model (\ref{model3:2}), for an arbitrary design $\xi$, there exists a design $\xi^*$ with three support points such that  $I_{\xi}(\theta)\leq I_{\xi^*}(\theta)$.
\end{Theorem}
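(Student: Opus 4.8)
The plan is to reduce Theorem \ref{example2} to Theorem \ref{main2} by writing the information matrix (\ref{infor3}) for the model (\ref{model3:2}) in the canonical form (\ref{infor1}), that is, by isolating a nonsingular $P(\theta)$ depending on $\theta$ only and expressing the remaining factor as $\sum_i\omega_i C(\theta,c_i)$ with $C$ given by (\ref{infor2}). This mirrors exactly the computations already carried out for the three models in the proof of Theorem \ref{example1}. The model here has $p=3$ parameters, so the symmetric matrix $C$ has at most six distinct entries $\Psi_{lt}$, $1\le l\le t\le 3$. The first step is therefore to compute the gradient $\partial\eta/\partial\theta$ for (\ref{model3:2}) and form the outer product $(\partial\eta/\partial\theta)(\partial\eta/\partial\theta)^T$.

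\textbf{Setting up the canonical form.} Writing $\eta=\theta_0 x^{\theta_2}/(\theta_1+x^{\theta_2})$, I would compute the three partial derivatives. With respect to $\theta_0$ the derivative is $x^{\theta_2}/(\theta_1+x^{\theta_2})$; with respect to $\theta_1$ it is $-\theta_0 x^{\theta_2}/(\theta_1+x^{\theta_2})^2$; and with respect to $\theta_2$ it is $\theta_0\theta_1 x^{\theta_2}\log x/(\theta_1+x^{\theta_2})^2$. The natural substitution, as in Theorem \ref{example1}, is to introduce a new variable that makes the $\theta$-dependence factor out cleanly. A promising choice is $c=x^{\theta_2}/(\theta_1+x^{\theta_2})$, which is a one-to-one map of $x\in[0,T]$ onto an interval $[A,B]\subset[0,1)$, so that $x^{\theta_2}/(\theta_1+x^{\theta_2})=c$ and $\theta_1/(\theta_1+x^{\theta_2})=1-c$; one then checks that $\log x$ can be re-expressed through $c$ (using $x^{\theta_2}=\theta_1 c/(1-c)$, so $\theta_2\log x=\log\theta_1+\log(c/(1-c))$). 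After pulling the $\theta$-dependent scalars into $P(\theta)$, the three entries of the reparametrized gradient should become, up to constants absorbed in $P(\theta)$, functions of $c$ alone such as $c$, $c(1-c)$, and $c(1-c)\log(c/(1-c))$. Squaring and cross-multiplying these then produces the six distinct $\Psi_{lt}(c)$.

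\textbf{Applying the algebra machinery.} Once the six distinct functions are listed, I would rename them as $\Psi_1,\ldots,\Psi_k$ (with $k\le 6$) so that the two labelling requirements of Theorem \ref{main2} hold: $\Psi_k$ is one of the diagonal functions $\Psi_{ll}$ and no off-diagonal entry equals $\Psi_k$. I would then build the lower-triangular array (\ref{df}) of the $f_{l,l}$ by repeated differentiation of successive ratios, verify that each $f_{l,l}$ is nonzero on $[A,B]$ (replacing $\Psi_l$ by $-\Psi_l$ wherever a diagonal ratio derivative is negative, as the Assumption permits), and compute the sign of $F(c)=\prod_l f_{l,l}(c)$. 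According to whether $k$ is the number of distinct functions actually needed and according to the sign of $F$, one of the four cases (a)--(d) of Theorem \ref{main2} applies and delivers a dominating design $\xi^*=\tilde\xi$ on at most $\lceil k/2\rceil$ or $k/2+1$ points. Since $p=3$, the target count is three support points, so I expect $k=5$ (giving $(k+1)/2=3$ via case (a) or (b)) or $k=6$ with case (d), and I would select the ordering of the $\Psi_l$ that makes one of these the effective count.

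\textbf{Anticipated obstacle.} The main difficulty will be the bookkeeping in the successive-ratio differentiations (\ref{df}): the functions involve $\log(c/(1-c))$ and products with $c(1-c)$, so the intermediate $f_{l,t}$ can look formidable, and the choice of ordering that simultaneously satisfies the two requirements on $\Psi_k$ and yields all $f_{l,l}$ of a definite sign is not unique (see the second Remark after Theorem \ref{main2}). The real work is finding an ordering that collapses to the correct parity and sign so that the resulting $\tilde\xi$ has exactly three points; by the Remark, it suffices to exhibit \emph{one} such ordering. I would therefore experiment with placing the purely polynomial-in-$c$ functions early (so their ratio derivatives simplify to constants, as happened with $f_{1,1}=1$, $f_{2,2}=2$ in the $E_{max}$ case) and the $\log$-bearing functions later, checking signs as I go, until the array stabilizes and Theorem \ref{main2} yields the three-point conclusion.
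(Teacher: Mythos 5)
Your strategy is the same as the paper's: write the information matrix in the canonical form (\ref{infor1}), list the distinct kernel functions $\Psi_{lt}$, and invoke Theorem \ref{main2}. But for this theorem the entire mathematical content \emph{is} the explicit verification that some admissible ordering of those kernels makes every $f_{l,l}$ nonvanishing of constant sign with the right sign for $F$, and your proposal defers exactly that step (``I would experiment \ldots until the array stabilizes''). The second Remark after Theorem \ref{main2} only guarantees the conclusion \emph{conditional} on such an ordering existing; nothing in your write-up establishes that one does, so as it stands the proof is not complete. Two of your hedges are also resolvable in advance: since all six entries of $C(\theta,c)$ for model (\ref{model3:2}) are distinct non-constant functions of $c$ (unlike the $E_{max}$ model of Theorem \ref{example1}, whose $(1,1)$ entry is the constant $1$ and so drops out of the list), you necessarily have $k=6$, not $k=5$; and to land on three rather than four support points you must therefore be in Case (d), i.e.\ you must confirm $F(c)<0$, since Case (c) with $k=6$ would only give $k/2+1=4$ points.

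For the record, the paper closes the gap with $c_i=\theta_1 x_i^{-\theta_2}$ (a M\"obius-equivalent variant of your substitution), kernels $\frac{1}{(1+c)^2},\frac{1}{(1+c)^3},\frac{1}{(1+c)^4},\frac{c\log c}{(1+c)^3},\frac{c\log c}{(1+c)^4},\frac{c^2\log^2 c}{(1+c)^4}$, and the ordering $\Psi_1=\frac{1}{(1+c)^4}$, $\Psi_2=\frac{1}{(1+c)^3}$, $\Psi_3=\frac{c\log c}{(1+c)^4}$, $\Psi_4=\frac{1}{(1+c)^2}$, $\Psi_5=\frac{c\log c}{(1+c)^3}$, $\Psi_6=\frac{c^2\log^2 c}{(1+c)^4}$, for which $f_{1,1}=-\frac{4}{(1+c)^5}<0$ while $f_{2,2},\ldots,f_{6,6}>0$ for $c>0$; hence $F(c)<0$ and Case (d) with $k=2n=6$ yields a dominating three-point design. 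Your instinct to put the ``polynomial-like'' kernels first and the $\log$-bearing ones later is consistent with this ordering, but until the six ratio-derivative signs are actually computed (by hand or symbolically), the theorem has not been proved.
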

\begin{proof}
By some routine algebra, it can be shown that the information matrix can be written in the form of (\ref{infor1}) with
\begin{equation}\begin{split}
P(\theta)=\begin{pmatrix}
1 & 0 & 0\\
1& \frac{\theta_1}{\theta_0} & 0\\
0& -\frac{\theta_1\log(\theta_1)}{\theta_0} & -\frac{\theta_2}{\theta_0}
\end{pmatrix}^{-1} \text{ and }
C(\theta,c_i)=\begin{pmatrix}
\frac{1}{(1+c_i)^2} &  \frac{1}{(1+c_i)^3}&   \frac{c_i\log(c_i)}{(1+c_i)^3}\\
\frac{1}{(1+c_i)^3}&  \frac{1}{(1+c_i)^4} &  \frac{c_i\log(c_i)}{(1+c_i)^4}\\
\frac{c_i\log(c_i)}{(1+c_i)^3}&  \frac{c_i\log(c_i)}{(1+c_i)^4}& \frac{c_i^2\log^2(c_i)}{(1+c_i)^4}
\end{pmatrix}, \label{example2:1}
\end{split}\end{equation}
where $c_i=\theta_1x_i^{-\theta_2}$.
Let $\Psi_1(c)=\frac{1}{(1+c)^4} $, $\Psi_2(c)=\frac{1}{(1+c)^3} $, $\Psi_3(c)=\frac{c\log(c)}{(1+c)^4}$, and $\Psi_4(c)=\frac{1}{(1+c)^2}$, $\Psi_5(c)=\frac{c\log(c)}{(1+c)^3}$, and $\Psi_6(c)=\frac{c^2\log^2(c)}{(1+c)^4}$. We can verify that the corresponding $f_{1,1}=-\frac{4}{(1+c)^5}$, $f_{2,2}=3/4$, $f_{3,3}=\frac{3c+1}{3c^2}$, $f_{4,4}=\frac{4c(3c+2)}{(3c+1)^2}$, $f_{5,5}=\frac{9c^3+15c^2+7c+1}{(3c+2)^2c^2}$, and $f_{6,6}=\frac{18c^2+15c+2}{c(9c^2+6c+1)}$. Notice that $c>0$, this implies that $F(c)<0$.  By Case (d) of Theorem \ref{main2}, the conclusion follows.
\end{proof}

\begin{Remark} Li and Majumdar (2008) studied $D$-optimal design for a three-parameter logistic model where $\eta(x,\theta)=\frac{\theta_0}{1+\theta_1\exp(\theta_2 x)}$. It can be shown that the information matrix can be written  in the form of  (\ref{example2:1}) with $c_i=\theta_1\exp(\theta_2 x_i)$. Thus for any arbitrary design $\xi$, there exists a design $\xi^*$ with three support points such that  $I_{\xi}(\theta)\leq I_{\xi^*}(\theta)$. This confirms and extends Li and Majumdar (2008)'s results.
 \end{Remark}

Han and Chaloner (2003) studied $D$- and $c$-optimal designs for a
model  which can be written  in the form of  (\ref{model1}) with
\begin{equation}\begin{split}
\eta(x,\theta)=\log(\theta_0+\theta_1\exp(-\theta_2 x)), \label{model3:3}
\end{split}\end{equation}
where $x\in[L,U]\subset (0,\infty)$, $\theta_0>0$, $\theta_1>0$,
and $\theta_2> 0$. They showed that $D$- and $c$-optimal designs
are based on three points including end points $L$ and $U$. In
fact, any optimal design based on information matrix can be
restricted to a three-point design including lower end point $L$.
\begin{Theorem}\label{example3}
Under (\ref{model3:3}), for any arbitrary design $\xi$, there exists a design $\xi^*$ with three support points including lower end point $L$ such that  $I_{\xi}(\theta)\leq I_{\xi^*}(\theta)$.
\end{Theorem}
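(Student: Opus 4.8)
The plan is to follow the template of Theorems \ref{example1} and \ref{example2}: write the information matrix (\ref{infor3}) in the canonical form (\ref{infor1}), read off the distinct functions $\Psi_{lt}$, compute the triangular array $f_{l,l}$ from (\ref{df}), determine the sign of $F=\prod_l f_{l,l}$, and invoke the matching case of Theorem \ref{main2}. Since $\xi^*$ is required to have three points \emph{and} to contain the lower endpoint $L$, the target is to reach $k=5$ with $F<0$, so that Theorem \ref{main2}(a) applies and yields $(k+1)/2=3$ support points including the left endpoint $A$.

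First I would compute the gradient. Writing $u=e^{-\theta_2 x}$ and $D=\theta_0+\theta_1 u$, one gets $\partial\eta/\partial\theta=\frac{1}{D}\left(1,\ u,\ -\theta_1 x u\right)^T$, so the summand in (\ref{infor3}) is the rank-one matrix $\frac{1}{D^2}(1,u,-\theta_1 xu)^T(1,u,-\theta_1 xu)$. Read off naively this has six distinct entries, hence $k=6$; by cases (c)/(d) of Theorem \ref{main2} that would give either four points (both endpoints) or three points with no guaranteed endpoint, neither of which matches the statement. The key step is to exploit the identity $\theta_0\,\partial_{\theta_0}\eta+\theta_1\,\partial_{\theta_1}\eta=\frac{\theta_0+\theta_1 u}{D}=1$, which shows the span of the gradient components contains the constant function.

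Choosing $P(\theta)$ to realize the basis $h(c)=\bigl(1,\ 1/D,\ xu/D\bigr)^T$ (absorbing the constants $\theta_0,\theta_1$ into the nonsingular matrix $P$), the matrix $C(\theta,c)=h(c)h(c)^T$ has constant $(1,1)$-entry equal to $1$. Because a constant entry is matched automatically through the mass constraint (\ref{exist:1}), it need not be counted among $\Psi_1,\dots,\Psi_k$; exactly as in the exponential case of Theorem \ref{example1}, this drops the count to the five distinct non-constant functions $1/D,\ xu/D,\ 1/D^2,\ xu/D^2,\ x^2u^2/D^2$, so $k=5$. I would verify en route that $P$ is nonsingular (it is, since $\theta_1\neq 0$) and that the resulting $\Psi_l$ are infinitely differentiable with $f_{l,l}$ nonvanishing on $[L,U]$, as required by the standing Assumption.

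Taking $c=x$ as the monotone design variable, I would then order the five functions so that a diagonal entry serves as $\Psi_5=\Psi_k$ with no off-diagonal function equal to it, and compute the array $f_{l,l}$ recursively from (\ref{def:df}). The remaining task is to show $F(c)=\prod_{l=1}^{5}f_{l,l}(c)<0$ on $[L,U]$, using $\theta_1,\theta_2>0$ and $x\in[L,U]\subset(0,\infty)$; flipping the sign of individual $\Psi_l$'s as in the remark after Lemma \ref{exist} only reverses selected $f_{l,l}$ and does not change the sign of $F$ that ultimately governs the case. Once $F<0$ is confirmed, Theorem \ref{main2}(a) delivers a dominating design on three points including $A=L$, which is the claim. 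I expect this last sign computation to be the main obstacle: the five $f_{l,l}$ are ratios of derivatives of the rational–exponential functions above, tedious to simplify, and the delicate point is establishing that their \emph{product} keeps a constant sign across the whole interval rather than merely being nonzero pointwise.
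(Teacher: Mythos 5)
Your overall strategy is the paper's strategy: exploit the identity $\theta_0\,\partial_{\theta_0}\eta+\theta_1\,\partial_{\theta_1}\eta\equiv 1$ to choose $P(\theta)$ so that the $(1,1)$-entry of $C(\theta,c)$ is the constant $1$, drop that constant (it is absorbed by the mass constraint (\ref{exist:1})), arrive at $k=5$, and invoke the odd-$k$ case of Theorem \ref{main2} to get three points including an endpoint. The paper's $P(\theta)$ in (\ref{example3:1}) does exactly this, and your observation that the naive $k=6$ count would not deliver the stated endpoint is correct. So the idea is right and matches the source.

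The genuine gap is that the decisive step --- verifying the standing Assumption (each $f_{l,l}$ nonvanishing on the design interval for some admissible ordering) and determining the sign of $F=\prod_{l=1}^{5}f_{l,l}$ --- is exactly what you defer, and for this theorem that verification \emph{is} the proof; everything before it is bookkeeping. Moreover, your choice to keep $c=x$ and work with $1/D,\ xu/D,\ 1/D^2,\ xu/D^2,\ x^2u^2/D^2$ makes that verification substantially harder than it needs to be. The paper instead substitutes $c=\frac{\theta_1}{\theta_0}e^{-\theta_2 x}$, so the five functions become $\frac{1}{(1+c)^2},\ \frac{1}{1+c},\ \frac{c\log c}{(1+c)^2},\ \frac{c\log c}{1+c},\ \frac{c^2\log^2 c}{(1+c)^2}$ with, in that order, $f_{1,1}=-\frac{2}{(1+c)^3}$, $f_{2,2}=\frac12$, $f_{3,3}=\frac{1+c}{c^2}$, $f_{4,4}=-2$, $f_{5,5}=\frac{2}{c}$; hence $F>0$ on $c>0$, Case (b) applies, and the guaranteed point is $B$ in the $c$-scale, which is $x=L$ because $c$ is decreasing in $x$. (Your plan to land in Case (a) with $F<0$ in the $x$-scale is the mirror image of this --- under an orientation-reversing reparametrization each $f_{l,l}$ picks up a factor $\phi'$, so for odd $k$ the sign of $F$ flips and cases (a) and (b) trade places --- but you would still have to produce the five $f_{l,l}$ explicitly and check none of them vanishes on $[L,U]$, which you have not done.) One further small inaccuracy: flipping $\Psi_{l_0}\mapsto-\Psi_{l_0}$ leaves $F$ unchanged only for $l_0<k$; flipping $\Psi_k$ reverses exactly one factor and hence the sign of $F$, so that remark cannot be used as a blanket invariance argument.
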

\begin{proof}
It can be shown that the information matrix can be written in the form of (\ref{infor1}) with
\begin{equation}\begin{split}
P(\theta)=\begin{pmatrix}
\theta_0 & \theta_1 & 0\\
\theta_0&0 & 0\\
0& \theta_1\log(\frac{\theta_1}{\theta_0}) & \theta_2
\end{pmatrix}^{-1} \text{ and }
C(\theta,c_i)=\begin{pmatrix}
1 &  \frac{1}{1+c_i}&   \frac{c_i\log(c_i)}{1+c_i}\\
\frac{1}{1+c_i}&  \frac{1}{(1+c_i)^2} &  \frac{c_i\log(c_i)}{(1+c_i)^2}\\
 \frac{c_i\log(c_i)}{1+c_i}&  \frac{c_i\log(c_i)}{(1+c_i)^2}& \frac{c_i^2\log^2(c_i)}{(1+c_i)^2}
\end{pmatrix}, \label{example3:1}
\end{split}\end{equation}
where $c_i=\frac{\theta_1}{\theta_0}\exp(-\theta_2 x_i)$. Let
$\Psi_1(c)=\frac{1}{(1+c)^2} $, $\Psi_2(c)=\frac{1}{1+c} $,
$\Psi_3(c)=\frac{c\log(c)}{(1+c)^2}$, and
$\Psi_4(c)=\frac{c\log(c)}{1+c}$, and
$\Psi_5(c)=\frac{c^2\log^2(c)}{(1+c)^2}$. We can verify that the
corresponding $f_{1,1}=-\frac{2}{(1+c)^3}$, $f_{2,2}=\frac{1}{2}$,
$f_{3,3}=\frac{1+c}{c^2}$, $f_{4,4}=-2$, and
$f_{5,5}=\frac{2}{c}$. Notice that $c>0$,  which implies that
$F(c)>0$.  By Case (b) of Theorem \ref{main2} and
$c_i=\frac{\theta_1}{\theta_0}\exp(-\theta_2 x_i)$, the conclusion
follows.
\end{proof}

\subsection{Models with four parameters}
Fang and Hedayat (2008) studied a composed $E_{max}$-PK1 model, which  can be written  in the form of  (\ref{model1}) with
\begin{equation}\begin{split}
\eta(x,\theta)=\theta_0+\frac{\theta_1D}{D+\theta_2\exp(\theta_3 x)}. \label{model4:2}
\end{split}\end{equation}
Here, $D$ is a positive constant, $x_i\in [0,U]$, and $\theta_i>0$, $i=0,1,2,3$. Fang and Hedayat (2008) showed that local $D$-optimal designs are based on four points including end points $0$ and $U$. The next theorem tells us that any optimal designs can be based on four points designs.
\begin{Theorem}\label{example5}
Under model (\ref{model4:2}), for any arbitrary design $\xi$,
there exists a design $\xi^*$ with four support points such that
$I_{\xi}(\theta)\leq I_{\xi^*}(\theta)$.
\end{Theorem}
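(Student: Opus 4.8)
The plan is to follow exactly the template of Theorem \ref{example1}: write the information matrix (\ref{infor3}) in the canonical form (\ref{infor1}), read off the functions $\Psi_{lt}$, compute the triangular array (\ref{def:df}), and invoke Theorem \ref{main2}. Differentiating $\eta$ in (\ref{model4:2}) and substituting $c_i=\tfrac{\theta_2}{D}\exp(\theta_3 x_i)$, so that $D+\theta_2\exp(\theta_3 x_i)=D(1+c_i)$, the gradient $\partial\eta/\partial\theta$ becomes a fixed linear combination (with coefficients depending on $\theta$ only, including the constant $\log(D/\theta_2)$ that couples the $\theta_2$- and $\theta_3$-derivatives) of the four functions $w_0=1$, $w_1=\tfrac{1}{1+c}$, $w_2=\tfrac{c}{(1+c)^2}$, $w_3=\tfrac{c\log c}{(1+c)^2}$. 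Collecting those coefficients into a nonsingular $P(\theta)$ gives (\ref{infor1}) with $C(\theta,c_i)=w(c_i)w(c_i)^T$, whose entries are the pairwise products $w_lw_t$ evaluated at $c_i>0$.

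The entry $w_0^2=1$ is constant, hence already matched by the weight identity (\ref{exist:1}). The remaining distinct entries split into six off-diagonal functions $w_1,w_2,w_3,w_1w_2,w_1w_3,w_2w_3$ and three diagonal functions $w_1^2,w_2^2,w_3^2$. A direct application of Theorem \ref{main2} to all nine would give $k=9$ and hence $(k+1)/2=5$ support points, one too many. I would therefore repeat the refinement used for the log-linear model in Theorem \ref{example1}: take $\Psi_1,\dots,\Psi_6$ to be a suitable ordering of the six off-diagonal functions and let $\Psi_7$ be one of the diagonal functions, so that $k=7$ is odd. Applying the relevant case of Theorem \ref{main1}, for any $\xi$ there is a design $\tilde\xi$ on $(k+1)/2=4$ points, one of which is an endpoint, that matches the total weight and all six off-diagonal functions and makes (\ref{exist:3})$<0$ for $\Psi_7$.

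To upgrade the single signed diagonal into all three, I would argue by uniqueness. Matching the six off-diagonal functions together with (\ref{exist:1}) is seven equations; a four-point design with one endpoint fixed has three free locations and four weights, again seven unknowns, so the matched design is determined. Re-running Theorem \ref{main1} with each of $w_1^2,w_2^2,w_3^2$ in the role of $\Psi_7$ then returns the same $\tilde\xi$ while forcing each corresponding diagonal of $C_{\tilde\xi}-C_\xi$ to be nonnegative; the off-diagonal entries vanish and the constant diagonal is matched, so $C_\xi\le C_{\tilde\xi}$, $I_\xi(\theta)\le I_{\tilde\xi}(\theta)$, and $\xi^*=\tilde\xi$ has four points.

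The main obstacle is the bookkeeping in (\ref{def:df}) and, above all, the sign control needed to make the uniqueness step legitimate. I must (i) find one ordering of the six off-diagonal functions for which every $f_{l,l}$ in (\ref{def:df}) is nonvanishing for $c>0$, and (ii) check that $F=\prod_{l=1}^{7}f_{l,l}$ has the same sign for all three choices $\Psi_7\in\{w_1^2,w_2^2,w_3^2\}$. Point (ii) is what guarantees that the three invocations of Theorem \ref{main1} fall in the same case, so that they fix the same endpoint and hence genuinely produce the identical four-point $\tilde\xi$; since $f_{1,1},\dots,f_{6,6}$ are common to all three, this reduces to showing that the three top entries $f_{7,7}$ share a sign. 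That is the one truly model-specific computation, and the $c\log c$ factors make these iterated ratios-of-derivatives unpleasant; everything else is routine differentiation.
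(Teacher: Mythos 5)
Your overall strategy (put the information matrix in the form (\ref{infor1}), list the distinct entries of $C(\theta,c)$, compute the triangular array (\ref{def:df}), invoke Theorem \ref{main2}) is the right one, but you miss the specific choice that makes the paper's proof a single, clean application of the theorem, and the workaround you substitute is left essentially unverified. The paper does not use the basis $\bigl(1,\tfrac{1}{1+c},\tfrac{c}{(1+c)^2},\tfrac{c\log c}{(1+c)^2}\bigr)$; it absorbs the identity $\tfrac{c}{(1+c)^2}=\tfrac{1}{1+c}-\tfrac{1}{(1+c)^2}$ into $P(\theta)$ and works with $w=\bigl(1,\tfrac{1}{1+c},\tfrac{1}{(1+c)^2},\tfrac{c\log c}{(1+c)^2}\bigr)$. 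The point of this choice is the multiplicative coincidence $w_1^2=w_0w_2$: the diagonal entry $\tfrac{1}{(1+c)^2}$ is then also an off-diagonal entry, so it is matched automatically by the equality constraints, and the distinct non-constant entries number exactly $k=8$ rather than your $9$ (or your reduced $7$). With the explicit ordering $\Psi_1=\tfrac{1}{(1+c)^4},\ldots,\Psi_8=\tfrac{c^2\log^2 c}{(1+c)^4}$ the paper computes all eight $f_{l,l}$ in closed form, checks $F(c)<0$ for $c>0$, and Case (d) of Theorem \ref{main2} delivers the four-point design in one shot, with every diagonal except the last matched exactly and the last one improved.

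Your alternative --- take the six off-diagonal products as $\Psi_1,\ldots,\Psi_6$, append one diagonal as $\Psi_7$, and run Theorem \ref{main1} three times, using uniqueness of the moment-matching four-point design to conclude that all three diagonals improve simultaneously --- is in the spirit of what the paper does for the log-linear model in Theorem \ref{example1}, and the uniqueness step itself is defensible via Corollary \ref{coro1}(a). But the proof stands or falls on the computations you defer: (i) that some ordering of your six off-diagonal functions yields nonvanishing $f_{1,1},\ldots,f_{6,6}$ on the relevant interval, and (ii) that $f_{7,7}$ has the same sign for all three choices $\Psi_7\in\{w_1^2,w_2^2,w_3^2\}$, so that the three invocations fall in the same case of Theorem \ref{main1} and fix the same endpoint. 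Without (ii) the three runs need not produce the same design and the argument collapses. Since you explicitly leave both items unchecked, what you have is a program rather than a proof; the missing idea that would have spared you all of it is the choice of $P(\theta)$ that collapses one diagonal into an off-diagonal.
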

\begin{proof}
It can be shown that the information matrix can be written  in the form of
(\ref{infor1}) with
\begin{equation}\begin{split}
P(\theta)&=\begin{pmatrix}
1 & 0 & 0 & 0\\
0& 1 & 0 & 0\\
0& 1& \frac{\theta_2}{\theta_1} & 0\\
0 & 0& -\frac{\theta_2}{\theta_1} \log(\frac{\theta_2}{D})&  -\frac{\theta_3}{\theta_1}
\end{pmatrix}^{-1} \\
\text{ and } C(\theta,c_i)&=\begin{pmatrix}
1 &   \frac{1}{1+c_i} &   \frac{1}{(1+c_i)^2} & \frac{c_i \log(c_i)}{(1+c_i)^2}\\
\frac{1}{1+c_i} & \frac{1}{(1+c_i)^2} &  \frac{1}{(1+c_i)^3} &  \frac{c_i \log(c_i)}{(1+c_i)^3}\\
 \frac{1}{(1+c_i)^2}& \frac{1}{(1+c_i)^3}& \frac{1}{(1+c_i)^4} &  \frac{c_i \log(c_i)}{(1+c_i)^4}\\
 \frac{c_i \log(c_i)}{(1+c_i)^2} &  \frac{c_i \log(c_i)}{(1+c_i)^3} &  \frac{c_i \log(c_i)}{(1+c_i)^4} &  \frac{c_i^2 \log^2(c_i)}{(1+c_i)^4}
  \end{pmatrix},\label{example4:2}
\end{split}\end{equation}
where $c_i=\frac{\theta_2}{D}\exp(\theta_3x_i)$. Let
$\Psi_1(c)=\frac{1}{(1+c)^4} $, $\Psi_2(c)=\frac{1}{(1+c)^3} $,
$\Psi_3(c)=\frac{c\log(c)}{(1+c)^4}$, and
$\Psi_4(c)=\frac{1}{(1+c)^2} $,
$\Psi_5(c)=\frac{c\log(c)}{(1+c)^3}$, $\Psi_6(c)=\frac{1}{1+c}$,
$\Psi_7(c)=\frac{c\log(c)}{(1+c)^2}$, and
$\Psi_8(c)=\frac{c^2\log^2(c)}{(1+c)^4}$. We can verify that the
corresponding $f_{1,1}=-\frac{4}{(1+c)^5}$, $f_{2,2}=\frac{3}{4}$,
$f_{3,3}=\frac{3c+1}{3c^2}$, $f_{4,4}=\frac{4c(3c+2)}{(3c+1)^2}$,
$f_{5,5}=\frac{9c^3+15c^2+7c+1}{c^2(3c+2)^2}$,
$f_{6,6}=\frac{9c(3c+2)}{9c^2+6c+1}$, $f_{7,7}=\frac{3c+1}{3c^2}$,
and $f_{8,8}=\frac{2}{3c^2}$. Notice that $c>0$, which implies
that $F(c)<0$.  By applying  Case (d) of Theorem \ref{main2}, the
conclusion follows.
\end{proof}

Dette, Bretz, Pepelyshev, and Pinheiro (2008) studied a
four-parameter logistic model, which  can be written  in the form of
(\ref{model1}) with
\begin{equation}\begin{split}
\eta(x,\theta)=\theta_0+\frac{\theta_1}{1+\exp(\frac{\theta_2-x}{\theta_3})}. \label{model4:1}
\end{split}\end{equation}
Here, $x_i\in [L,U]\subset (0,\infty)$, $\theta_1>0$,
$\theta_2>0$, and $\theta_3>0$. Although they did not provide an
analytical solution,  their numerical solution shows that local
MED-optimal designs are based on four points including the end point
$L$. In fact, any optimal design under  (\ref{model4:1}) can be
based on a four-point design. Notice that the information matrix
of Model (\ref{model4:1}) can be written as (\ref{example4:2})
except that $c_i=\exp(\frac{\theta_2-x_i}{\theta_3})$ and
\begin{equation}\begin{split}
P(\theta)=\begin{pmatrix}
1 & 0 & 0 & 0\\
0& 1 & 0 & 0\\
0& 1& \frac{\theta_3}{\theta_1} & 0\\
0 & 0& 0&  \frac{\theta_3}{\theta_1}
\end{pmatrix}^{-1}.
\end{split}\end{equation}
Immediately, we have the following theorem.
\begin{Theorem}\label{example4}
Under model (\ref{model4:1}), for any arbitrary design $\xi$,
there exists a design $\xi^*$ with four support points such that
$I_{\xi}(\theta)\leq I_{\xi^*}(\theta)$.
\end{Theorem}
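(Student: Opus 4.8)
The plan is to obtain Theorem \ref{example4} as an immediate consequence of Theorem \ref{example5}, exploiting the fact that the two four-parameter models share the \emph{identical} matrix $C(\theta,c_i)$ given in (\ref{example4:2}); only the reparametrizing matrix $P(\theta)$ and the substitution $c_i$ differ. By (\ref{infor1}) and the nonsingularity of $P(\theta)$, the relation $I_{\xi}(\theta)\le I_{\xi^*}(\theta)$ is equivalent to $\sum_i\omega_i C(\theta,c_i)\le \sum_j\tilde\omega_j C(\theta,\tilde c_j)$, which depends only on the functional form of $C$ and not on $P(\theta)$. Hence, once the information matrix for (\ref{model4:1}) is written in the form (\ref{infor1}) with $C(\theta,c_i)$ equal to (\ref{example4:2}), every ingredient of the proof of Theorem \ref{example5} — the eight functions $\Psi_1,\ldots,\Psi_8$, their computed $f_{l,l}$, and the sign of $F$ — carries over verbatim.

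First I would carry out the reparametrization. Setting $c=\exp((\theta_2-x)/\theta_3)$ so that $\eta=\theta_0+\theta_1/(1+c)$, I would compute the four partial derivatives $\partial\eta/\partial\theta_l$; using $\theta_2-x=\theta_3\log c$ these become $1$, $1/(1+c)$, $-\theta_1 c/[\theta_3(1+c)^2]$, and $\theta_1 c\log c/[\theta_3(1+c)^2]$. I would then verify that the stated $P(\theta)$ satisfies $P(\theta)^{-1}\,\partial\eta/\partial\theta=g(c)$ with $g(c)=(1,\ 1/(1+c),\ 1/(1+c)^2,\ c\log c/(1+c)^2)^T$, so that $C(\theta,c_i)=g(c_i)g(c_i)^T$ reproduces (\ref{example4:2}) exactly. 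This is the only genuine computation, and it is routine: each entry of $P(\theta)^{-1}\,\partial\eta/\partial\theta$ telescopes, e.g. the third gives $1/(1+c)-c/(1+c)^2=1/(1+c)^2$ and the fourth gives $c\log c/(1+c)^2$.

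With $C$ identified, I would relabel its eight distinct non-constant entries as $\Psi_1,\ldots,\Psi_8$ exactly as in Theorem \ref{example5}, noting that $\Psi_8=c^2\log^2 c/(1+c)^4$ is the $(4,4)$ diagonal entry and occurs nowhere off the diagonal, so the hypotheses of Theorem \ref{main2} are met with $k=8$. Since the design region forces $c_i=\exp((\theta_2-x_i)/\theta_3)>0$, the calculations of $f_{1,1},\ldots,f_{8,8}$ and the conclusion $F(c)=\prod_{l=1}^8 f_{l,l}(c)<0$ for $c>0$ are identical to those already recorded for the Emax-PK1 model. As $k=8$ is even and $F(c)<0$, Case (d) of Theorem \ref{main2} then produces a design $\xi^*$ on at most $k/2=4$ support points with $I_{\xi}(\theta)\le I_{\xi^*}(\theta)$, which is the claim.

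The main (and essentially the only) obstacle is confirming the factorization $\partial\eta/\partial\theta=P(\theta)g(c)$ in the first step; once the identity $C(\theta,c_i)=$ (\ref{example4:2}) is in hand, nothing further must be argued, because the remaining work has already been done for the structurally identical Emax-PK1 model. In this sense Theorem \ref{example4} illustrates the unifying power of Theorem \ref{main2}: two superficially different four-parameter models collapse to the same Case (d) once their information matrices are cast in the canonical form (\ref{infor1}).
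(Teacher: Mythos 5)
Your proposal is correct and follows essentially the same route as the paper, which likewise obtains Theorem \ref{example4} immediately by observing that the information matrix of model (\ref{model4:1}) has the same $C(\theta,c_i)$ as in (\ref{example4:2}) with only $P(\theta)$ and the substitution $c_i=\exp((\theta_2-x_i)/\theta_3)$ changed, so that the $\Psi_l$'s, the $f_{l,l}$'s, and Case (d) of Theorem \ref{main2} carry over verbatim from Theorem \ref{example5}. Your added verification that $P(\theta)^{-1}\,\partial\eta/\partial\theta=g(c)$ and your remark that the Loewner comparison is unaffected by the nonsingular $P(\theta)$ simply make explicit what the paper leaves as ``routine algebra.''
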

Theorem \ref{example4} confirms and extends Dette, Bretz, Pepelyshev, and Pinheiro (2008)'s numerical results.

\begin{Remark} Li and Majumdar (2008) studied $D$-optimal design for a different version of four-parameter logistic model where $$\eta(x,\theta)=\theta_0+\frac{\theta_1}{1+\exp(\theta_2+\theta_3 x)}.$$ It can be shown that the information matrix can be written in the form of (\ref{example4:2}) with $c_i=\exp(\theta_2+\theta_3 x_i)$ and
\begin{equation}\begin{split}
P(\theta)=\begin{pmatrix}
1 & 0 & 0 & 0\\
0& 1 & 0 & 0\\
0& 1& \frac{1}{\theta_1} & 0\\
0 & 0& -\frac{\theta_2}{\theta_1}&  -\frac{\theta_3}{\theta_1}
\end{pmatrix}^{-1}.
\end{split}\end{equation}
 Thus for an arbitrary design $\xi$, there exists a design $\xi^*$ with at most four support points such that  $I_{\xi}(\theta)\leq I_{\xi^*}(\theta)$. This confirms and extends the results in Li and Majumdar (2008) for this model. \end{Remark}

\subsection{Models with $p+1$ parameters }\label{nparm}
Therorem \ref{main2} can be applied to many classical polynomial
regression models. de la Garza  (1954) studied a $p$th-degree
polynomial regression model, which  can be written  in the form of
(\ref{model1}) with
\begin{equation}\begin{split}
\eta(x,\theta)=\theta_0+\sum_{i=1}^p\theta_i x^i, \label{model5:1}
\end{split}\end{equation}
where $x\in[-1,1]$. de la Garza  (1954) proved that any optimal
design can be based on at most $p+1$ points including end points
$-1$ and $1$. Here, we provide an alternative way to prove this
result under a general design region $[L,U]$.
\begin{Theorem} [de la Garza Phenomenon] \label{example6}
Under model (\ref{model5:1}), for any arbitrary design $\xi$,
there exists a design $\xi^*$ with $p+1$ support points including
end points $L$ and $U$ such that  $I_{\xi}(\theta)\leq
I_{\xi^*}(\theta)$.
\end{Theorem}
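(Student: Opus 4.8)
The plan is to obtain this as a direct application of Case (c) of Theorem \ref{main2}. First I would write down the information matrix (\ref{infor3}) for the polynomial model. Since $\partial\eta/\partial\theta=(1,x,x^2,\ldots,x^p)^T$ is free of $\theta$, no reparametrization is needed: I would take $P(\theta)=I_{p+1}$, $c_i=x_i$, and $[A,B]=[L,U]$, so that (\ref{infor1}) holds with $C(\theta,c_i)$ the $(p+1)\times(p+1)$ moment matrix having entries $\Psi_{lt}(c)=c^{\,l+t-2}$, $1\le l,t\le p+1$.

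Next I would single out the functions $\Psi_1,\ldots,\Psi_k$. The distinct entries of $C$ are the monomials $1,c,c^2,\ldots,c^{2p}$. The constant entry occupies only position $(1,1)$, and by the weight constraint (\ref{exist:1}) its contribution to $C_{\tilde{\xi}}-C_{\xi}$ is automatically zero; I would therefore drop it and set $\Psi_l(c)=c^{\,l}$ for $l=1,\ldots,2p$, giving $k=2p$ (even). Ordering by increasing degree makes $\Psi_k=c^{2p}$ the $(p+1,p+1)$ diagonal entry, and since $l+t-2\le 2p-1$ whenever $l\ne t$, the monomial $c^{2p}$ never appears off the diagonal; hence the renaming requirements (i) and (ii) of Theorem \ref{main2} are met.

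The computational heart of the argument is the evaluation of the diagonal of the triangular array (\ref{df}). With $\Psi_l(c)=c^{\,l}$ I expect, and would verify by induction on the column index $t$, that
\[
f_{l,t}(c)=\frac{l!}{(l-t)!\,(t-1)!}\,c^{\,l-t},\qquad t\le l\le 2p,
\]
the inductive step being a one-line differentiation of $f_{l,t}/f_{t,t}$. Setting $l=t$ gives $f_{l,l}(c)=l>0$ for every $l$, so all diagonal entries are positive constants and $F(c)=\prod_{l=1}^{2p} f_{l,l}(c)=(2p)!>0$ throughout $[L,U]$.

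With $k=2p$ even and $F>0$, Case (c) of Theorem \ref{main2} applies and yields a design $\tilde{\xi}$ supported on at most $k/2+1=p+1$ points that include the endpoints $A=L$ and $B=U$, with $I_{\xi}(\theta)\le I_{\tilde{\xi}}(\theta)$; taking $\xi^*=\tilde{\xi}$ finishes the proof. I do not anticipate a serious obstacle here: the only real content is the inductive formula for $f_{l,t}$, and the one subtlety worth flagging is the handling of the constant $(1,1)$ entry, which is what makes $k=2p$ rather than $2p+1$ and lets all $f_{l,l}$ avoid the forbidden value zero.
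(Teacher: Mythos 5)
Your proposal is correct and follows essentially the same route as the paper: identity $P(\theta)$, $c_i=x_i$, the moment matrix for $C(\theta,c_i)$, the choice $\Psi_l(c)=c^l$ for $l=1,\ldots,2p$ giving $f_{l,l}=l>0$, and then Case (c) of Theorem \ref{main2}. The only difference is cosmetic: you supply the explicit inductive formula for $f_{l,t}$ and an explicit remark about the constant $(1,1)$ entry, where the paper simply asserts $f_{l,l}=l$ by direct checking.
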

\begin{proof}
The information matrix can be written in the form of (\ref{infor1}) with $P(\theta)=I_{(p+1)\times (p+1)}$ and
\begin{equation}\begin{split}
C(\theta,c_i)=\begin{pmatrix}
1 &   c_i &  \ldots &c_i^p\\
c_i &c_i^2 &  \ldots &c_i^{p+1}\\
 \vdots &\vdots &\ddots &\vdots \\
c_i^p & c_i^{p+1} &\ldots &c_i^{2p}
  \end{pmatrix},\label{example5:2}
\end{split}\end{equation}
where $c_i=x_i$. Let $\Psi_l(c)=c^l $, $l=1,\ldots,2p$. We can
check that the corresponding $f_{l,l}=l$ for $l=1,\ldots,2p$. By
applying  Case (c) of Theorem \ref{main2}, we can draw the desired
conclusion.
\end{proof}

Weighted polynomial regression is an extension of Model (\ref{model5:1}), where the error terms $\epsilon_{ij}$'s are i.i.d $N(0,\sigma^2/\lambda(x))$ ($\sigma^2$ is known).
Both Karlin and Studden (1966) and Dette, Haines, and Imhof (1999) studied $D$-optimal designs under various choices of $\lambda(x)$ and design regions. Their results show that the number of support points of $D$-optimal designs is $p+1$ (except Lemma 2.2 of  Dette, Haines, and Imhof, 1999, which has at most $p+2$ points). By applying Therorem \ref{main2},  we can extend their conclusions to any  optimal designs.  The results are summarized below:
\begin{Theorem}  \label{example7}
Under model (\ref{model5:1}),  where the error terms $\epsilon_{ij}$'s are i.i.d $N(0,\sigma^2/\lambda(x))$ ($\sigma^2$ is known), for an arbitrary design $\xi$, there exists a design $\xi^*$  such that  $I_{\xi}(\theta)\leq I_{\xi^*}(\theta)$. Here, $\xi^*$ is defined as follows:
\begin{itemize}
\item[(i)] $\xi^*$ is based on at most $p+1$ points when $\lambda(x)=(1-x)^{\alpha+1}(1+x)^{\beta+1}$, $x\in [-1,1]$, $\alpha+1>0$, and $\beta+1>0$.
\item[(ii)] $\xi^*$ is based on at most  $p+1$ points including point 0 when $\lambda(x)=\exp(-x)$ and $x\geq 0$.
\item[(iii)] $\xi^*$ is based on  at most $p+1$ points  when $\lambda(x)=x^{\alpha+1}\exp(-x)$, $x\geq 0$, and $\alpha+1>0$.
\item[(iv)] $\xi^*$ is based on  at most $p+1$ points  when $\lambda(x)=\exp(-x^2)$.
\item[(v)] $\xi^*$ is based on  at most $p+1$ points  when $\lambda(x)=(1+x^2)^{-n}$ and $p\leq n$.
\item[(vi)] $\xi^*$ is based on at most $p+2$ points including either lower end point $L$ or upper end point $U$ when $\lambda(x)=(1+x^2)^{-n}$ and $p>n$.
\end{itemize}
\end{Theorem}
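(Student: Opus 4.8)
The plan is to read each weight $\lambda$ as a special case of Theorem \ref{main2}. Since $\eta$ is linear in $\theta$, writing (\ref{infor3}) with error variance $\sigma^2/\lambda(x)$ gives
\[
I_\xi(\theta)=\sum_{i=1}^N \omega_i\,\lambda(x_i)\,v(x_i)\,v(x_i)^T,\qquad v(x)=(1,x,\dots,x^p)^T,
\]
which is already of the form (\ref{infor1}) with $P(\theta)=I_{(p+1)\times(p+1)}$, $c_i=x_i$, and $C(\theta,c)=\lambda(c)v(c)v(c)^T$, whose distinct entries are the $2p+1$ functions $\lambda(c)c^{m}$, $m=0,\dots,2p$. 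For each listed $\lambda$ I would rename these as $\Psi_1,\dots,\Psi_k$, use the freedom in the ordering and in replacing $\Psi_l$ by $-\Psi_l$ (the Remark after Lemma \ref{exist}) to meet the hypotheses, compute the iterated functions $f_{l,l}$ of (\ref{def:df})--(\ref{df}) and the product $F=\prod_l f_{l,l}$, and then quote the matching case of Theorem \ref{main2}. The parity of $k$ and the sign of $F$ read off the number of support points and which (if any) endpoints are forced, with the unbounded design regions of parts (ii)--(v) covered by the Remark following Theorem \ref{main2}.

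Everything thus reduces to verifying Assumption (ii) — that each $f_{l,l}$ is of one sign on the region — and determining $\operatorname{sign}(F)$; equivalently, that $\{1,\lambda,\lambda c,\dots,\lambda c^{2p}\}$ is an extended complete Chebyshev system there. The one immediate case is $\lambda(x)=e^{-x}$ on $[0,\infty)$: here $\lambda$ is strictly monotone, so taking $\Psi_1=e^{-x}$ gives $f_{1,1}=-e^{-x}\neq0$, the remaining ratios in (\ref{df}) collapse to elementary functions of $c$, and $k=2p+1$ is odd. After fixing signs this lands in case (a), pinning the design to $p+1$ points with the natural left endpoint $0$ included, which is exactly part (ii). A useful general observation for the computation is that the transcendental factor $\lambda$ enters the $f_{l,l}$ only through its logarithmic derivative $\lambda'/\lambda$, and for \emph{every} weight on the list $\lambda'/\lambda$ is a rational function of $c$; hence the recursion (\ref{df}) produces $f_{l,l}$ that are themselves rational in $c$, so their signs can be settled by locating finitely many real zeros.

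The hard part is every remaining case, because there the candidate leading functions $\lambda c^m$ either vanish at both ends of the region (Jacobi, generalized Laguerre $x^{\alpha+1}e^{-x}$) or are symmetric (Hermite $e^{-x^2}$), so $f_{1,1}=\Psi_1'$ necessarily acquires an interior zero and Assumption (ii) fails for any naive ordering. My plan to bypass this is twofold: (a) push the recursion through symbolically, exploiting the cancellation just described so that the sign of each rational $f_{l,l}$ can be checked; and (b) where an interior critical point of $\lambda$ is genuinely unavoidable, restrict to the half-line or subinterval on which $\lambda$ is monotone and glue the pieces, which the Remark after Theorem \ref{main2} licenses on the unbounded regions of (iii)--(v). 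The goal in these cases is to show that the effective system has $k$ even with $F<0$, i.e.\ the no-endpoint case (d), yielding the $p+1$ interior points claimed in (iii) and (iv); mirroring the monomial computation of Theorem \ref{example6}, this is where I expect the bookkeeping (parity of $k$, boundary behavior) to be most error-prone.

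The genuinely delicate point, and the one I expect to be the main obstacle, is the rational weight $\lambda(x)=(1+x^2)^{-n}$ together with the threshold $p\le n$ versus $p>n$. The rational $f_{l,l}$ stay of constant sign — and the system stays Chebyshev on all of $\R$ — exactly while the decay of $\lambda$ dominates the growth of $c^{2p}$, i.e.\ $2p\le 2n$. Once $p>n$ the top functions $\lambda c^{2p}\to\infty$, one of the $f_{l,l}$ develops a real zero and changes sign, the system is no longer Chebyshev on the whole line, and the clean argument collapses: it can then only be closed on a finite interval, which forces an additional support point together with a boundary point. This is precisely the dichotomy between the $p+1$-point conclusion of (v) and the weaker $p+2$-point, one-endpoint conclusion of (vi), so pinning down exactly which $f_{l,l}$ changes sign as $p$ crosses $n$ is the technical heart of the theorem.
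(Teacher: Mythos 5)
Your setup of the information matrix and your treatment of case (ii) match the paper: ordering the entries as $\Psi_l(c)=e^{-c}c^{l-1}$, $l=1,\ldots,2p+1$, gives $f_{1,1}<0$ and $f_{l,l}>0$ for $l\geq 2$, so $k=2p+1$ is odd, $F<0$, and Case (a) of Theorem \ref{main2} (extended to $[0,\infty)$ by the Remark) yields $p+1$ points containing the left endpoint $0$. The gap is in all the remaining cases. You correctly observe that for the Jacobi, Laguerre, Hermite and rational weights every actual entry $\lambda(c)c^m$ has an interior critical point, so no ordering of the $2p+1$ entries satisfies Assumption (ii); but neither of your proposed escapes works. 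Splitting the design region into subintervals on which $\lambda$ is monotone and ``gluing'' is not licensed by anything in the paper: the Remark after Theorem \ref{main2} only relaxes finiteness of the interval, and applying the theorem separately on two subintervals would produce a dominating design with up to $2(p+1)$ support points, with no mechanism for recombining the pieces into a single $(p+1)$-point design. Moreover your target Case (d) requires $k$ even, while the number of distinct entries is $2p+1$, odd; you never say where the extra function comes from.

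The paper's device --- also spelled out in the Remark following Theorem \ref{example8} --- is to prepend an auxiliary function that is \emph{not} an entry of the information matrix. For case (i) it takes $\Psi_1(c)=-\int_0^c(1-t)^{\alpha}(1+t)^{\beta}\,dt$ and $\Psi_l(c)=(1-c)^{\alpha+1}(1+c)^{\beta+1}c^{l-2}$ for $l=2,\ldots,2p+2$; then $f_{1,1}(c)=-(1-c)^{\alpha}(1+c)^{\beta}$ has no interior zero, one checks $f_{l,l}>0$ for $l\geq 2$, and $k=2p+2$ is even with $F<0$, so Case (d) delivers $p+1$ points with no forced endpoints. Adding such a $\Psi_1$ is harmless because a design matching $\xi$ on a superset of the matrix entries still matches it on the entries themselves, at the possible cost of extra support points (which is why case (vi) ends at $p+2$ with an endpoint rather than $p+1$). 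This single idea --- augmenting the system with an antiderivative of $-\lambda\mu$ for a suitable $\mu$ so that $f_{1,1}$ is sign-definite and the parity of $k$ comes out right --- is what your proposal is missing; without it, cases (i) and (iii)--(vi) cannot be closed along the lines you describe.
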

\begin{proof}
The information matrix can be written  in the form of  (\ref{infor1}) with
$P(\theta)=I_{(p+1)\times (p+1)}$ and
\begin{equation}\begin{split}
C(\theta,c_i)=\begin{pmatrix}
\lambda(c_i) &   \lambda(c_i)c_i &  \ldots &\lambda(c_i)c_i^p\\
\lambda(c_i)c_i &\lambda(c_i)c_i^2 &  \ldots &\lambda(c_i)c_i^{p+1}\\
 \vdots &\vdots &\ddots &\vdots \\
\lambda(c_i)c_i^p & \lambda(c_i)c_i^{p+1} &\ldots &\lambda(c_i)c_i^{2p}
  \end{pmatrix},\label{example6:2}
\end{split}\end{equation}
where $c_i=x_i$. The proofs are similar for all cases except for
(ii), which can be proven with a similar approach as in Theorem
\ref{example6} proof. Here we give the proof of Theorem \ref{example7} (i). Let
$\Psi_1(c)=-\int_{0}^c(1-t)^{\alpha}(1+t)^{\beta}dt $ and
$\Psi_l(c)=(1-c)^{\alpha+1}(1+c)^{\beta+1}c^{l-2} $,
$l=2,\ldots,2p+2$. Notice that $\Psi_1(c)$ is not one of the elements
in (\ref{example6:2}). We simply choose its value here for computation
convenience. We can check that the corresponding $f_{1,1}<0$ and
$f_{l,l}>0$ for $l=2,\ldots,2p+2$. By applying Case (d) of Theorem
\ref{main2}, the conclusion follows.
\end{proof}

\subsection{Loglinear model with quadratic term} \label{loglinear}
Theorem \ref{main2} is not limited to the model format
(\ref{model1}). It can be applied to any nonlinear model, as long
as the information matrix can be written in the form of (\ref{infor1}). Here
we give one such example. Wang, Myers, Smith and Ye (2006) studied
$D$-optimal designs for loglinear models with a quadratic term,
where
\begin{equation}\begin{split}
y_i\sim \text{ Poisson}(\mu_i) \text{ and } \log(\mu_i)=\theta_0+\theta_1 x_i+\theta_2 x_i^2. \label{loglinearmodel}
\end{split}\end{equation}
Here, $x_i\in [L,U]$. They showed that $D$-optimal designs are
based on three points for some selected parameters by numerical
searching. Their conclusion can be verified with the following
theorem.
\begin{Theorem}\label{example8}
Under model (\ref{loglinearmodel}), for any arbitrary design
$\xi$, there exists a design $\xi^*$  such that
$I_{\xi}(\theta)\leq I_{\xi^*}(\theta)$. Here, when $\theta_2<0$,
$\xi^*$ is based on three points; when $\theta_2>0$, $\xi^*$ is
based on four points including  the end points $L$ and $U$.
\end{Theorem}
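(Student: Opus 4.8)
My plan is to realize the Poisson information matrix in the canonical form (\ref{infor1}) and then read off the conclusion from Theorem \ref{main2}; since model (\ref{loglinearmodel}) is not of the Gaussian form (\ref{model1}), the information matrix must be computed directly from the generalized linear model. With the log link the mean is $\mu(x)=\exp(\theta_0+\theta_1 x+\theta_2 x^2)$, the variance equals the mean, and $\partial\mu/\partial\eta=\mu$, so a single observation at $x$ contributes $\mu(x)f(x)f(x)^T$ with $f(x)=(1,x,x^2)^T$. Hence
\begin{equation*}
I_\xi(\theta)=\sum_i \omega_i\, e^{\theta_0+\theta_1 x_i+\theta_2 x_i^2}
\begin{pmatrix} 1 & x_i & x_i^2\\ x_i & x_i^2 & x_i^3\\ x_i^2 & x_i^3 & x_i^4 \end{pmatrix}.
\end{equation*}
Taking $c_i=x_i\in[L,U]$, $P(\theta)=e^{\theta_0/2}I_{3\times 3}$, and
\begin{equation*}
C(\theta,c)=e^{\theta_1 c+\theta_2 c^2}
\begin{pmatrix} 1 & c & c^2\\ c & c^2 & c^3\\ c^2 & c^3 & c^4 \end{pmatrix}
\end{equation*}
puts $I_\xi$ in the form (\ref{infor1}) with $P$ nonsingular, so it suffices to dominate $\sum_i\omega_i C(\theta,c_i)$. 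The distinct entries are the five functions $c^{j}e^{\theta_1 c+\theta_2 c^2}$, $j=0,\dots,4$, and $c^4 e^{\theta_1 c+\theta_2 c^2}$ occupies only the $(3,3)$ position, never appearing off the diagonal, so it is the natural choice for $\Psi_k$.

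The decisive step is to order the functions so that every $f_{l,l}$ is sign-definite on $[L,U]$. Starting the triangle (\ref{df}) from an actual entry is fatal, because the derivative of $c^{j}e^{\theta_1 c+\theta_2 c^2}$ carries the factor $\theta_1+2\theta_2 c$, which changes sign in the interior of $[L,U]$. To avoid this I would borrow the device from the proof of Theorem \ref{example7}(i) and prepend a dummy leading function that is not an entry of $C$,
\begin{equation*}
\Psi_1(c)=\int_0^c e^{\theta_1 t+\theta_2 t^2}\,dt,\qquad \Psi_{j+2}(c)=c^{j}e^{\theta_1 c+\theta_2 c^2}\quad(j=0,\dots,4),
\end{equation*}
so that $k=6$. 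Now $f_{1,1}=\Psi_1'=e^{\theta_1 c+\theta_2 c^2}>0$, and forming the second column $f_{l,1}/f_{1,1}=\Psi_l'/e^{\theta_1 c+\theta_2 c^2}$ cancels the exponential and leaves polynomials in $c$. A routine descent down the triangle then shows that all $\theta_1$-dependent terms telescope away, leaving the constant diagonal
\begin{equation*}
f_{1,1}=e^{\theta_1 c+\theta_2 c^2},\quad f_{2,2}=2\theta_2,\quad f_{3,3}=2,\quad f_{4,4}=3,\quad f_{5,5}=4,\quad f_{6,6}=5,
\end{equation*}
so that $F(c)=\prod_l f_{l,l}(c)=240\,\theta_2\,e^{\theta_1 c+\theta_2 c^2}$ has the sign of $\theta_2$ throughout $[L,U]$. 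Verifying this telescoping is, I expect, the only genuine labor in the argument, and it is purely mechanical once the dummy function fixes $f_{1,1}$.

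The two cases of the theorem now follow from Theorem \ref{main2} with $k=6$ even. If $\theta_2>0$ then every $f_{l,l}>0$, so $F>0$ and Case (c) yields a $\tilde\xi$ supported on at most $k/2+1=4$ points including $A=L$ and $B=U$; because $\Psi_6=c^4e^{\theta_1 c+\theta_2 c^2}$ occurs only in the $(3,3)$ entry while the matched functions (including the harmless dummy) cover every off-diagonal entry and the remaining two diagonals, $C_{\tilde\xi}-C_\xi$ is zero off the diagonal and nonnegative on it, giving $I_\xi\leq I_{\tilde\xi}$. If $\theta_2<0$ then exactly one diagonal term, $f_{2,2}=2\theta_2$, is negative, so $F<0$; after the sign normalization described following Lemma \ref{exist}, Case (d) supplies a dominating $\tilde\xi$ on at most $k/2=3$ points. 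This is precisely the claimed dichotomy, and the principal obstacle—keeping every $f_{l,l}$ sign-definite—is exactly what the antiderivative $\Psi_1$ resolves, while simultaneously fixing the parity $k=6$ that produces the stated three- and four-point counts.
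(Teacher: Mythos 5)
Your proposal is correct and follows essentially the same route as the paper: both realize the Poisson information matrix in the form (\ref{infor1}), prepend a dummy antiderivative $\Psi_1$ that is not an entry of $C$ (precisely the device the paper's own remark after this theorem describes) so that $f_{1,1}$ equals the non-vanishing exponential weight, arrive at $k=6$ with $F(c)$ carrying the sign of $\theta_2$, and invoke Cases (c) and (d) of Theorem \ref{main2}. The only difference is cosmetic: the paper first completes the square via $c_i=\sqrt{|\theta_2|}\,x_i+\theta_1\sqrt{|\theta_2|}/(2\theta_2)$ so that the exponent becomes $\text{sign}(\theta_2)c^2$ and the sign of $\theta_2$ sits in $f_{1,1}$, whereas you keep $c=x$ and the sign surfaces in $f_{2,2}=2\theta_2$; your claimed diagonal $e^{\theta_1 c+\theta_2 c^2},\,2\theta_2,\,2,\,3,\,4,\,5$ checks out, and both normalizations give the same $F$-sign dichotomy and the same three-/four-point conclusion.
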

\begin{proof}
It can be shown that the information matrix can be written  in the form of
(\ref{infor1}) with
\begin{equation}\begin{split}
P(\theta)&=\exp(\frac{4\theta_2\theta_0-\theta_1^2}{8\theta_2})\begin{pmatrix}
1 & 0 & 0\\
\frac{\theta_1 \sqrt{|\theta_2|}}{2\theta_2}& \sqrt{|\theta_2|} & 0 \\
\text{sign}(\theta_2)\frac{\theta_1^2}{4\theta_2}& \text{sign}(\theta_2)\theta_1& \text{sign}(\theta_2)\theta_2
\end{pmatrix}^{-1} \\
\text{ and }C(\theta,c_i)&=\begin{pmatrix}
 e^{\text{sign}(\theta_2)c_i^2}&   c_i e^{\text{sign}(\theta_2)c_i^2}&  c_i^2e^{\text{sign}(\theta_2)c_i^2}\\
c_i e^{\text{sign}(\theta_2)c_i^2} & c_i^2 e^{\text{sign}(\theta_2)c_i^2} & c_i^3 e^{\text{sign}(\theta_2)c_i^2}\\
c_i^2 e^{\text{sign}(\theta_2)c_i^2} &  c_i^3 e^{\text{sign}(\theta_2)c_i^2} & c_i^4 e^{\text{sign}(\theta_2)c_i^2}  \end{pmatrix},\label{loglinearc}
\end{split}\end{equation}
where
$c_i=\sqrt{|\theta_2|}x_i+\frac{\theta_1\sqrt{|\theta_2|}}{2\theta_2}$.
Let $\Psi_1(c)=\text{sign}(\theta_2)\int_{0}^c
e^{\text{sign}(\theta_2)t^2} dt $ and
$\Psi_l(c)=c^{l-2}e^{\text{sign}(\theta_2)c^2} $, $l=2,\ldots,6$.
We can verify that the corresponding (i) $f_{1,1}<0$ when
$\theta_2<0$ or $f_{1,1}>0$ when $\theta_2>0$; (ii) $f_{l,l}>0$
for $l=2,\ldots,6$. By Theorem \ref{main2}, the conclusion
follows.
\end{proof}

\begin{Remark}
Notice that the first derivative of $e^{\text{sign}(\theta_2)c^2}$
is 0 when $c=0$. So we cannot apply Theorem 2 if $c$ ranges from
a negative to a positive number and $\Psi_1(c)=e^{\text{sign}(\theta_2)c^2}$.
To avoid this situation, a specific $\Psi_1(c)$ is chosen
although it is not among the functions in (\ref{loglinearc}). This
is the general strategy to handle such situations. The
disadvantage of this strategy is that it could increase the number
of support points unnecessarily.
\end{Remark}

\section{Discussion}
Deriving optimal designs for nonlinear models is complicated.
Currently, the main tools are Elfving's geometric approach and
Kiefer's equivalence theorem. Although these two approaches have
been proven to be powerful tools, the results have to be derived
on a case-by-case basis and some optimal designs are difficult to
derive. In contrast, the proposed approach in this paper can yield
very general results. As we have illustrated in the last section,
for many commonly studied nonlinear models, this approach gives
some simple structures based on which any optimal design can be
found. As a result, it is a relatively easy to find an optimal
design since one only needs to consider these simple structures.
Many practical models have a moderate number of parameters (see
Sections 4.1 and 4.2). For those models, any optimal designs can
be derived readily. At a minimum, numerical search is feasible
with the algorithm proposed by Stufken and Yang (2009).

The well-known  Carath$\acute{\text{e}}$odory's theorem gives
$p(p+1)/2$ as a upper bound for the number of support points in
optimal designs. Examples in Section 4 show that the upper bound
can be as small as $p$, the minimum number of support points such
that all parameters are estimable. Although this may not be true
for arbitrary nonlinear models, the proposed approach can be used
to improve the upper bound. On the other hand, this approach gives
an alternative way to prove  the de la Garza Phenomenon with
little effort.  Furthermore, this phenomenon is extended for more
general weighted polynomial regression models.

The proposed approach offers a lot of flexibility. It can be
applied to multi-stage design, an important feature for locally
optimal design. It works for any design region. The conditions are
mild and can be easily verified using symbolic computational
software packages, such as Maple or Mathematica.

While the results of this paper are already far reaching, we
believe that there is potential  to extend it further. In general,
this approach can be applied to any nonlinear model as long as the
corresponding functions are differentiable. One possible obstacle
is that some $f_{l,l}$ may take the value 0. In this situation,
the proposed approach may not be applied directly. One way to
handle it is to introduce some new functions. For example, refer
to the proof of Theorem \ref{example8}. This, however, may
increase the number of support points unnecessarily. How to handle
this situation remains an open question for future research.

\section{Appendix}
The purpose of this appendix is to present the proof of Lemma
\ref{exist}. Most of this is done by mathematical induction. We
first assume that Lemma \ref{exist} holds for $k\leq K$, then we
will show that the lemma also holds for $K+1$. When $k=2$ and $3$,
Lemma \ref{exist} has been proven by Yang and Stufken (2009)
(Propositions A.2 and A.3 for $k=2$; Lemmas 2 and 3   for $k=3$).
There are two sections in this appendix. In Section 6.1, we
present some useful propositions in preparation for the main
proof. In the Section 6.2, we provide the main proof.
\subsection{Some useful propositions}

\begin{Proposition}\label{prop2}
Let $\Psi_1, \ldots, \Psi_k$ be $k$ functions defined on $[A,B]$.
Assume that $f_{l,l}(c)>0$, $c\in[A,B]$, $l=1,\ldots,k$. Then for
any $A\leq c_1<\ldots <c_m \leq B$,
$g(\Psi'_1,\ldots,\Psi'_m,c_1,\ldots,c_m)>0$, where $m\leq k$ and function $g$
is defined as
\begin{equation}\begin{split}
g(\Psi'_1,\ldots,\Psi'_m,c_1,\ldots,c_m)=\left|
\begin{array}{cccc}
  \Psi'_1(c_1) & \Psi'_1(c_2) & \ldots & \Psi'_1(c_m) \\
  \Psi'_2(c_1) & \Psi'_2(c_2) & \ldots & \Psi'_2(c_m) \\
  \vdots &  \vdots & \ddots & \vdots \\
  \Psi'_m(c_1) &  \Psi'_m(c_2) & \ldots &  \Psi'_m(c_m)
\end{array} \right|. \label{function:g}
\end{split}\end{equation}
\end{Proposition}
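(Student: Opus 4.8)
The plan is to recognize Proposition~\ref{prop2} as the statement that the derivatives $\Psi_1',\ldots,\Psi_m'$ form a Chebyshev (Haar) system on $[A,B]$, with the hypothesis $f_{l,l}>0$ playing the role of the standard extended-complete-Chebyshev-system characterization. I would prove it by induction on $m$, quantified over \emph{all} systems of functions whose associated diagonal entries $f_{l,l}$ are positive (this flexible framing is needed because the inductive step passes to a different system). The base case $m=1$ is immediate, since $g=\Psi_1'(c_1)=f_{1,1}(c_1)>0$ by assumption.

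For the inductive step, I would first strip off the leading function. Since $f_{1,1}=\Psi_1'>0$, factor $\Psi_1'(c_j)$ out of each column $j$, so that the first row becomes all ones. Performing the column operations $C_j\to C_j-C_{j-1}$ for $j=m,\ldots,2$ leaves the first row equal to $(1,0,\ldots,0)$, and expanding along it reduces $g$ to $\prod_{j=1}^m\Psi_1'(c_j)$ times the $(m-1)\times(m-1)$ determinant whose $(i,j)$ entry (for $2\le i,j\le m$) is $\frac{\Psi_i'(c_j)}{\Psi_1'(c_j)}-\frac{\Psi_i'(c_{j-1})}{\Psi_1'(c_{j-1})}$. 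Because $(\Psi_i'/\Psi_1')'=(f_{i,1}/f_{1,1})'=f_{i,2}$, each such entry equals $\int_{c_{j-1}}^{c_j}f_{i,2}(t)\,dt$, an integral over the consecutive, disjoint intervals $[c_{j-1},c_j]$.

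The crucial analytic step is then to apply the basic composition formula (Cauchy--Binet for integrals), which rewrites this determinant of interval integrals over the increasingly ordered intervals as an iterated integral of $\det[f_{i,2}(t_j)]_{i,j=2}^m$ over $t_j\in[c_{j-1},c_j]$; since the intervals increase, the sample points automatically satisfy $t_2<\cdots<t_m$. It remains to show the inner determinant is positive, and here I would introduce the reduced system $\Phi_1,\ldots,\Phi_{m-1}$ defined by $\Phi_l'=f_{l+1,2}$ and verify, by an easy induction on the column index using the recursion in (\ref{def:df}), the identity $\hat f_{l,t}=f_{l+1,t+1}$ for its associated array; in particular $\hat f_{l,l}=f_{l+1,l+1}>0$. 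Thus the reduced system meets the hypotheses, its $g$-determinant is exactly the inner determinant, and that is positive by the induction hypothesis. Positivity of $g$ then follows after multiplying back the positive factors $\prod_j\Psi_1'(c_j)$.

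I expect the main obstacle to be the rigorous justification of the composition-formula step, namely that the determinant of the interval integrals equals the iterated integral of the pointwise determinant, with the ordering $t_2<\cdots<t_m$ coming for free from the disjoint increasing intervals. Everything else---the factoring, the telescoping column differences, and the bookkeeping identity $\hat f_{l,l}=f_{l+1,l+1}$---is routine once this representation is in hand.
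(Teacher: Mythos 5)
Your proof is correct, and it takes a genuinely different route from the paper's. The paper also begins by normalizing with $f_{1,1}=\Psi_1'>0$ so that the first row is all ones, but from there it argues \emph{differentially}: it treats the largest point $y_m$ as a variable, observes that the determinant vanishes when $y_m=c_{m-1}$ (two equal columns), and shows it suffices that the $y_m$-derivative be positive; that derivative replaces the last column by the column of $f_{l,2}$'s, and the argument is then repeated inward over $y_{m-1},\ldots,y_2$ until the whole determinant has been converted into the $(m-1)\times(m-1)$ determinant of the $f_{l,2}$'s at ordered points, after which the same scheme recurses down the triangular array to $f_{m,m}>0$. You instead argue \emph{integrally}: telescoping column differences turn each entry into $\int_{c_{j-1}}^{c_j}f_{i,2}$, and multilinearity of the determinant in its columns (each column integrated over its own interval --- this is all the ``composition formula'' amounts to here, so the step you flag as the main obstacle is in fact routine) pulls the integrals outside, expressing the determinant as an iterated integral of $\det[f_{i,2}(t_j)]$ over automatically ordered points $t_2<\cdots<t_m$; positivity then follows from the induction hypothesis applied to the reduced system, whose diagonal identity $\hat f_{l,l}=f_{l+1,l+1}>0$ you correctly verify. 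Both proofs implement the same reduction from level $1$ to level $2$ of the array $f_{l,t}$ and iterate; yours buys an immediate positivity conclusion (integral of an a.e.-positive function over a box of positive measure) at the cost of invoking antiderivatives and the integral representation, while the paper's stays purely at the level of determinants and one-variable calculus but requires a carefully nested sequence of ``zero at the endpoint, positive derivative'' claims.
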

\begin{proof}
By the definition of $f_{l,t}(c)$ in (\ref{def:df}) and  $f_{1,1}(c)>0$ for all $c\in[A,B]$, it is sufficient to show that
\begin{equation}\begin{split}
\left|
\begin{array}{ccccc}
  1 & 1 & \ldots & 1 & 1\\
\frac{f_{21}}{f_{11}}(c_1) & \frac{f_{21}}{f_{11}}(c_2)  & \ldots & \frac{f_{21}}{f_{11}}(c_{m-1}) & \frac{f_{21}}{f_{11}}(y_m) \\
  \vdots &  \vdots & \ddots & \vdots & \vdots \\
\frac{f_{m1}}{f_{11}}(c_1) &\frac{f_{m1}}{f_{11}}(c_2) & \ldots & \frac{f_{m1}}{f_{11}}(c_{m-1})& \frac{f_{m1}}{f_{11}}(y_m)
\end{array} \right|_{y_m=c_m}>0. \label{prop2:1}
\end{split}\end{equation}
Notice that if we treat $y_m$ as a variable ranging from $c_{m-1}$
to $B$, the determinant of the left hand side of (\ref{prop2:1})
is 0 when $y_m=c_{m-1}$. So it suffices to show that its
derivative about $y_m$ is positive when $y_m>c_{m-1}$, i.e.,
\begin{equation}\begin{split}
\left|
\begin{array}{ccccc}
  1 & 1 & \ldots & 1& 0 \\
\frac{f_{21}}{f_{11}}(c_1) & \frac{f_{21}}{f_{11}}(c_2)  & \ldots & \frac{f_{21}}{f_{11}}(y_{m-1}) & f_{22}(y_m) \\
  \vdots &  \vdots & \ddots & \vdots & \vdots \\
\frac{f_{m1}}{f_{11}}(c_1) &\frac{f_{m1}}{f_{11}}(c_2) & \ldots & \frac{f_{m1}}{f_{11}}(y_{m-1})& f_{m2}(y_m)
\end{array} \right|_{y_{m-1}=c_{m-1}}>0. \label{prop2:2}
\end{split}\end{equation}
For (\ref{prop2:2}), apply the same argument for $y_m$ to
$y_{m-1}$, the conclusion follows if we can show that for
$y_{m-1}\in (c_{m-2}, y_m)$,
\begin{equation*}\begin{split}
\left|
\begin{array}{cccccc}
  1 & 1 & \ldots &1 &0 & 0 \\
\frac{f_{21}}{f_{11}}(c_1)& \frac{f_{21}}{f_{11}}(c_2) & \ldots & \frac{f_{21}}{f_{11}}(y_{m-2})& f_{22}(y_{m-1})& f_{22}(y_m) \\
  \vdots &  \vdots & \ddots & \vdots& \vdots  & \vdots \\
\frac{f_{m1}}{f_{11}}(c_1)& \frac{f_{m1}}{f_{11}}(c_2) & \ldots & \frac{f_{m1}}{f_{11}}(y_{m-2})& f_{m2}(y_{m-1}) &
 f_{m2}(y_m)
\end{array} \right|_{y_{m-2}=c_{m-2}}>0.
\end{split}\end{equation*}
Repeat the exact same arguments for $y_{m-2}\in [c_{m-3},
y_{m-1}]$ and so on, until $y_2 \in [c_1,y_3]$, then it is sufficient to show that
\begin{equation}\begin{split}
\left|
\begin{array}{cccc}
  f_{22}(y_2) & f_{22}(y_3) & \ldots & f_{22}(y_m) \\
  f_{32}(y_2) & f_{32}(y_3) & \ldots & f_{32}(y_m) \\
  \vdots &  \vdots & \ddots & \vdots \\
   f_{m2}(y_2) & f_{m2}(y_3) & \ldots & f_{m2}(y_m)
\end{array} \right|>0 \label{prop2:3}
\end{split}\end{equation}
for any $A\leq y_2<y_3<\ldots<y_m\leq B$. Notice that $f_{22}(c)>0$ for all $c\in [A,B]$, repeat the same argument as for $f_{1,1}(c)$, it is sufficient to show that
\begin{equation}\begin{split}
\left|
\begin{array}{cccc}
  f_{33}(z_3) & f_{33}(z_4) & \ldots & f_{33}(z_m) \\
  f_{43}(z_3) & f_{43}(z_4) & \ldots & f_{43}(z_m) \\
  \vdots &  \vdots & \ddots & \vdots \\
   f_{m3}(z_3) & f_{m3}(z_4) & \ldots & f_{m3}(z_m)
\end{array} \right|>0. \label{prop2:4}
\end{split}\end{equation}
for any $A\leq z_3<z_4<\ldots<z_m\leq B$. Repeat the same argument for $f_{33}$ and so on until $f_{m-1,m-1}$, it is sufficient to show that $f_{mm}(c)>0$ for any $c\in [A,B]$, which is one of our assumptions.  Thus the conclusion follows.
\end{proof}

\begin{Proposition}\label{prop3}
Let $\Psi_1, \ldots,
\Psi_k$ be $k$ functions defined on $[A,B]$. Assume that $f_{l,l}(c)>0$, $c\in[A,B]$, $l=1,\ldots,k$.   Then for  $A\leq a_1<b_1\leq a_2 <b_2 \leq \ldots \leq a_m<b_m \leq B$, where $m\leq k$, we have
\begin{equation}\begin{split}
D(\Psi(b_1)-\Psi(a_1), \ldots, \Psi(b_m)-\Psi(a_m))>0. \label{prop3:0}
\end{split}\end{equation}
Here,
\begin{equation}\begin{split}
D(\Psi(b_1)-&\Psi(a_1), \ldots, \Psi(b_m)-\Psi(a_m))=\\
&\left|
\begin{array}{cccc}
 \Psi_1(b_1)-\Psi_1(a_1) &  \Psi_1(b_2)-\Psi_1(a_2) & \ldots &  \Psi_1(b_m)-\Psi_1(a_m) \\
  \Psi_2(b_1)-\Psi_2(a_1) &  \Psi_2(b_2)-\Psi_2(a_2) & \ldots &  \Psi_2(b_m)-\Psi_2(a_m)\\
  \vdots &  \vdots & \ddots & \vdots \\
 \Psi_m(b_1)-\Psi_m(a_1) &  \Psi_m(b_2)-\Psi_m(a_2) & \ldots &  \Psi_m(b_m)-\Psi_m(a_m)
\end{array}\right|. \label{prop3:1}
\end{split}\end{equation}
\end{Proposition}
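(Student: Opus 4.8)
The plan is to reduce Proposition \ref{prop3} to Proposition \ref{prop2} by passing from differences to integrals of derivatives. The starting observation is the elementary identity
\[
\Psi_l(b_j)-\Psi_l(a_j)=\int_{a_j}^{b_j}\Psi_l'(t)\,dt ,
\]
valid for every $l$ and $j$ because each $\Psi_l$ is differentiable. Hence the $j$th column of the matrix in (\ref{prop3:1}) is exactly $\int_{a_j}^{b_j}\bigl(\Psi_1'(t),\ldots,\Psi_m'(t)\bigr)^T dt$, i.e.\ an integral of the column vector of derivatives over the interval $[a_j,b_j]$.

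Next I would invoke multilinearity of the determinant in its columns. Writing $D$ for the left-hand side of (\ref{prop3:0}) and pulling the $m$ column-integrals out one at a time (legitimate by Fubini, since the $\Psi_l'$ are continuous on the compact box), I obtain
\[
D=\int_{a_1}^{b_1}\!\cdots\!\int_{a_m}^{b_m} g(\Psi_1',\ldots,\Psi_m',t_1,\ldots,t_m)\,dt_m\cdots dt_1 ,
\]
where the integrand is precisely the determinant $g$ defined in (\ref{function:g}), evaluated at the arguments $t_1,\ldots,t_m$, because the $(l,j)$ entry $\Psi_l'(t_j)$ matches the $(l,j)$ entry of (\ref{function:g}). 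This expansion is purely formal and transfers the whole question to the sign of $g$ over the box of integration.

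Then I would determine the sign of the integrand. The hypothesis $A\le a_1<b_1\le a_2<b_2\le\cdots\le a_m<b_m\le B$ forces the intervals $[a_1,b_1],\ldots,[a_m,b_m]$ to be ordered with pairwise disjoint interiors, meeting at most at the shared endpoints $b_j=a_{j+1}$. Consequently, for every point $(t_1,\ldots,t_m)$ in the open box $(a_1,b_1)\times\cdots\times(a_m,b_m)$ one has $t_1<t_2<\cdots<t_m$, so Proposition \ref{prop2} applies (its hypotheses $m\le k$ and $f_{l,l}>0$ are in force) and gives $g>0$ there. Since the integrand is strictly positive throughout the interior of the region of integration, the multiple integral is strictly positive, which is (\ref{prop3:0}).

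The only subtle point, and hence the main obstacle, is making sure the strict inequalities $t_1<\cdots<t_m$ required by Proposition \ref{prop2} hold where it matters. Strict monotonicity can fail only when some $t_j=b_j=a_{j+1}=t_{j+1}$, and the set of such coincidences is a finite union of lower-dimensional faces of the box, hence of Lebesgue measure zero; it therefore affects neither the value nor the sign of the integral. Everything else — the integral representation and the multilinear expansion — is routine, so the crux of the argument is exactly this measure-zero bookkeeping together with the direct appeal to Proposition \ref{prop2}.
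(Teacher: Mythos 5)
Your proof is correct. It rests on the same key ingredient as the paper's proof, namely Proposition \ref{prop2}, but the reduction is carried out differently. You represent each column as $\int_{a_j}^{b_j}\bigl(\Psi_1'(t),\ldots,\Psi_m'(t)\bigr)^T dt$ and use multilinearity of the determinant to write the whole quantity as an $m$-fold integral of $g(\Psi_1',\ldots,\Psi_m',t_1,\ldots,t_m)$ over the box $\prod_j[a_j,b_j]$; positivity then follows because the ordering $a_1<b_1\leq a_2<\cdots\leq a_m<b_m$ forces $t_1<\cdots<t_m$ on the interior of the box, so the integrand is positive off a measure-zero set. The paper instead peels off one variable at a time: it views the determinant as a function of $b_m$, notes it vanishes at $b_m=a_m$, shows positivity of its $b_m$-derivative (which replaces the last column by the derivative column evaluated at a point $c_m$), and repeats this for $b_{m-1},\ldots,b_1$ until it arrives at exactly the determinant (\ref{prop3:3}) handled by Proposition \ref{prop2}. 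The two arguments are the integral and differential faces of the same fundamental-theorem-of-calculus reduction; yours is more compact and dispatches all columns simultaneously, at the modest cost of the measure-zero bookkeeping at shared endpoints $b_j=a_{j+1}$ (which you handle correctly), while the paper's iterated-derivative version avoids any measure-theoretic language and matches the style of argument it reuses later in Proposition \ref{even}.
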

\begin{proof}
Consider the left hand side of (\ref{prop3:1}) as a function of $b_m$.  When $b_m=a_m$, it is clear that
it is 0. The conclusion is sufficient if we can show that
 \begin{equation}\begin{split}
\left|
\begin{array}{cccc}
 \Psi_1(b_1)-\Psi_1(a_1)  & \ldots &  \Psi_1(b_{m-1})-\Psi_1(a_{m-1}) &   \Psi'_1(c_m) \\
  \Psi_2(b_1)-\Psi_2(a_1)  & \ldots &  \Psi_2(b_{m-1})-\Psi_2(a_{m-1})& \Psi'_2(c_m)\\
  \vdots &  \ddots & \vdots & \vdots \\
 \Psi_m(b_1)-\Psi_m(a_1) & \ldots &  \Psi_m(b_{m-1})-\Psi_m(a_{m-1})  & \Psi'_m(c_m)
\end{array} \right|>0 \label{prop3:2}
\end{split}\end{equation}
for $c_m\in (a_m,B]$. On the other hand, the determinant in (\ref{prop3:2}) is zero when $b_{m-1}=a_{m-1}$. So the conclusion is sufficient if we can show that its derivative respective to $b_{m-1}$ is positive for $b_{m-1}\in (a_{m-1},c_m]$. Repeat this argument for $b_{m-2}$, $b_{m-3}$ and so on until $b_1$. Then the conclusion is sufficient if
\begin{equation}\begin{split}
\left|
\begin{array}{cccc}
  \Psi'_1(c_1) & \Psi'_1(c_2) & \ldots & \Psi'_1(c_m) \\
  \Psi'_2(c_1) & \Psi'_2(c_2) & \ldots & \Psi'_2(c_m) \\
  \vdots &  \vdots & \ddots & \vdots \\
  \Psi'_m(c_1) &  \Psi'_m(c_2) & \ldots &  \Psi'_m(c_m)
\end{array} \right|>0 \label{prop3:3}
\end{split}\end{equation}
for $A\leq c_1<c_2<\ldots<c_m\leq B$. By Proposition \ref{prop2}, we have our conclusion.
\end{proof}

\begin{Corollary}
Let $\Psi_1, \ldots,
\Psi_k$ be $k$ functions defined on $[A,B]$. Assume that $f_{l,l}(c)>0$, $c\in[A,B]$, $l=1,\ldots,k$. Let $S_{m+1}=\{s_1,\ldots,s_{m+1}\}$, where $m\leq k$, $s_i\in [A,B], i=1,\ldots,m+1$ and $s_i<s_{i+1}, i=1,\ldots,m$.
Define
\begin{equation}\begin{split}
D_1(\Psi,
S_{m+1})=D(\Psi(s_2)-\Psi(s_1),\Psi(s_3)-\Psi(s_2),\ldots,\Psi(s_{m+1})-\Psi(s_{m})),
\label{function:D1}
\end{split}\end{equation}
where function $D$
is defined as (\ref{prop3:1}). Then
\begin{equation}\begin{split}
D_1(\Psi,
S_{m+1})>0.
\end{split}\end{equation}
\end{Corollary}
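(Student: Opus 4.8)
The plan is to recognize that this Corollary is nothing more than a specialization of Proposition \ref{prop3}: the quantity $D_1(\Psi, S_{m+1})$ is exactly the determinant $D(\Psi(b_1)-\Psi(a_1),\ldots,\Psi(b_m)-\Psi(a_m))$ once I make the telescoping choice of endpoints. Concretely, I would set $a_i = s_i$ and $b_i = s_{i+1}$ for $i = 1, \ldots, m$. Then each column of $D$, namely $\Psi(b_i)-\Psi(a_i) = \Psi(s_{i+1})-\Psi(s_i)$, matches column $i$ of the defining expression (\ref{function:D1}) for $D_1$. So the two determinants are literally identical, and the problem reduces to verifying that my chosen endpoints satisfy the interleaving hypothesis of Proposition \ref{prop3}.

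The only thing to check is that the chain
\begin{equation*}\begin{split}
A\leq a_1 < b_1 \leq a_2 < b_2 \leq \cdots \leq a_m < b_m \leq B
\end{split}\end{equation*}
holds under the substitution. With $a_i = s_i$ and $b_i = s_{i+1}$, the strict inequalities $a_i < b_i$ become $s_i < s_{i+1}$, which is given since $S_{m+1}$ is a strictly increasing set, and the boundary conditions $A \leq s_1$, $s_{m+1}\leq B$ hold because each $s_i \in [A,B]$. The interleaving inequalities $b_i \leq a_{i+1}$ become $s_{i+1} \leq s_{i+1}$, i.e. equalities. This is precisely the borderline case that Proposition \ref{prop3} allows, since it was stated with $\leq$ signs between consecutive intervals rather than strict ones; the consecutive intervals $[s_i, s_{i+1}]$ simply abut at their shared endpoint.

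With the hypotheses of Proposition \ref{prop3} confirmed (and the standing assumption $f_{l,l}(c) > 0$ on $[A,B]$ carried over verbatim, together with $m \leq k$), I would invoke Proposition \ref{prop3} to conclude immediately that $D_1(\Psi, S_{m+1}) = D(\Psi(s_2)-\Psi(s_1),\ldots,\Psi(s_{m+1})-\Psi(s_m)) > 0$. I do not expect any genuine obstacle here: the entire content lies in Proposition \ref{prop3}, and the Corollary is a notational repackaging. The one point that merits an explicit sentence—and the closest thing to a subtlety—is confirming that the degenerate, touching-interval case $b_i = a_{i+1}$ is genuinely permitted by the hypotheses of Proposition \ref{prop3}, rather than excluded by a strict inequality. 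Since Proposition \ref{prop3} is stated with the appropriate non-strict signs, no continuity or limiting argument is needed, and the result follows directly.
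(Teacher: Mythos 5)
Your proposal is correct and matches the paper's (implicit) intent exactly: the Corollary is stated without proof precisely because it is the specialization of Proposition \ref{prop3} with $a_i=s_i$, $b_i=s_{i+1}$, and your check that the abutting case $b_i=a_{i+1}$ is covered by the non-strict inequalities in Proposition \ref{prop3} is the only point that needed verifying.
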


\begin{Proposition}\label{even}
Let $\Psi_1, \ldots,
\Psi_k$ be $k=2n$ functions defined on $[A,B]$. Assume that $f_{l,l}(c)>0$, $c\in[A,B]$, $l=1,\ldots,k$. Suppose for some $c_1,\ldots,c_n$ and $\tilde{c}_0,\ldots,\tilde{c}_{n}$, where
$A\leq \tilde{c}_0<c_1<\tilde{c}_1<c_2<\tilde{c}_2<\ldots<c_n<\tilde{c}_n\leq B$, there exist
$\omega_1,\ldots,\omega_n$ and $\tilde{\omega}_0,\ldots,\tilde{\omega}_n$, such that $\sum_{i=1}^n
\omega_i=\sum_{j=0}^n \tilde{\omega}_j$ and the following $k-1$ equations hold:
\begin{equation}\begin{split}
\sum_{i=1}^{n}\omega_i\Psi_l(c_{i}) =\sum_{j=0}^{n}\tilde{\omega}_j\Psi_l(\tilde{c}_{j}),
l=1,\ldots,k-1. \label{even:1}
\end{split}\end{equation}
If at least one of $\omega_1,\ldots,\omega_n$ and $\tilde{\omega}_0,\ldots,\tilde{\omega}_n$ is positive, then all of them should be positive.
Under this situation,
\begin{equation}\begin{split}
\sum_{i=1}^{n}\omega_i\Psi_k(c_{i}) <\sum_{j=0}^{n}\tilde{\omega}_j\Psi_k(\tilde{c}_{j}). \label{even:2}
\end{split}\end{equation}
\end{Proposition}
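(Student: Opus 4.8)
The plan is to recast the entire statement as a single homogeneous linear system in \emph{signed} weights and then extract both conclusions from the positivity of determinants that the Corollary already guarantees. First I would merge the nodes into one increasing list $z_0<z_1<\cdots<z_{2n}$, where the even-indexed $z_{2j}=\tilde c_j$ are the tilde nodes and the odd-indexed $z_{2j-1}=c_j$ are the plain nodes; this is exactly the interlacing hypothesis. To each $z_m$ I attach the signed weight $u_m$, equal to $+\omega_i$ when $z_m=c_i$ and to $-\tilde\omega_j$ when $z_m=\tilde c_j$. Writing $\Psi_0\equiv 1$ for the constant function, the condition $\sum\omega_i=\sum\tilde\omega_j$ together with the $2n-1$ equations (\ref{even:1}) becomes the single homogeneous system
\begin{equation*}
\sum_{m=0}^{2n} u_m\,\Psi_l(z_m)=0,\qquad l=0,1,\dots,2n-1,
\end{equation*}
i.e. $Mu=0$, where $M$ is the $2n\times(2n+1)$ matrix with rows $\Psi_0,\Psi_1,\dots,\Psi_{2n-1}$ evaluated at $z_0,\dots,z_{2n}$.

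Next I would show that every maximal minor of $M$ is strictly positive. Deleting column $m$ leaves a square matrix $M^{(m)}$ whose top row is constant; subtracting each column from its right neighbour (a determinant-preserving operation) clears that row to $(1,0,\dots,0)$, and expanding along it yields a determinant in the differences $\Psi_l(z_{i+1})-\Psi_l(z_i)$ of the functions $\Psi_1,\dots,\Psi_{2n-1}$ over the $2n$ surviving nodes. This is exactly the quantity $D_1$ of the Corollary with $2n-1\le k=2n$ functions, so $\det M^{(m)}>0$ for every $m$. Hence $M$ has full row rank $2n$, its kernel is one-dimensional, and is spanned by the cofactor vector $u^*_m=(-1)^m\det M^{(m)}$, whose entries strictly alternate in sign. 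Any admissible weight vector is then $u=Cu^*$ for a scalar $C$, and the genuine weight at $z_m$ is $v_m=\epsilon_m u_m$ with $\epsilon_m=+1$ for odd $m$ and $\epsilon_m=-1$ for even $m$. Combining $\epsilon_m$, the cofactor sign $(-1)^m$ and the parity dictated by the interlacing, a short computation gives $\operatorname{sign}(v_m)=-\operatorname{sign}(C)$ for \emph{every} $m$, so all the weights share one sign; since $C=0$ forces all weights to vanish, as soon as one weight is positive we must have $C<0$, whence every weight is positive.

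For the inequality I would append the row $(\Psi_k(z_0),\dots,\Psi_k(z_{2n}))$ with $k=2n$ below $M$, forming the square matrix $\tilde M$ of order $2n+1$. Expanding $\det\tilde M$ along this last row reproduces the same cofactors $\det M^{(m)}$ up to the sign $(-1)^m$, so $\det\tilde M=\sum_m u^*_m\Psi_k(z_m)$. The difference $\sum_i\omega_i\Psi_k(c_i)-\sum_j\tilde\omega_j\Psi_k(\tilde c_j)$, whose negativity is the content of (\ref{even:2}), equals $\sum_m u_m\Psi_k(z_m)=C\,\det\tilde M$. Clearing the constant top row of $\tilde M$ as before identifies $\det\tilde M$ with $D_1$ for the functions $\Psi_1,\dots,\Psi_{2n}$ over all $2n+1$ nodes, now with $m=2n=k$ functions, so the Corollary again yields $\det\tilde M>0$. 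In the relevant case $C<0$, hence $\sum_m u_m\Psi_k(z_m)<0$, which is precisely (\ref{even:2}).

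I expect the main obstacle to be the sign bookkeeping rather than any analytic difficulty: one must simultaneously track the cofactor sign $(-1)^m$, the placement sign $\epsilon_m$ recording which side of (\ref{even:1}) a node sits on, and the parity forced by the interlacing, and then verify that the degree bound $m\le k$ of the Corollary is met. The delicate point is that the second application sits exactly at the boundary $m=2n=k$, so there is no slack in the hypotheses of Proposition \ref{prop3} and its Corollary, and the reduction to $D_1$ via the column-difference operation must be carried out precisely for the argument to close.
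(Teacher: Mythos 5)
Your argument is correct, and it reaches the conclusion by a genuinely cleaner linear-algebraic route than the paper, even though both proofs ultimately run on the same engine, namely the positivity of the determinants $D_1$ guaranteed by Corollary 1 (via Propositions \ref{prop2} and \ref{prop3}). The paper fixes $r=\sum_i\omega_i$, solves the resulting \emph{non-homogeneous} system for $2n-1$ of the weights by Cramer's rule with columns $\Psi(b_i)-\Psi(a_i)$ built from non-consecutive pairs such as $(\tilde c_n,\tilde c_i)$ and $(c_i,c_n)$, and must then establish two separate determinant identities: one converting each numerator into a single $D_1$, and a more laborious recursive column-splitting argument showing that the common denominator equals $\sum_j D_1(\Psi,S_k(\overline{\tilde c_j}))$; the final inequality is handled by a bordered-determinant formula (Harville's Theorem 13.3.8). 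You instead absorb the total-weight constraint as a row $\Psi_0\equiv 1$, obtain a single homogeneous $2n\times(2n+1)$ system, and observe that each maximal minor reduces, after consecutive column differencing, directly to one $D_1$ over the $2n$ surviving nodes ($m=2n-1\le k$, so Corollary 1 applies); the one-dimensional kernel is then spanned by the alternating cofactor vector, the sign bookkeeping you describe does check out (every genuine weight carries the sign $-\operatorname{sign}(C)$), and the inequality drops out of the cofactor expansion of the bordered matrix, whose determinant is $D_1$ over all $2n+1$ nodes with $m=2n=k$ functions, again within the scope of Corollary 1. What your organization buys is the complete elimination of the hardest step in the paper's proof, the sum decomposition of $|\mathcal{M}|$, because the denominator never needs to be evaluated: only the signs of the individual cofactors matter, and the free scalar $C$ is pinned down by the hypothesis that at least one weight is positive. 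The one place where care is genuinely required is the boundary application $m=k=2n$ of Corollary 1 for the appended $\Psi_k$ row, but the hypotheses there are met exactly, so the argument closes.
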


\begin{proof}
For ease of presentation,  we define
$S_{k}(\overline{\tilde{c}_l})=\{c_1,\ldots,c_n,\tilde{c}_0,\tilde{c}_1,\ldots,\tilde{c}_{n}\}-\{\tilde{c}_l\}$
(all points but point $\tilde{c}_l$). Define $r=\sum_{i=1}^n
\omega_i$, then we have $\omega_n=r-\sum_{i=1}^{n-1}\omega_i$ and
$\tilde{\omega}_n=r-\sum_{j=0}^{n-1}\tilde{\omega}_j$. We can
treat (\ref{even:1}) as a system of linear equations
$\omega_1,\ldots, \omega_{n-1}$ and
$\tilde{\omega}_0,\ldots,\tilde{\omega}_{n-1}$. Let
$\Gamma=(\tilde{\omega}_0,\omega_1,\tilde{\omega}_1,\ldots,\omega_{n-1},\tilde{\omega}_{n-1})'$;
$(b_{2i+1},a_{2i+1})=(\tilde{c}_n,\tilde{c}_{i})$,
$i=0,\ldots,n-1$; and  $(b_{2i},a_{2i})=(c_i,c_n)$,
$i=1,\ldots,n-1$, then
\begin{equation}\begin{split}
\Gamma=
r\begin{pmatrix}
\Psi_1(b_1)-\Psi_1(a_1) & \ldots &  \Psi_1(b_{k-1})-\Psi_1(a_{k-1}) \\
  \vdots  & \ddots & \vdots \\
 \Psi_{k-1}(b_1)-\Psi_{k-1}(a_1)  & \ldots &  \Psi_{k-1}(b_{k-1})-\Psi_{k-1}(a_{k-1})
 \end{pmatrix}^{-1}
 \begin{pmatrix}
\Psi_1(\tilde{c}_n)-\Psi_1(c_n)\\
  \vdots \\
\Psi_{k-1}(\tilde{c}_n)-\Psi_{k-1}(c_n)
 \end{pmatrix} \label{even:3}
 \end{split}\end{equation}
For each $\tilde{\omega}_j$, $j=0,\ldots,n-1$, from (\ref{even:3}), by basic
properties of algebra, we have
\begin{equation}\begin{split}
\tilde{\omega}_j=r\frac{D(\Psi(\tilde{b}_1)-\Psi(\tilde{a}_1),\ldots,
\Psi(\tilde{b}_{k-1})-\Psi(\tilde{a}_{k-1}))}{D(\Psi(b_1)-\Psi(a_1),\ldots,\Psi(b_{k-1})-\Psi(a_{k-1}))}, \label{even:q}
 \end{split}\end{equation}
 where $(\tilde{b}_i,\tilde{a}_i)=(b_i,a_i)$ for all $i\neq 2j+1$ and $(\tilde{b}_{2j+1},\tilde{a}_{2j+1})=(\tilde{c}_n,c_n)$.
Function $D$
is defined as (\ref{prop3:1}).

We first study the property of
$D(\Psi(\tilde{b}_1)-\Psi(\tilde{a}_1),\ldots,\Psi(\tilde{b}_{k-1})-\Psi(\tilde{a}_{k-1}))$.
Let $C_i$ be the $i$th column of the corresponding matrix. Let
$\tilde{C}_i=(-1)^{i-1}[C_i+C_{i+1}-C_{2j+1}]$, $i=1,\ldots,2j-1$,
$2j+2,\ldots,k-2$,  $\tilde{C}_{k-1}=C_{k-1}-C_{2j+1}$,
$\tilde{C}_{2j}=-C_{2j}+C_{2j+2}$, and
$\tilde{C}_{2j+1}=C_{2j+1}$. Consider the new matrix
$$(\tilde{C}_1|\ldots|\tilde{C}_{2j}|\tilde{C}_{2j+2}|\ldots|\tilde{C}_{k-1}|\tilde{C}_{2j+1}).$$
Then we have
\begin{equation}\begin{split}
D(\Psi(\tilde{b}_1)-\Psi(\tilde{a}_1),\ldots,\Psi(\tilde{b}_{k-1})-\Psi(\tilde{a}_{k-1}))=(-1)^{n-1}
D_1(\Psi, S_{k}(\overline{\tilde{c}_j})), \label{even:4}
\end{split}\end{equation}
where $D_1$ is as defined in (\ref{function:D1}). The equality in
(\ref{even:4}) holds by applying some basic properties of a
determinant and considering the fact that $k$ is even.

Next, we study
$D(\Psi(b_1)-\Psi(a_1),\ldots,\Psi(b_{k-1})-\Psi(a_{k-1}))$. Let
$C_i$ be the $i$th column of the corresponding matrix. Let
$\tilde{C}_i=C_i-C_{i+2}$ when $i$ is odd ($i <k-1$) and
$\tilde{C}_{k-1}=C_{k-1}$; $\tilde{C}_i=-C_i+C_{i+2}$ when $i$ is
even ($i< k-2$) and $\tilde{C}_{k-2}=C_{k-2}$. Consider the new
matrix
$$\mathcal{M}=(\tilde{C}_1|\ldots|\tilde{C}_{k-1}).$$
Then we have
\begin{equation}\begin{split}
D(\Psi(b_1)-\Psi(a_1),\ldots,\Psi(b_{k-1})-\Psi(a_{k-1}))=(-1)^{n-1}
|\mathcal{M}|. \label{even:5}
\end{split}\end{equation}
Notice that the $(l,i)$ element of matrix $\mathcal{M}$ is
$\Psi_{l}(c_{m+1})-\Psi_{l}(c_{m})$ when  $i=2m$; and
$\Psi_{l}(\tilde{c}_{m+1})-\Psi_{i}(\tilde{c}_{m})$ when $i=2m+1$.
We can rewrite $\tilde{C}_1$ as
$\tilde{C}_1=\tilde{C}_{11}+\tilde{C}_{12}$, where
$\tilde{C}_{11}=(\Psi_{1}(\tilde{c}_{1})-\Psi_{1}(c_1),\ldots,\Psi_{k-1}(\tilde{c}_{1})-\Psi_{k-1}(c_1))'$
and
$\tilde{C}_{12}=(\Psi_{1}(c_{1})-\Psi_{1}(\tilde{c}_0),\ldots,\Psi_{k-1}(c_{1})-\Psi_{k-1}(\tilde{c}_0))'$.
Then $|\mathcal{M}|=|\mathcal{M}_1|+|\mathcal{M}_2|$, where
$\mathcal{M}_1=(\tilde{C}_{11}|\tilde{C}_2|\ldots|\tilde{C}_{k-1})$
and
$\mathcal{M}_2=(\tilde{C}_{12}|\tilde{C}_2|\ldots|\tilde{C}_{k-1}).$
For matrix $\mathcal{M}_1$, by subtracting the 1st column from the
2nd column, subtracting the new 2nd column from the 3rd column,
subtracting the new 3rd column from the 4th column, so on and so
forth until the last column, we have $|\mathcal{M}_1|=D_1(\Psi,
S_{k}(\overline{\tilde{c}_0}))$. As for $\mathcal{M}_2$, we can
rewrite its 3rd column $\tilde{C}_3$ as
$\tilde{C}_3=\tilde{C}_{31}+\tilde{C}_{32}$, where
$\tilde{C}_{31}=(\Psi_{1}(\tilde{c}_{2})-\Psi_{1}(c_2),\ldots,\Psi_{k-1}(\tilde{c}_{2})-\Psi_{k-1}(c_2))'$
and
$\tilde{C}_{32}=(\Psi_{1}(c_{2})-\Psi_{1}(\tilde{c}_1),\ldots,\Psi_{k-1}(c_{2})-\Psi_{k-1}(\tilde{c}_1))'$.
So we have $|\mathcal{M}_2|=|\mathcal{M}_3|+|\mathcal{M}_4|$,
where
$\mathcal{M}_3=(\tilde{C}_{12}|\tilde{C}_2|\tilde{C}_{31}|\tilde{C}_{4}|\ldots|\tilde{C}_{k-1})$
and
$\mathcal{M}_4=(\tilde{C}_{12}|\tilde{C}_2|\tilde{C}_{32}|\tilde{C}_{4}|\ldots|\tilde{C}_{k-1}).$
By similar argument as used for $\mathcal{M}_1$, we have
$|\mathcal{M}_3|=D_1(\Psi, S_{k}(\overline{\tilde{c}_1}))$. So we
have $|\mathcal{M}|=D_1(\Psi,
S_{k}(\overline{\tilde{c}_0}))+D_1(\Psi,
S_{k}(\overline{\tilde{c}_1}))+|\mathcal{M}_4|$. As for
$\mathcal{M}_4$, by subtracting the 3nd column from the 2nd
column, we can have a matrix which has similar patterns as
$\mathcal{M}_2$. Repeating these arguments for every odd column of
each newly generated matrix until we reach the last column, we
will have
\begin{equation}\begin{split}
|\mathcal{M}|=\sum_{j=0}^{n}D_1(\Psi, S_{k}(\overline{\tilde{c}_j})). \label{even:6}
\end{split}\end{equation}
By (\ref{even:q}),  (\ref{even:4}), (\ref{even:5}), and(\ref{even:6}), we have
\begin{equation}\begin{split}
\tilde{\omega}_j=r\frac{D_1(\Psi, S_{k}(\overline{\tilde{c}_j}))}{\sum_{i=0}^{n}D_1(\Psi, S_{k}(\overline{\tilde{c}_i}))}, j=0,\dots,n-1.\label{even:q2}
 \end{split}\end{equation}
Since $D_1(\Psi, S_{k}(\overline{\tilde{c}_i}))>0, i=0,\ldots,n$,
it is clear that $\tilde{\omega}_j,j=0,\ldots,n$ have the same
sign as $r$.

On the other hand, for each $\omega_i$, $i=1,\ldots,n-1,$
\begin{equation}\begin{split}
\omega_i=r\frac{D(\Psi(\tilde{b}_1)-\Psi(\tilde{a}_1),\ldots,
\Psi(\tilde{b}_{k-1})-\Psi(\tilde{a}_{k-1}))}{D(\Psi(b_1)-\Psi(a_1),\ldots,\Psi(b_{k-1})-\Psi(a_{k-1}))}, \label{even:p}
 \end{split}\end{equation}
where $(\tilde{b}_j,\tilde{a}_j)=(b_j,a_j)$ for all $j\neq 2i$ and $(\tilde{b}_{2j},\tilde{a}_{2j})=(\tilde{c}_n,c_n)$. By the similar argument as those for $\tilde{\omega}_j,j=0,\ldots,n$, we can show that
\begin{equation}\begin{split}
\omega_i=r\frac{D_1(\Psi, S_{k}(\overline{c_i}))}{\sum_{j=1}^{n}D_1(\Psi, S_{k}(\overline{c_j}))}, i=1,\dots,n-1.\label{even:p2}
 \end{split}\end{equation}
Since $D_1(\Psi, S_{k}(\overline{c_j}))>0, j=1,\ldots,n$ (Corollary 1), it is clear that $\omega_i,i=1,\ldots,n$ have the same sign as that of $r$. Thus we can draw the conclusion that if at least one number in $\omega_1,\ldots,\omega_n$ and $\tilde{\omega}_0,\ldots,\tilde{\omega}_n$ is positive, then all of them are positive. Next, we will show that (\ref{even:2}) holds.
Notice that (\ref{even:2}) is equivalent to
\begin{equation}\begin{split}
r(\Psi_k(\tilde{c}_n)-\Psi_k(c_n))- (\Psi_k(b_1)-\Psi_k(a_1),\ldots, (\Psi_k(b_{k-1})-\Psi_k(a_{k-1})) \Gamma >0. \label{even:9}
\end{split}\end{equation}
Here, $(b_i, a_i), i=1,\ldots,k-1$ are defined as right before
(\ref{even:3}) and $\Gamma$ is defined as in (\ref{even:3}). By
Theorem 13.3.8 of Harville (1997), the left hand side of
(\ref{even:9}) can be written as
\begin{equation}\begin{split}
r\frac{D(\Psi(b_1)-\Psi(a_1),\ldots,\Psi(b_{k-1})-\Psi(a_{k-1}), \Psi(\tilde{c}_n)-\Psi(c_n))}{D(\Psi(b_1)-\Psi(a_1),\ldots,\Psi(b_{k-1})-\Psi(a_{k-1}))}.
\label{even:10}
\end{split}\end{equation}
Next, we study $D(\Psi(b_1)-\Psi(a_1),\ldots,\Psi(b_{k-1})-\Psi(a_{k-1}), \Psi(\tilde{c}_n)-\Psi(c_n))$. Let $C_i, i=1,\ldots,k$ be the $ith$ column of the corresponding matrix. Let $\tilde{C}_i=(-1)^{i-1}(C_i+C_{i+1}-C_{k})$, $i=1,\ldots,k-2$ and $\tilde{C}_{k-1}=C_{k-1}-C_{k}$. Consider the new matrix
$(\tilde{C}_1|\ldots|\tilde{C}_{k-1}|C_k)$. Then we have
\begin{equation}\begin{split}
D(\Psi(b_1)-\Psi(a_1),\ldots,\Psi(b_{k-1})-\Psi(a_{k-1}), \Psi(\tilde{c}_n)-\Psi(c_n))=(-1)^{n-1}D_1(\Psi,S_{k+1}). \label{even:11}
\end{split}\end{equation}
Here, $S_{k+1}=\{c_1,\ldots,c_n, \tilde{c}_0,\ldots,\tilde{c}_n\}$. Notice that $D_1(\Psi,S_{k+1})>0$  (Corollary 1). Thus,
by (\ref{even:5}), (\ref{even:6}), and (\ref{even:11}), it is clear that (\ref{even:10}) is positive when $r$ is positive. The conclusion follows.
\end{proof}

Proposition \ref{odd} is analogue to Proposition \ref{even},
except that it is for the case that $k$ is odd. It can be proven
using a similar strategy as used in derivation of Proposition
\ref{even}. We omit the proof due to space limit.

\begin{Proposition}\label{odd}
Let $\Psi_1, \ldots,
\Psi_k$ be $k=2n-1$ functions defined on $[A,B]$. Assume that $f_{l,l}(c)>0$, $c\in[A,B]$, $l=1,\ldots,k$.
Suppose for some $c_1,\ldots,c_n$ and $\tilde{c}_1,\ldots,\tilde{c}_{n}$, where
$A\leq c_1<\tilde{c}_1<c_2<\tilde{c}_2<\ldots<c_n<\tilde{c}_n\leq B$, there exist
$\omega_1,\ldots,\omega_n$ and $\tilde{\omega}_1,\ldots,\tilde{\omega}_n$, such that $\sum_{i=1}^n
\omega_i=\sum_{j=1}^n \tilde{\omega}_j$ and the following $k-1$ equations hold:
\begin{equation}\begin{split}
\sum_{i=1}^{n}\omega_i\Psi_l(c_{i}) =\sum_{j=1}^{n}\tilde{\omega}_j\Psi_l(\tilde{c}_{j}),
l=1,\ldots,k-1. \label{odd:1}
\end{split}\end{equation}
If at least one number in $\omega_1,\ldots,\omega_n$ and $\tilde{\omega}_1,\ldots,\tilde{\omega}_n$ is
positive, then all of them should be positive. Under this
situation,
\begin{equation}\begin{split}
\sum_{i=1}^{n}\omega_i\Psi_k(c_{i}) <\sum_{j=1}^{n}\tilde{\omega}_j\Psi_k(\tilde{c}_{j}).
\label{odd:2}
\end{split}\end{equation}
\end{Proposition}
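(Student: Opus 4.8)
The plan is to prove Proposition \ref{odd} by mirroring the proof of Proposition \ref{even} line for line, adapting only the parity bookkeeping (here $k=2n-1$ is odd) and exploiting the fact that there are now exactly $n$ weights $\omega_i$ and $n$ weights $\tilde{\omega}_j$, a symmetric situation rather than the $n$-versus-$(n+1)$ split of the even case. First I would set $r=\sum_{i=1}^n\omega_i=\sum_{j=1}^n\tilde{\omega}_j$ and use this common-mass constraint to eliminate two unknowns, writing $\omega_n=r-\sum_{i=1}^{n-1}\omega_i$ and $\tilde{\omega}_n=r-\sum_{j=1}^{n-1}\tilde{\omega}_j$. Substituting these into the $k-1=2n-2$ equations (\ref{odd:1}) converts them into a square linear system in the $2n-2$ remaining unknowns, whose coefficient columns are the difference vectors $(\Psi_l(c_i)-\Psi_l(c_n))_l$ and $(\Psi_l(\tilde{c}_n)-\Psi_l(\tilde{c}_j))_l$ and whose right-hand side is $r(\Psi_l(\tilde{c}_n)-\Psi_l(c_n))_l$. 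Collecting the unknowns in the interleaved order $\Gamma=(\omega_1,\tilde{\omega}_1,\ldots,\omega_{n-1},\tilde{\omega}_{n-1})'$ and setting $(b_{2i-1},a_{2i-1})=(c_i,c_n)$ and $(b_{2i},a_{2i})=(\tilde{c}_n,\tilde{c}_i)$, the system takes exactly the shape handled in (\ref{even:3}), with the determinant function $D$ of (\ref{prop3:1}).

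Next, applying Cramer's rule (equivalently, the bordered-determinant identity used in the even case), each of the $2n-2$ solved weights is expressed as $r$ times the ratio of a numerator determinant to the common denominator $D(\Psi(b_1)-\Psi(a_1),\ldots,\Psi(b_{k-1})-\Psi(a_{k-1}))$, as in (\ref{even:q}) and (\ref{even:p}). I would then carry out the same column operations as in Proposition \ref{even} to reduce each numerator to $\pm D_1(\Psi,S_k(\overline{\,\cdot\,}))$ and the denominator to $\pm\sum D_1(\Psi,S_k(\overline{\,\cdot\,}))$, where $S_k$ now ranges over the $k=2n-1$ point sets obtained by deleting a single point from $\{c_1,\ldots,c_n,\tilde{c}_1,\ldots,\tilde{c}_n\}$. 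By Corollary 1 every such $D_1$ is strictly positive, so each solved weight equals $r$ times a strictly positive quantity and hence shares the sign of $r$. Because $\sum_i\omega_i=\sum_j\tilde{\omega}_j=r$, the same conclusion extends automatically to the two eliminated weights $\omega_n$ and $\tilde{\omega}_n$. Therefore, if any one weight is positive then $r>0$ and all $2n$ weights are positive.

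For the strict inequality (\ref{odd:2}), I would rewrite it (after the same elimination) as the positivity of $r(\Psi_k(\tilde{c}_n)-\Psi_k(c_n))-(\Psi_k(b_1)-\Psi_k(a_1),\ldots,\Psi_k(b_{k-1})-\Psi_k(a_{k-1}))\Gamma$, exactly as in (\ref{even:9}), and invoke Harville's Theorem 13.3.8 to express this quantity, as in (\ref{even:10}), as $r$ times the ratio $D(\Psi(b_1)-\Psi(a_1),\ldots,\Psi(b_{k-1})-\Psi(a_{k-1}),\Psi(\tilde{c}_n)-\Psi(c_n))/D(\Psi(b_1)-\Psi(a_1),\ldots,\Psi(b_{k-1})-\Psi(a_{k-1}))$. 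Column operations mirroring (\ref{even:11}) reduce the bordered numerator to $\pm D_1(\Psi,S_{k+1})$ with $S_{k+1}=\{c_1,\ldots,c_n,\tilde{c}_1,\ldots,\tilde{c}_n\}$ (which has $k+1=2n$ points), and this is positive by Corollary 1. Since the denominator is the very $D$ appearing in the weight formulas, the whole expression is positive once $r>0$, which establishes (\ref{odd:2}).

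I expect the main obstacle to be the sign bookkeeping. Each reduction of a column-manipulated determinant to a $D_1$-form in the even proof, namely the identities (\ref{even:4}), (\ref{even:5}), (\ref{even:6}), and (\ref{even:11}), carries a factor $(-1)^{n-1}$ whose derivation explicitly uses that $k$ is even; with $k=2n-1$ odd these parity factors must be recomputed, and one must check that the induced sign changes cancel consistently between numerator and denominator so that the final ratios are genuinely positive rather than merely nonzero. A secondary subtlety is the telescoping expansion of the denominator: since the $c$- and $\tilde{c}$-points now occur in equal numbers, the initial column split (the analogue of $\tilde{C}_1=\tilde{C}_{11}+\tilde{C}_{12}$) must be arranged so that the telescoping terminates cleanly and yields the full symmetric sum of $D_1$-terms.
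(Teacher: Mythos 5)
Your plan is correct and coincides with the paper's: the paper omits the proof of Proposition \ref{odd} entirely, stating only that it follows by the same strategy as Proposition \ref{even}, and your outline --- eliminating $\omega_n$ and $\tilde{\omega}_n$ via the common mass $r$, writing the remaining $2n-2$ weights as ratios of $D$-determinants by Cramer's rule, reducing numerator and denominator to positive $D_1$-forms via column operations and Corollary 1, and finishing with the Harville bordered-determinant identity for (\ref{odd:2}) --- is exactly that adaptation. The parity factors you flag do need recomputing for odd $k$, but the issue resolves favorably because the same sign factor arises from the structurally identical column operations in numerator and denominator and therefore cancels in every ratio, just as $(-1)^{n-1}$ does in (\ref{even:q2}) and (\ref{even:p2}).
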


\begin{Corollary}\label{coro1}
Let $\Psi_1, \ldots,
\Psi_k$ be $k$ functions defined on $[A,B]$. Assume that $f_{l,l}(c)>0$, $c\in[A,B]$, $l=1,\ldots,k$.
 Let $z_1<z_2<\ldots<z_t \in [A,B]$ and their associated $r_1,\ldots,r_t$ with $\sum_{i=1}^t r_i=0$ satisfy the following $k-1$ equations:
\begin{equation}\begin{split}
\sum_{i=1}^{t}r_i\Psi_l(z_{i}) =0,
l=1,\ldots,k-1. \label{coro1:1}
\end{split}\end{equation}
Then we have
\begin{itemize}
  \item [(a)] If $t\leq k$, then $r_i=0, i=1,\ldots,t$;
  \item [(b)]  If $t=k+1$ and there exists at least one nonzero $r_i$, then either (i) $r_i>0$, $i$ is odd and $r_i<0$, $i$ is even; or (ii)  $r_i<0$, $i$ is odd and $r_i>0$, $i$ is even.
  \end{itemize}
\end{Corollary}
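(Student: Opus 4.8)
The plan is to read (\ref{coro1:1}) together with the normalization $\sum_i r_i=0$ as a single homogeneous linear system and then exploit the strict positivity of the determinants $D_1$ supplied by Proposition \ref{prop3} and its Corollary. Setting $\Psi_0\equiv 1$, the constraints state precisely that $(r_1,\dots,r_t)^T$ lies in the kernel of the $k\times t$ matrix $M=\bigl(\Psi_{l}(z_i)\bigr)_{0\le l\le k-1,\,1\le i\le t}$. So the whole statement will follow once I show that every square minor of $M$ formed from the rows $\Psi_0,\dots,\Psi_{s-1}$ and an ordered column set is strictly positive.

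First I would record this reduction. For ordered points $w_1<\dots<w_s$ with $s\le k+1$, look at the $s\times s$ determinant with rows $\Psi_0,\dots,\Psi_{s-1}$ and columns $w_1,\dots,w_s$. Applying the column operations $C_j\leftarrow C_j-C_{j-1}$ for $j=s,s-1,\dots,2$ turns the top row (coming from $\Psi_0\equiv 1$) into $(1,0,\dots,0)$; expanding along it leaves exactly the consecutive-difference determinant $D_1\bigl(\Psi,\{w_1,\dots,w_s\}\bigr)$ built from $\Psi_1,\dots,\Psi_{s-1}$. Since $m=s-1\le k$, Corollary 1 gives $D_1>0$. Thus every such minor of $M$ is strictly positive.

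For part (a) I would keep the top $t$ rows $\Psi_0,\dots,\Psi_{t-1}$, giving a square $t\times t$ submatrix; because $t\le k$ the reduction applies with $s=t$, so its determinant equals $D_1\bigl(\Psi,\{z_1,\dots,z_t\}\bigr)>0$. Hence the $t$ columns are independent and the homogeneous system forces $r_i=0$ for every $i$. For part (b), $M$ is $k\times(k+1)$ and the reduction shows each maximal minor, obtained by deleting one column $z_i$, equals $D_1\bigl(\Psi,\{z_1,\dots,z_{k+1}\}\setminus\{z_i\}\bigr)>0$. In particular $\mathrm{rank}\,M=k$, so $\ker M$ is one dimensional and spanned by the cofactor vector $v$ with $v_i=(-1)^{i}D_1\bigl(\Psi,\{z_1,\dots,z_{k+1}\}\setminus\{z_i\}\bigr)$; that $Mv=0$ follows by appending any row of $M$ and Laplace-expanding the resulting singular $(k+1)\times(k+1)$ determinant. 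As all the $D_1$ factors are positive, consecutive entries of $v$ have opposite signs and none vanish, and since $(r_1,\dots,r_{k+1})$ is a nonzero multiple of $v$ its signs strictly alternate — precisely the dichotomy (i)/(ii).

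The routine part is the column bookkeeping and tracking the sign in the cofactor expansion. The one step I expect to demand care is verifying that deleting an \emph{interior} point $z_i$ still leaves an ordered set to which the consecutive-difference reduction and Corollary 1 apply verbatim, so that \emph{all} maximal minors of $M$ — not merely the leading principal one — are strictly positive. Once that uniform positivity is established, parts (a) and (b) are immediate consequences of $\mathrm{rank}\,M=t$ versus $\mathrm{rank}\,M=k$ and the alternating sign pattern of the cofactor vector.
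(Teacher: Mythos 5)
Your proof is correct, but it takes a genuinely different route from the paper's. The paper deduces the corollary from Propositions \ref{even} and \ref{odd}: it splits the points $z_i$ by parity of index into two groups, reads $r_i$ and $-r_i$ as two weight vectors on interlaced supports, and for $t\le k$ invokes the strict inequality in the $(t-1)$st moment supplied by those propositions (applied with $k$ replaced by $t-1$) to contradict the hypothesis, while for $t=k+1$ the same propositions directly force all signed weights to be positive, which is exactly the alternation in (b). You instead work directly with the $k\times t$ matrix $M=\bigl(\Psi_l(z_i)\bigr)_{0\le l\le k-1}$ with $\Psi_0\equiv 1$, show via the consecutive-difference column reduction that every relevant maximal minor equals a $D_1$ determinant and hence is strictly positive by the first Corollary (of Proposition \ref{prop3}), and then read off (a) from full column rank and (b) from the one-dimensional kernel spanned by the alternating cofactor vector. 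Both arguments ultimately rest on the same positivity $D_1(\Psi,S_{m+1})>0$, but yours is more self-contained: it bypasses the considerably heavier Propositions \ref{even} and \ref{odd} (whose proofs already carry out essentially the same Cramer's-rule computation you package into the cofactor vector) and makes the Chebyshev-system structure explicit. The one step you flag as delicate is indeed fine: deleting an interior column leaves an ordered point set, so the reduction lands on $D_1(\Psi, S_{k}(\overline{z_i}))$, which is precisely the form the first Corollary covers and which the paper itself uses in the proof of Proposition \ref{even}.
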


\begin{proof}
We can rewrite (\ref{coro1:1}) as
\begin{equation}\begin{split}
\sum_{\textit{odd } i}r_i\Psi_l(z_{i}) =\sum_{\textit{even } i}(-r_i)\Psi_l(z_{i}),
l=1,\ldots,k-1. \label{coro1:2}
\end{split}\end{equation}
Notice that  $\sum_{\textit{odd } i}r_i=\sum_{\textit{even } i}(-r_i)$.

When $t\leq k$, consider the first $t-2$ equations of (\ref{coro1:2}), i.e., $l=1,\ldots,t-2$. Suppose there exists at least one nonzero $r_i$. Applying Proposition \ref{even} when $t$ is odd and Proposition \ref{odd} when $t$ is even with $k=t-1$, we have
\begin{equation}\begin{split}
\sum_{\textit{odd } i}r_i\Psi_{t-1}(z_{i})\neq \sum_{\textit{even } i}(-r_i)\Psi_{t-1}(z_{i}).
\end{split}\end{equation}
This contradicts (\ref{coro1:2}). So conclusion (a)
follows.

When $t= k+1$ and there exists at least one nonzero $r_i$, applying Proposition \ref{even} when $t$ is odd and Proposition \ref{odd} when $t$ is even with $k=t-1$, we can draw conclusion (b).
\end{proof}

\subsection{Proof of Lemma 1}

We first study some basic properties of  $\tilde{c}_j$'s assuming
Lemma  \ref{exist} holds.

\begin{Proposition}\label{property}
Assume that Lemma \ref{exist} holds for $k\leq K$. Let $(\tilde{c}_j,
\tilde{\omega}_j)$'s  be the solution set for given
$(c_i,\omega_i)$'s, and $\tilde{c}_0$ or $\tilde{c}_n$ (if
applicable). Let $\omega_i^m$ be a sequence of bounded positive
number for each $i$ and  $(\tilde{c}_j^m, \tilde{\omega}_j^m)$'s
be the solution set with $\omega_i$'s being replaced by
$\omega_i^m$'s and all other values are fixed including
$\tilde{c}_0$ or $\tilde{c}_n$ (if applicable). Then we have
\begin{itemize}
  \item [(i)] $\tilde{c}_j$'s are unique.
  \item [(ii)] If one of the $\omega_i$ sequences, say $\omega_{i_1}$ is an increasing sequence, and all other given values are the same (including $\tilde{c}_0$ or $\tilde{c}_n$ if applicable),
  then the $\tilde{c}_j, j<i_1$ are increasing sequences and   $\tilde{c}_j, j\geq i_1$
  are decreasing sequences.
  On the other hand, if $\omega_{i_1}$ is an decreasing sequence and all other given values are the same,
  then $\tilde{c}_j, j<i_1$ are decreasing sequences and   $\tilde{c}_j, j\geq i_1$
  are increasing sequences.
  \item [(iii)] If $\omega_{i}^m\rightarrow \omega_i$ for all $i$'s, then $\tilde{c}_j^m\rightarrow \tilde{c}_j$ for all
  $j$'s.
  \item [(iv)] If $\omega_{i_1}^m\rightarrow 0$ and $\omega_{i_2}^m \nrightarrow 0$, then either $\underline{\lim}|\tilde{c}_{i_2-1}^m-c_{i_2}|=0$ or
  $\underline{\lim}|\tilde{c}_{i_2}^m-c_{i_2}|=0$.
  \item [(v)] Suppose that $\omega_{i_1}^m\rightarrow 0$. If there exists $i_2>i_1$,
  such that $\underline{\lim}\omega_{i}^m>0$ for $i\geq i_2$,
  then $\tilde{c}_j^m \rightarrow c_{j+1}$ for all $j\geq i_2-1$.
  If there exist $i_3<i_1$, such that $\underline{\lim}\omega_{i}^m>0$ for $i\leq i_3$,
  then $\tilde{c}_j^m \rightarrow c_{j}$ for all
  $j\leq i_3$.
  \item[(vi)] If $\underline{\lim}|\tilde{c}_{j_1}^m-c_{j_1+1}|=0$, then there exists a subsequence $\{m_1\}$ and $i_1(\leq j_1)$, such that
$\lim \omega_{i_1}^{m_1}=0$ and
$\lim|\tilde{c}_{j}^{m_1}-c_{j+1}|=0$ for $i_1\leq j\leq j_1$.
Similarly,  if $\underline{\lim}|\tilde{c}_{j_2}^m-c_{j_2}|=0$,
then there exists a subsequence $\{m_2\}$ and $i_2(> j_2)$, such
that $\lim \omega_{i_2}^{m_2}=0$ and
$\lim|\tilde{c}_{j}^{m_2}-c_{j}|=0$ for $j_2\leq j< i_2$.

\item[(vii)] Suppose that $\omega_i^m<\omega_i$ when $i\leq i_1$
and $\omega_i^m=\omega_i$ otherwise. If $\tilde{c}_{j_1}^m
\rightarrow \tilde{c}_{j_1}$,  for some $j_1\geq i_1$, then
$\omega_i^m\rightarrow \omega_i$ for all $i$'s  and $\tilde{c}_j^m
\rightarrow \tilde{c}_j$ for all
  $j$'s.

\item[(viii)] Let $\tilde{c}_0^m$ be a sequence  numbers between
$\tilde{c}_0$ and $c_1$,  and suppose that $\omega_i^m<\omega_i$
when $i\leq i_1$ and $\omega_{i}^m=\omega_{i}$ otherwise. Let
$(\tilde{c}_j^m, \tilde{\omega}_j^m)$ be the solution set in Case
(a) for given $(c_i,\omega_i^m), i=1,\ldots,n$ and
$\tilde{c}_0^m$. If $\tilde{c}_{n-1}^m\rightarrow
\tilde{c}_{n-1}$, then we must have $\tilde{c}_0^m\rightarrow
\tilde{c}_0$ and $\omega_i^m\rightarrow \omega_i$, $i=1,\ldots,n$.
\end{itemize}
\end{Proposition}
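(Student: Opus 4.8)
The plan is to establish the eight claims in order, using the positivity of the Chebyshev-type determinants from Propositions \ref{prop2} and \ref{prop3}, the $D_1$-positivity of the Corollary following Proposition \ref{prop3}, and the sign dichotomy of Corollary \ref{coro1} as the recurring engine. Throughout I would set $\Psi_0\equiv 1$ so that the normalization (\ref{exist:1}) together with the matching conditions (\ref{exist:2}) become a single system of $k$ moment equations against $\Psi_0,\Psi_1,\ldots,\Psi_{k-1}$. In each of the four cases the resulting system for the unknowns (the free $\tilde c_j$ together with all $\tilde\omega_j$) is square, with exactly $k$ unknowns and $k$ equations; this is what makes uniqueness and smooth dependence meaningful, and its Jacobian is precisely a confluent Wronskian/Vandermonde determinant of the type shown strictly positive in Propositions \ref{prop2} and \ref{prop3}.

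For (i) I would argue by contradiction. Given two solution configurations for the same data, form the signed measure equal to the difference of the two discrete target measures. It annihilates $\Psi_0,\ldots,\Psi_{k-1}$, and because the fixed endpoints ($\tilde c_0=A$ and/or $\tilde c_n=B$, when present) are shared, its support has at most $k$ distinct points. By part (a) of Corollary \ref{coro1} all its coefficients vanish, which is impossible unless the two configurations coincide, since an unshared support point would carry a strictly positive weight. This pins the $\tilde c_j$ down uniquely.

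Part (ii), the monotone dependence of the $\tilde c_j$ on a single weight $\omega_{i_1}$, is the heart of the proposition and the step I expect to be hardest. First I would differentiate the square moment system implicitly with respect to $\omega_{i_1}$; since the Jacobian is the strictly positive determinant above, the implicit function theorem gives a smooth dependence and a linear system for $\partial\tilde c_j/\partial\omega_{i_1}$ and $\partial\tilde\omega_j/\partial\omega_{i_1}$. Solving by Cramer's rule expresses each $\partial\tilde c_j/\partial\omega_{i_1}$ as a ratio of such determinants. The hard part will be reading off the sign: I would relate the numerator to a $D_1$-type determinant and use the sign-alternation of part (b) of Corollary \ref{coro1} to show it changes sign exactly as $j$ passes $i_1$, producing the claimed pattern (increasing for $j<i_1$, decreasing for $j\ge i_1$). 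A purely discrete two-configuration comparison (taking $\omega'_{i_1}>\omega_{i_1}$ and analyzing the difference measure via Corollary \ref{coro1}(b)) can give the same monotonicity, but it needs extra care to control the size of the combined support, so I would treat the differentiation route as primary.

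The remaining statements are then compactness-and-uniqueness consequences of (i) and (ii). For (iii), since all $\tilde c_j^m\in[A,B]$ and the configuration is determined uniquely by (i), any subsequential limit of $(\tilde c_j^m,\tilde\omega_j^m)$ still solves the moment equations for the limiting weights by continuity of the $\Psi_l$, hence equals the unique solution, which forces convergence of the full sequence. For (iv)--(vi) I would track what can degenerate as a weight tends to $0$: a limiting configuration with one fewer effective source point cannot support the full target set unless some $\tilde c_j$ collides with a neighboring $c_i$, since otherwise Corollary \ref{coro1}(a) would again force a forbidden vanishing of positive weights, and the monotonicity of (ii) dictates which collisions occur and on which side, yielding the stated one-sided limits and, for (vi), the vanishing weight along a subsequence. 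Finally (vii) and (viii) are rigidity statements: under the one-sided perturbation $\omega_i^m<\omega_i$, the assumed convergence of a single interior point ($\tilde c_{j_1}^m$, respectively $\tilde c_{n-1}^m$ together with the free endpoint $\tilde c_0^m$) removes the only freedom that (ii) leaves, so the monotone trends of (ii) combined with (iii) force $\omega_i^m\to\omega_i$ and $\tilde c_j^m\to\tilde c_j$ for all indices; here the main care is to exclude a partial degeneration using (iv)--(vi).
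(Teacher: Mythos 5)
Your overall architecture for (i), (iii), (vii) and (viii) matches the paper's: uniqueness via Corollary \ref{coro1}(a) applied to the difference of two configurations (noting the shared fixed endpoints keep the support at $\leq k$ points), convergence via compactness plus uniqueness of subsequential limits, and the rigidity statements via (ii) and (iii). The problem is your primary route for (ii), which is where the real content sits. You propose implicit differentiation of the moment system and a Cramer's-rule sign analysis, asserting that the Jacobian ``is precisely a confluent Wronskian/Vandermonde determinant of the type shown strictly positive in Propositions \ref{prop2} and \ref{prop3}.'' It is not: differentiating $\sum_j\tilde{\omega}_j\Psi_l(\tilde{c}_j)$ with respect to $\tilde{\omega}_j$ and $\tilde{c}_j$ produces a matrix whose columns mix values $\Psi_l(\tilde{c}_j)$ and derivatives $\tilde{\omega}_j\Psi_l'(\tilde{c}_j)$ evaluated at the \emph{same} points, i.e.\ a genuinely confluent determinant. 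Proposition \ref{prop2} covers pure derivative matrices at distinct points and Proposition \ref{prop3} covers pure difference matrices at separated pairs; neither gives strict positivity of the confluent object without an additional limiting argument ($b_j\downarrow a_j$ after dividing by $b_j-a_j$), and that limit only yields nonnegativity without further work. Moreover, Corollary \ref{coro1}(b) is a statement about sign patterns of weights in a null combination, not about signs of cofactors, so the step ``relate the numerator to a $D_1$-type determinant and read off the sign'' is not available as stated. As written, your primary argument for (ii) does not close.

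Your fallback --- the discrete two-configuration comparison --- is exactly the paper's proof, and the ``extra care to control the size of the combined support'' that worried you is supplied by the corollary itself. One forms the null combination $\delta\Psi_l(c_{i_1})+\sum_j\tilde{\omega}_j\Psi_l(\tilde{c}_j)-\sum_j\widehat{\omega}_j\Psi_l(\widehat{c}_j)=0$, observes that there are at most $k+1$ distinct points, and then Corollary \ref{coro1}(a) forces \emph{exactly} $k+1$ distinct points (otherwise the coefficient $\delta$ would have to vanish); part (b) then forces strict sign alternation along the ordered support, and since all of $\delta$, $\tilde{\omega}_j$, $\widehat{\omega}_j$ are positive, the only consistent ordering interlaces the $\widehat{c}_j$'s with the $\tilde{c}_j$'s and $c_{i_1}$, which is precisely the claimed monotonicity on either side of $i_1$. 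I would also flag that your treatment of (iv)--(vi) is a gloss: the direction of each collision (with $c_{j+1}$ versus $c_j$) and the location of the vanishing weight (index $\leq j_1$ in the first half of (vi), $>j_2$ in the second) come out of a careful counting-and-propagation argument using the limiting null combination and part (a) of Corollary \ref{coro1}, not merely from ``monotonicity dictates which collisions occur''; these details are what make (vi) usable later in Propositions \ref{dedu2} and \ref{dedu1}.
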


\begin{proof}
(i): Suppose $(\widehat{c}_j,\widehat{\omega}_j)$'s are another solution set. By Lemma \ref{exist}, we have $\sum_j\widehat{\omega}_j+\sum_j(-\tilde{\omega}_j)=0$ and
\begin{equation}\begin{split}
\sum_{j}\widehat{\omega}_j\Psi_l(\widehat{c}_{j})
+\sum_{j}(-\tilde{\omega}_j)\Psi_l(\tilde{c}_{j})=0, l=1,\ldots,k-1.  \label{property:1}
\end{split}\end{equation}
By case by case discussion, notice that the given $\tilde{c}_0$
and $\tilde{c}_n$ (if applicable) are still the same, there are at
most $k$ distinct values among $\widehat{c}_j$'s and
$\tilde{c}_{j}$'s. By rewriting the distinct values as
$z_1<z_2<\ldots<z_t$ ($t\leq k$) and merging the associated
weights into $r_i$, we have $\sum_{i=1}^tr_i=0$ and
\begin{equation}\begin{split}
\sum_{i=1}^{t}r_i\Psi_l(z_{i}) =0,
l=1,\ldots,k-1.
\end{split}\end{equation}
By (a) of Corollary \ref{coro1}, we have  $r_i=0, i=1,\ldots,t$. This implies $\widehat{\omega}_j=\tilde{\omega}_j$ and $\widehat{c}_j=\tilde{c}_j$ for all $j$.

(ii): Suppose $\omega_{i_1}$ increases to $\omega_{i_1}+\delta$,
$\delta>0$, and all other given values are fixed. Let
$\widehat{c}_j$'s and $\widehat{\omega}_j$ be the corresponding
solution set. By Lemma \ref{exist}, we have $\delta+\sum_j
\tilde{\omega}_j+ \sum_j(-\widehat{\omega}_j)=0$ and
\begin{equation}\begin{split}
\delta \Psi_l(c_{i_1})
+\sum_{j}\tilde{\omega}_j\Psi_l(\tilde{c}_{j})+\sum_{j}(-\widehat{\omega}_j)\Psi_l(\widehat{c}_{j})=0, l=1,\ldots,k-1.  \label{property:2}
\end{split}\end{equation}
There are at most $k+1$ distinct values among $c_{i_1}$, $\tilde{c}_{j}$'s and $\widehat{c}_j$'s. Notice that $\delta>0$ and $c_{i_1}$ is distinct from $\tilde{c}_{j}$'s and $\widehat{c}_j$'s,
by (a) of Corollary \ref{coro1}, there are exactly $k+1$ distinct values among $c_{i_1}$, $\tilde{c}_{j}$'s and $\widehat{c}_j$', i.e., they are all distinct except the given $\tilde{c}_0$ and $\tilde{c}_n$ (if applicable). Notice that  $\tilde{\omega}_j$'s and $\widehat{\omega}_j$'s are positive. By (b) of Corollary \ref{coro1}, if we re-order the $k+1$ distinct values,
$\tilde{c}_{j}$'s and $c_{i_1}$ are alternated by $\widehat{c}_{j}$'s. Thus we have $\tilde{c}_j<\widehat{c}_j, j<i_1$ and $\tilde{c}_j>\widehat{c}_j, j\geq i_1$. When $\omega_{i_1}$ decreases, the proof is a completely analogous.

(iii):  Since that $\tilde{c}_j^m$'s are bounded with $c_{j}$ and
$c_{j+1}$, so are $\overline{\lim}\tilde{c}_j^m$ and
$\underline{\lim} \tilde{c}_j^m$.  It is sufficient to show that
$\overline{\lim} \tilde{c}_j^m=\underline{\lim}
\tilde{c}_j^m=\tilde{c}_j$ for all $j$'s. Here, we show that the
conclusion holds for one $j$, say $j_1$. The proof for other cases
are complete analogy. There exists a subsequence of $\{m\}$, say,
$\{m_1\}$, such that  $\lim \tilde{c}_{j_1}^{m_1}= \overline{\lim}
\tilde{c}_{j_1}^m$. Then we can further choose a subsequence from
$\{m_1\}$, say,  $\{m_2\}$, such that  $\lim \tilde{c}_{j}^{m_2}$
exists for one $j\ne j_1$. We can continue this way to choose a
subsequence until we have a subsequence, say, $\{m'\}$, such that
$\lim \tilde{c}_{j_1}^{m'}= \overline{\lim} \tilde{c}_{j_1}^m$,
$\lim \tilde{c}_{j}^{m'}$ exists for all $j\ne j_1$, and $\lim
\tilde{\omega}_{j}^{m'}$ exists for all $j$. Then by Lemma
\ref{exist} and $\lim \omega_{i}^{m'}\rightarrow \omega_{i}$, we
have
\begin{equation}\begin{split}
&\sum_{i} \omega_i=\lim \tilde{\omega}_{j_1}^{m'}+\sum_{j\ne j_1}\lim \tilde{\omega}_j^{m'};\\
&\sum_{i}\omega_i\Psi_l(c_{i})=\lim \tilde{\omega}_{j_1}^{m'}\Psi_l( \overline{\lim}\tilde{c}_{j_1}^{m} )+\sum_{j\ne j_1}\lim \tilde{\omega}_j^{m'}\Psi_l(\lim \tilde{c}_{j}^{m'}),
l=1,\ldots,k-1. \label{property:iii}\\
\end{split}\end{equation}
By the uniqueness of $\tilde{c}_j$'s from (i) and (\ref{property:iii}), we must have $\overline{\lim}\tilde{c}_{j_1}^{m}=\tilde{c}_{j_1}$. Similarly we can show that $\underline{\lim}\tilde{c}_{j_1}^{m}=\tilde{c}_{j_1}$. The conclusion follows.

(iv): By similar arguments as used in (iii), we can find a subsequence  $\{m_1\}$, such that
 $\lim \omega_{i_1}^{m_1}=0$,  $\lim \omega_{i_2}^{m_1}>0$, $\lim \omega_{i}^{m_1}$ exists
 for all $i\ne i_1$, or  $i_2$, and $\lim \tilde{c}_j^{m_1}$ and $\lim \tilde{\omega}_j^{m_1}$ exist for all $j$'s. Then by Lemma \ref{exist}, we have
\begin{equation}\begin{split}
&\sum_{i\ne i_1} \lim \omega_i^{m_1}+\sum_j(-\lim \tilde{\omega}_j^{m_1})=0, \\
&\sum_{i\ne i_1}\lim \omega_i^{m_1} \Psi_l(c_i)
+\sum_{j}(-\lim \tilde{\omega}_j^{m_1})\Psi_l(\lim \tilde{c}_{j}^{m_1})=0, l=1,\ldots,k-1.  \label{property:iv}
\end{split}\end{equation}
There are at most $k$ distinct values among $c_{i}$'s, $i\ne i_1$, and $\lim \tilde{c}_{j}^{m_1}$'s. By  similar arguments as used in the proof of (i), we have either $\lim \tilde{c}_{i_2-1}^{m_1}=c_{i_2}$ or $\lim \tilde{c}_{i_2}^{m_1}=c_{i_2}$. The conclusion follows.

(v): Suppose $\underline{\lim}\omega_{i}^m>0$ for $i\geq i_2$. If $\tilde{c}_j^m \nrightarrow c_{j+1}$ for some $j\geq i_2-1$, say, $j_1$, then there exists a subsequence $\{m_1\}$, such that
$\lim\omega_{i}^{m_1}$ exists for all $i$'s, $\lim \tilde{c}_j^{m_1}$ and $\lim \tilde{\omega}_j^{m_1}$ exist for all $j$'s, $\lim\omega_{i_1}^{m_1}=0$, $\lim\omega_{i}^{m_1}>0$ for $i\geq i_2-1$,  and  $\lim \tilde{c}_{j_1}^{m_1}\ne c_{j_1+1}$. By similar arguments as used in the proof of (i), we have $\lim \tilde{c}_{j}^{m_1}=c_{j+1}$ for $j\geq i_2-1$ including $j_1$. This is a contradiction.  The case when $\underline{\lim}\omega_{i}^m>0$ for $i\leq  i_3$ can be proven analogously.

(vi): Suppose $\underline{\lim}|\tilde{c}_{j_1}^m-c_{j_1+1}|=0$.
By similar argument as used in (iii), we can find a subsequence
$\{m_1\}$, such that $\lim \tilde{c}_{j_1}^{m_1}=c_{j_1+1}$, $\lim
\tilde{c}_j^{m_1}$ and $\lim \tilde{\omega}_j^{m_1}$ exist for all
$j$,  $\lim \omega_{i}^{m_1}$ exists for all $i$. Then,
$\lim\omega_i^{m_1}=0$ for some $i\leq j_1$. Otherwise, suppose
that $\lim\omega_i^{m_1}>0$ for $i\leq j_1$. If there is one
$i>j_1$ such that $\lim \omega_{i}^{m_1}=0$, then by similar
arguments as used in the proof of (i), we must have $\lim
\tilde{c}_{j_1}^{m_1}=c_{j_1}$. This is a contradiction. Therefore
we must have $\lim \omega_{i}^{m_1}>0$ for all $i$'s, so $\lim
\tilde{c}_{j_1}^{m_1}=c_{j_1}^*$. Here, $(\tilde{c}_j^*,
\tilde{\omega}_j^*)$ is the corresponding  solution for given
$(c_i,\lim \omega_i^{m_1})$. So we have $c_{j_1}^*<c_{j_1+1}$.
This is also a contradiction. Thus we must have
$\lim\omega_i^{m_1}=0$ for some $i\leq j_1$. Let $i_1$ be the
largest $i$. If $i_1=j_1$, obviously the conclusion holds. If
$i_1<j_1$, by the definition of $i_1$, we have
$\lim\omega_i^{m_1}>0$ when $i_1<i\leq j_1$. By (iv) and the fact
that $\lim \tilde{c}_j^{m_1}$ exists for all $j$'s, we have either
$\lim\tilde{c}_{i-1}^{m_1}=c_i$ or  $\lim\tilde{c}_{i}^{m_1}=c_i$
for $i_1<i\leq j_1$. But we have $\lim
\tilde{c}_{j_1}^{m_1}=c_{j_1+1}$, which implies that $\lim
\tilde{c}_{i}^{m_1}=c_{i+1}$ for $i_1\leq i\leq j_1$. The
conclusion follows. The case when
$\underline{\lim}|\tilde{c}_{j_2}^m-c_{j_2}|=0$ is completely
analogous.

(vii):  Suppose that $\omega_i^m\nrightarrow \omega_i$ for some
$i$, say $i_2$, then there must exist a subsequence $\{m_1\}$,
such that $\lim\omega_i^{m_1}$, $\lim\tilde{\omega}_j^{m_1}$, and
$\lim\tilde{c}_j^{m_1}$ all exist, and
$\lim\omega_{i_2}^{m_1}<\omega_{i_2}$. We have
$\lim\tilde{c}_j^{m_1}=\tilde{c}_{j}'$, where
$(\tilde{c}_{j}',\tilde{\omega}_j)$'s are the solution set for
given $(c_i,\lim\omega_i^{m_1})$, $i=1,\ldots,i_1$ and
$(c_i,\omega_i)$, $i>i_1$. However,  $\lim\omega_{i}^{m_1}\leq
\omega_{i}$ for $i\leq i_1$ and
$\lim\omega_{i_2}^{m_1}<\omega_{i_2}$, we must have
$\tilde{c}_{j_1}'>\tilde{c}_{j_1}$ by (ii). This is contradictory
to $\tilde{c}_{j_1}^m \rightarrow \tilde{c}_{j_1}$.

(viii): Notice that in Case (a), if $(c_i, \omega_i),
i=1,\ldots,n$ hold the same value, we can show that (similar to
the proof of (ii)) the solution $\tilde{c}_n$ is an increasing
function of $\tilde{c}_0$ as long as $\tilde{c}_0<c_1$. Apply this
and use a similar approach as in (vii), the conclusion follows.
\end{proof}

For the ease of presentation, we define
$(C,\Omega)=\{(c_i,\omega_i)\}$, where $c_i<c_{i+1}$ and
$\omega_i>0$;
$(\widetilde{C},\widetilde{\Omega})=\{(\tilde{c}_j,\tilde{\omega}_j)\}$,
where $\tilde{c}_j<\tilde{c}_{j+1}$ and $\tilde{\omega}_j>0$;
$G_l(C,\Omega,\widetilde{C},\widetilde{\Omega})=\sum_{i}\omega_i\Psi_l(c_{i})-\sum_{j}\tilde{\omega}_j\Psi_l(\tilde{c}_{j})$.
For given $(C,\Omega)$ with appropriate cardinality, let
\begin{itemize}
  \item [(i)] $S^{I}_j(C,\Omega,\tilde{c}_0)=\tilde{c}_j$, where $\tilde{c}_j$'s  are given under Case (a);
  \item [(ii)] $S^{II}_j(C,\Omega,\tilde{c}_n)=\tilde{c}_j$, where $\tilde{c}_j$'s are  given under Case (b);
    \item [(iii)] $S^{III}_j(C,\Omega,\tilde{c}_0,\tilde{c}_n)=\tilde{c}_j$, where $\tilde{c}_j$'s are given under Case (c);
  \item [(iv)] $S^{IV}_j(C,\Omega)=\tilde{c}_j$, where $\tilde{c}_j$'s are  given under Case (d).
\end{itemize}

We will use mathematical induction to prove Lemma \ref{exist}.
When $k=2$ and 3, Lemma \ref{exist} has been proven by Yang and
Stufken (2009). We use the following two propositions to prove
Lemma \ref{exist} for arbitrary $k$, i.e., (i) assume Lemma
\ref{exist} holds when $k\leq 2n-2$, then show that it also holds
for $k=2n-1$ and (ii) assume Lemma \ref{exist} holds when $k\leq
2n-1$, then show that it also holds for $k=2n$.  Notice that once
we can show there exist such $ (\widetilde{C},
\widetilde{\Omega})$ which satisfies (\ref{exist:1}) and
(\ref{exist:2}), then by either Proposition \ref{even} or
\ref{odd}, the inequality in (\ref{exist:3}) holds. To prove the
existence of $ (\widetilde{C}, \widetilde{\Omega})$,  the strategy
is similar for each of the four cases.

\begin{Proposition}\label{dedu2}
If Lemma \ref{exist} holds for $k\leq 2n-2$, then it will also
hold for $k=2n-1$.
\end{Proposition}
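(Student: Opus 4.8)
The plan is to prove cases (a) and (b); since (b) is the mirror image of (a) under the reflection $c \mapsto A+B-c$, which reverses the order of the nodes and the roles of the two endpoints, I would treat (a) in detail and obtain (b) by the identical argument. Throughout I would exploit the remark preceding this proposition: it suffices to produce pairs $(\tilde c_j,\tilde\omega_j)$, $j=0,\ldots,n-1$, with the prescribed interlacing $\tilde c_0<c_1<\tilde c_1<\cdots<\tilde c_{n-1}<c_n$, strictly positive weights, and satisfying (\ref{exist:1}) and (\ref{exist:2}) for $l=1,\ldots,k-1=2n-2$; the strict inequality (\ref{exist:3}) then follows automatically from Proposition \ref{odd}. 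So the entire content is the \emph{existence} of such an interlaced, positively weighted configuration.

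The starting point is the induction hypothesis. Applying Lemma \ref{exist}(d) with $k=2n-2$ to the given $n$ pairs $(c_i,\omega_i)$ produces $n-1$ interior nodes $\tilde c_1,\ldots,\tilde c_{n-1}$ with $\tilde c_j\in(c_j,c_{j+1})$ and positive weights matching the first $2n-3$ moments and the total mass. This configuration already places the interior nodes in exactly the slots demanded by (a), but it carries no node to the left of $c_1$ and matches one fewer moment than we need. I would then turn on the prescribed left node $\tilde c_0$ and set up a one-parameter continuation: let the weight $\tilde\omega_0$ at $\tilde c_0$ increase from $0$, compensating by a controlled redistribution of the given mass, and at each stage re-solve the interior $(n-1)$-node problem for the first $2n-3$ moments by the same induction hypothesis. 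The single remaining degree of freedom is spent driving the one unmatched functional $\sum_i\omega_i\Psi_{2n-2}(c_i)-\sum_j\tilde\omega_j\Psi_{2n-2}(\tilde c_j)$ to zero.

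To make the continuation yield a solution I would lean on Proposition \ref{property}. Part (iii) gives continuity of the solution map, so the unmatched functional varies continuously along the parameter; the monotonicity in part (ii) drives it in a single direction; and parts (iv)--(vii) identify the limiting configurations at the two extremes of the continuation, namely interior nodes colliding with the $c_{j+1}$ or weights vanishing, which pin down the signs of the functional at the endpoints of the parameter interval. Together these show the functional changes sign, so an intermediate-value argument produces a parameter value at which all $2n-2$ moments and the total mass are matched. Proposition \ref{property}(viii) then supplies the properness I need at that value: it guarantees that the tracked node and weights converge to a genuine, strictly interlaced configuration with all weights positive rather than to a degenerate limit.

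The main obstacle, and where most of the work lives, is precisely this boundary and degeneracy analysis rather than the bookkeeping of the moment equations. I must ensure that along the entire continuation the interior nodes stay strictly inside their slots $(c_j,c_{j+1})$, that $\tilde c_0$ stays strictly below $c_1$, and that every weight (including $\tilde\omega_0$) remains positive, so that the induction hypothesis keeps applying and the endpoint limits are correctly identified. The sign-of-the-next-moment information in Propositions \ref{even} and \ref{odd}, the determinantal positivity of Proposition \ref{prop2} together with Corollary 1, and the monotonicity, continuity, and properness package of Proposition \ref{property} are exactly the ingredients assembled to close these gaps; orchestrating them so that the sign change is unambiguous is the delicate step.
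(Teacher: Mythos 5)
Your high-level skeleton matches the paper's: reduce case (b) to case (a) by a reflection, observe that only the existence of an interlaced positively-weighted configuration matching (\ref{exist:1}) and (\ref{exist:2}) needs to be shown (Proposition \ref{odd} then supplies the sign of (\ref{exist:3}) and the positivity of all weights), and close with an intermediate-value argument on the one remaining moment, using Proposition \ref{property} to control the limits. However, the central construction is missing, and the step you describe as ``compensating by a controlled redistribution of the given mass'' is exactly the step that cannot be carried out the way you propose. Once you place a positive weight $\tilde{\omega}_0$ at the prescribed exterior node $\tilde{c}_0<c_1$, the residual object whose first $2n-3$ moments the interior $(n-1)$-node configuration must match is $\sum_i\omega_i\delta_{c_i}-\tilde{\omega}_0\delta_{\tilde{c}_0}$, a \emph{signed} measure; the induction hypothesis (Case (d) at $k=2n-2$) applies only to positively weighted point sets of the appropriate cardinality and says nothing about such an object. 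A simple count also shows there is no slack to absorb this informally: matching total mass plus the first $2n-2$ moments with $n-1$ free nodes and $n$ free weights is an exactly determined $(2n-1)\times(2n-1)$ system, so the ``one remaining degree of freedom'' has to be manufactured by a genuine construction, not assumed.

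The paper manufactures it by \emph{splitting} the original measure into two positively weighted pieces: $D=\{d_1,\ldots,d_{n-1},\omega_n\}$ with $d_i\in(0,\omega_i)$, and the complement $\Omega^{\overline{D}}$ supported on $C^{-n}=\{c_1,\ldots,c_{n-1}\}$. Case (d) of the induction hypothesis applied to $(C,D)$ yields $n-1$ interior nodes, while Case (c) applied to $(C^{-n},\Omega^{\overline{D}})$ with prescribed endpoints $\tilde{c}_0$ and $\tilde{c}_{n-1}$ yields $n$ nodes including $\tilde{c}_0$. The heart of the proof (step (i), itself a nested induction with repeated intermediate-value arguments in each $d_i$, plus step (ii) adjusting $\tilde{c}_{n-1}$) is forcing the interior supports of these two solutions to \emph{coincide}, so that superposing them gives a single $n$-point configuration matching the first $2n-3$ moments of the full measure, parametrized by $d_{n-1}$. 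Only then does the sign-change argument in $d_{n-1}$ (step (iii), with the two endpoint limits being the Case (c) solution of the truncated problem and the Case (d) solution of the full problem) go through. Your proposal correctly names the ingredients — Propositions \ref{even}, \ref{odd}, \ref{prop2}, Corollary 1, and the continuity/monotonicity/degeneracy package of Proposition \ref{property} — but without the measure-splitting and support-matching mechanism there is no well-posed subproblem to which those ingredients can be applied, so the continuation you sketch never gets off the ground.
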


\begin{proof}
If Case (a) holds, we can consider a new function set
$\widetilde{\Psi}_1, \ldots, \widetilde{\Psi}_k$ on $[-B,-A]$.
Here $\widetilde{\Psi}_i(c)=-\Psi_i(-c)$ when $i$ is odd and
$\widetilde{\Psi}_i(c)=\Psi_i(-c)$ when $i$ is even. For the new
function set,  we can verify that  the corresponding $f_{l,l}>0$,
$c\in [-B, -A]$, $l=1,\ldots,k$. Let $C^{-}=\{-c_i,
i=1,\ldots,n\}$ and $\tilde{c}_0^{-}=-\tilde{c}_n$. Apply Case (a)
to the new function set  $\widetilde{\Psi}_1, \ldots,
\widetilde{\Psi}_k$ with $C^{-}$ and $\tilde{c}_0^{-}$, we obtain
the solution set $\widetilde{C}^{-}=\{\tilde{c}_j^-,
j=0,\ldots,n-1\}$. Let
$\widetilde{C}=\{\tilde{c}_j=-\tilde{c}_{n-j}^-, j=1,\ldots,n\}$,
then  Case (b) follows by replacing  $\widetilde{\Psi}_i$ with
$\Psi_i$. So we only need to prove Case (a).

In this case, $(C,\Omega)=\{(c_i,\omega_i), i=1,\ldots,n\}$ and
$\tilde{c}_0$ are given.  It is sufficient to show that there
exists a solution set
$(\widetilde{C},\widetilde{\Omega})=\{(\tilde{c}_j,\tilde{\omega}_j),
j=0,\ldots,n-1\}$ that satisfy (\ref{exist:1}) and (\ref{exist:2})
of Lemma \ref{exist}.

Let $D=\{d_1, \ldots, d_{n-1}, \omega_n\}$, where $d_i\in
(0,\omega_i)$, $i=1,\ldots,n-1$. Define
$\Omega^{\overline{D}}=\{\omega_1-d_1,\ldots,\omega_{n-1}-d_{n-1}\}$
and $C^{-n}=\{c_1,\ldots,c_{n-1}\}$. We are going to show that for
any given $d_{n-1}\in$ $(0,\omega_{n-1})$, there exist $d_i$,
$i=1,\ldots,n-2$, and $\tilde{c}_{n-1}\in (c_{n-1},c_n)$, such
that
\begin{equation}\begin{split}
S^{III}_{j}(C^{-n},\Omega^{\overline{D}},\tilde{c}_0,\tilde{c}_{n-1})=S^{IV}_j(C,D), \label{dedu2:0}
\end{split}\end{equation}
for $j=1,\ldots,n-1$. Once we show that (\ref{dedu2:0}) holds, we
can let $\tilde{\omega}'_j$ be the corresponding weight of
$S^{III}_{j}(C^{-n},\Omega^{\overline{D}},\tilde{c}_0,\tilde{c}_{n-1})$,
$j=0,\ldots,n-1$ and $\tilde{d}_j$ be the corresponding weight of
$S^{IV}_j(C,D)$, $j=1,\ldots,n-1$. Define
$\widetilde{C}=\{\tilde{c}_0, S^{IV}_j(C,D), j=1,\ldots, n-1\}$
and  $\widetilde{\Omega}=\{\tilde{\omega}'_0,
\tilde{\omega}'_j+\tilde{d}_j,j=1,\ldots,n-1\}$, then
\begin{equation}\begin{split}
G_l(C,\Omega,\widetilde{C},\widetilde{\Omega})=0, l=1,\ldots,2n-3. \label{dedu2:1}
\end{split}\end{equation}
It can be shown that $G_{2n-2}(C,\Omega,\widetilde{C},\widetilde{\Omega})$ is a continuous function of $d_{n-1}$. If we further show that $G_{2n-2}(C,\Omega,\widetilde{C},\widetilde{\Omega})$ has different signs when $d_{n-1}\downarrow 0$ and $d_{n-1}\uparrow \omega_{n-1}$, then there must exists a $d_{n-1}\in
(0,\omega_{n-1})$, such that $G_{2n-2}(C,\Omega,\widetilde{C},\widetilde{\Omega})=0$.
Then our conclusion follows.

This strategy will be achieved in three steps: (i) for any given
$d_{n-1}\in (0,\omega_{n-1})$ and
$\tilde{c}_{n-1}\in(c_{n-1},c_n)$, there exists $d_i$,
$i=1,\ldots,n-2$, such that (\ref{dedu2:0}) holds for
$j=1,\ldots,n-2$; (ii) for any given $d_{n-1}\in
(0,\omega_{n-1})$, there exist $\tilde{c}_{n-1}\in (c_{n-1},c_n)$
and $d_i$, $i=1,\ldots,n-2$, such that (\ref{dedu2:0}) holds for
$j=1,\ldots,n-1$; (iii)
$G_{2n-2}(C,D,\widetilde{C},\widetilde{\Omega})$ has different
signs when $d_{n-1}\downarrow 0$ and $d_{n-1}\uparrow
\omega_{n-1}$.

 Step (i) can be proven by mathematical induction. We
first show that for any given $d_i\in (0,\omega_i),
i=2,\ldots,n-1$ and $\tilde{c}_{n-1}\in (c_{n-1},c_n)$, there
exists $d_1\in (0,\omega_1)$, such that (\ref{dedu2:0}) holds when
$j=1$. This is because when $d_1\uparrow \omega_1$, we have
\begin{equation}\begin{split}
S^{III}_1(C^{-n},\Omega^{\overline{D}},\tilde{c}_0,\tilde{c}_{n-1}) \rightarrow c_2 \text{ and }S^{IV}_1(C,D) \rightarrow S^{IV}_1(C,D') <c_2, \label{dedu2:2}
\end{split}\end{equation}
where $D'=\{\omega_1,d_2, \ldots, d_{n-1}, \omega_n\}$. This is
due to (v) and (iii) of Proposition \ref{property}, respectively.
When $d_1\downarrow 0$, we have
\begin{equation}\begin{split}
S^{III}_1(C^{-n},\Omega^{\overline{D}},\tilde{c}_0,\tilde{c}_{n-1}) \rightarrow
S^{III}_1(C^{-n},\Omega',\tilde{c}_0,\tilde{c}_{n-1})<c_2 \text{ and }S^{IV}_1(C,D) \rightarrow c_2, \label{dedu2:3}
\end{split}\end{equation}
where $\Omega'=\{\omega_1,\omega_2-d_2, \ldots,
\omega_{n-1}-d_{n-1}\}$. This is due to (iii) and (v) of
Proposition \ref{property}, respectively. By (\ref{dedu2:2}) and
(\ref{dedu2:3}), it is clear that\\
$S^{III}_1(C^{-n},\Omega^{\overline{D}},\tilde{c}_0,\tilde{c}_{n-1})
-S^{IV}_1(C,D)$ is positive when $d_1\uparrow \omega_1$ and
negative when $d_1\downarrow 0$. It can be shown that
$S^{III}_1(C^{-n},\Omega^{\overline{D}},\tilde{c}_0,\tilde{c}_{n-1})
-S^{IV}_1(C,D)$ is a continuous function of $d_1$. Thus there
exists a point $d_1$ such that (\ref{dedu2:0}) holds when $j=1$.
Notice that $d_1$ depends on $d_i$'s, $i=2,\ldots,n-1$ and
$\tilde{c}_{n-1}$.

Now, we assume that for any given $d_i\in (0,\omega_i), i=p(\leq
n-2),\ldots,n-1,$ and $\tilde{c}_{n-1}\in (c_{n-1},c_n)$, there
exists $d_i\in (0,\omega_i), i=1,\ldots,p-1$, such that
(\ref{dedu2:0}) holds when $j=1, \ldots,p-1$. Consider any given
$d_i\in (0,\omega_i), i=p+1,\ldots,n-1$ and $\tilde{c}_{n-1}$. By
assumption, for any $d_{p}\in (0,\omega_p)$,  there exists $d_i\in
(0,\omega_i), i=1,\ldots,p-1$ such that (\ref{dedu2:0}) holds when
$j=1, \ldots,p-1$. When $d_p\downarrow 0$, by (v) of Proposition
\ref{property}, we have
\begin{equation}\begin{split}
S^{IV}_p(C,D) \rightarrow c_{p+1}.
\label{dedu2:4}
\end{split}\end{equation}
Next, we are going to show that  $d_i\rightarrow 0$,
$i=1,\ldots,p-1$ when $d_p\downarrow 0$. Suppose there exists some
$i(<p)$, such that $d_i\nrightarrow 0$. Let $i_1$ be the smallest
$i$ that satisfies this condition. If $i_1=1$, then we have
$\underline{\lim}|S^{IV}_1(C,D)-c_1|=0$ by (iv) of Proposition
\ref{property}. This implies that
$\underline{\lim}|S^{III}_{1}(C^{-n},\Omega^{\overline{D}},\tilde{c}_0,\tilde{c}_{n-1})-c_{i_1}|=0$
since (\ref{dedu2:0}) holds for $j=1, \ldots,p-1$. If $i_1>1$,
then by (iv) of Proposition \ref{property}, we have either
$\underline{\lim}|S^{IV}_{i_1-1}(C,D)-c_{i_1}|=0$ or
$\underline{\lim}|S^{IV}_{i_1}(C,D)-c_{i_1}|=0$. Suppose that
$\underline{\lim}|S^{IV}_{i_1-1}(C,D)-c_{i_1}|=0$, then we have
$\underline{\lim}|S^{III}_{i_1-1}(C^{-n},\Omega^{\overline{D}},\tilde{c}_0,\tilde{c}_{n-1})-c_{i_1}|=0$
by (\ref{dedu2:0}) again. By (vi) of Proposition \ref{property},
$\underline{\lim}(\omega_i-d_i)=0$ for some $i\leq i_1-1$. By the
definition of $i_1$, we have $d_i\rightarrow 0$ for $i< i_1$. This
is a contradiction. So we must have
$\underline{\lim}|S^{IV}_{i_1}(C,D)-c_{i_1}|=0$, which implies
that
$\underline{\lim}|S^{III}_{i_1}(C^{-n},\Omega^{\overline{D}},\tilde{c}_0,\tilde{c}_{n-1})
-c_{i_1}|=0$ by (\ref{dedu2:0}).

By (vi) of Proposition \ref{property}, there exists a subsequence
of $\{d_p\downarrow 0\}$ and $i_2>i_1$ such that $\lim
(\omega_{i_2}-d_{i_2})=0$ and
$\lim|S^{III}_{i}(C^{-n},\Omega^{\overline{D}},\tilde{c}_0,\tilde{c}_{n-1})
-c_{i}|=0$ for $i_1\leq i<i_2$. For this subsequence,  by (iv) of
Proposition \ref{property} and the fact that $\lim
d_{i_2}=\omega_{i_2}$, we have either
$\underline{\lim}|S^{IV}_{i_2-1}(C,D)-c_{i_2}|=0$ or
$\underline{\lim}|S^{IV}_{i_2}(C,D)-c_{i_2}|=0$. However, $\lim
S^{IV}_{i_2-1}(C,D)$=$\lim
S^{III}_{i_2-1}(C^{-n},\Omega^{\overline{D}},\tilde{c}_0,\tilde{c}_{n-1})=c_{i_2-1}$.
Thus we must have $\underline{\lim}|S^{IV}_{i_2}(C,D)-c_{i_2}|=0$,
this also implies that\\
$\underline{\lim}|S^{III}_{i_2}(C^{-n},\Omega^{\overline{D}},\tilde{c}_0,\tilde{c}_{n-1})-c_{i_2}|=0$.

By the exact same argument, there must exist $i_3>i_2$ and a
subsequence, such that $\lim (\omega_{i_3}-d_{i_3})=0$ and
$\underline{\lim}|S^{III}_{i_3}(C^{-n},\Omega^{\overline{D}},\tilde{c}_0,\tilde{c}_{n-1})-c_{i_3}|=0$.
Repeat this argument again, we can find strictly increasing
numbers $i_4, i_5,\ldots$ and each of them has the same property
as $i_2$ and $i_3$. Since $p$ is finite, one of
$\{i_2,i_3,i_4,\ldots\}$  must be greater than or equal to $p$.
This leads to a contradiction since all $d_i(<\omega_i)$, $i>p$
are fixed and $d_p\downarrow 0$. Thus we have $d_i\rightarrow 0$,
$i=1,\ldots,p-1$ when $d_p\downarrow 0$. This implies that
\begin{equation}\begin{split}
S^{III}_{p}(C^{-n},\Omega^{\overline{D}},\tilde{c}_0,\tilde{c}_{n-1}) \rightarrow S^{III}_{p}(C^{-n},\Omega',\tilde{c}_0,\tilde{c}_{n-1})<c_{p+1},
\label{dedu2:5}
\end{split}\end{equation}
where $\Omega'=\{\omega_1,\ldots, \omega_{p},\omega_{p+1}-d_{p+1},\ldots, \omega_{n-1}-d_{n-1}\}$. (\ref{dedu2:5}) is due to (iii) of Proposition \ref{property}. By (\ref{dedu2:4}) and (\ref{dedu2:5}), we have
\begin{equation}\begin{split}
S^{IV}_p(C,D)-S^{III}_{p}(C^{-n},\Omega^{\overline{D}},\tilde{c}_0,\tilde{c}_{n-1}) >0
\label{dedu2:6}
\end{split}\end{equation}
when $d_p\downarrow 0$.

On the other hand, notice that $d_p\uparrow \omega_p$ is
equivalent to $\omega_p-d_p\downarrow 0$. We can show that the
inequality sign in (\ref{dedu2:6}) will reverse using an analogous
approach as used in the case of  $d_p\downarrow 0$. Due to space
limit, we will just give the outline of the proof here. First, we
have
\begin{equation}\begin{split}
S^{III}_p(C^{-n},\Omega^{\overline{D}},\tilde{c}_0,\tilde{c}_{n-1}) \rightarrow c_{p+1}.
\label{dedu2:7}
\end{split}\end{equation}
Next, we are going to show that  $d_i\rightarrow \omega_i$,
$i=1,\ldots,p-1$ when $d_p\uparrow \omega_p$. Suppose there exists
some $i(<p)$, such that $d_i\nrightarrow \omega_i$. Let $i_1$ be
the smallest one. We can show that
$\underline{\lim}|S^{IV}_{i_1}(C,D) -c_{i_1}|=0$. This implies
that there exists a subsequence of $\{d_p\uparrow \omega_p\}$ and
$i_2>i_1$ such that $\lim d_{i_2}=0$ and
$\underline{\lim}|S^{IV}_{i_2}(C,D)-c_{i_2}|=0$. Repeat this
argument, we can find strictly increasing numbers $i_3,
i_4,\ldots$ and each of them has the same property as $i_2$. Since
$p$ is finite, one of $\{i_2,i_3,i_4,\ldots\}$  must be greater
than or equal to $p$. This leads to a contradiction since all
$d_i$, $i>p$ are fixed and $d_p\uparrow \omega_p>0$. Thus we have
$d_i\rightarrow \omega_i$, $i=1,\ldots,p-1$ when $d_p\uparrow
\omega_p$. This implies that
\begin{equation}\begin{split}
S^{IV}_p(C,D) \rightarrow S^{IV}_p(C,D')<c_{p+1},
\label{dedu2:8}
\end{split}\end{equation}
where $D'=\{\omega_1,\ldots, \omega_{p},d_{p+1},\ldots, d_{n-1}, \omega_n\}$. By (\ref{dedu2:7}) and (\ref{dedu2:8}), we have
\begin{equation}\begin{split}
S^{IV}_p(C,D)-S^{III}_{p}(C^{-n},\Omega^{\overline{D}},\tilde{c}_0,\tilde{c}_{n-1}) <0
\label{dedu2:9}
\end{split}\end{equation}
when $d_p\uparrow \omega_p$. It can be shown that
$S^{IV}_p(C,D)-S^{III}_{p}(C^{-n},\Omega^{\overline{D}},\tilde{c}_0,\tilde{c}_{n-1})$
is a continuous function of $d_p$. By (\ref{dedu2:6}) and
(\ref{dedu2:9}), there must exists $d_p$, such that
\begin{equation}\begin{split}
S^{IV}_p(C,D)=S^{III}_{p}(C^{-n},\Omega^{\overline{D}},\tilde{c}_0,\tilde{c}_{n-1}).
\label{dedu2:10}
\end{split}\end{equation}
By mathematical induction, we have shown step (i).

Now we are going to prove step (ii). Let
$\Omega_{d_{n-1}}=\{\omega_1,\ldots,\omega_{n-2},d_{n-1},\omega_n\}$.
Applying Lemma \ref{exist} when $k=2n-2$, we have
$\tilde{c}^*(d_{n-1})\in (c_{n-1},c_n)$, where
$\tilde{c}^*(d_{n-1})=S^{IV}_{n-1}(C,\Omega_{d_{n-1}})$. From step
(i), we know that for any given $d_{n-1}$ and $\tilde{c}_{n-1}\in
(\tilde{c}^*(d_{n-1}),c_n)$, there exists $d_i$, $i=1,\ldots,n-2$,
such that (\ref{dedu2:0}) holds for $j=1,\ldots,n-2$. It can be
shown that $S^{IV}_{n-1}(C,D)$ is a continuous function of
$\tilde{c}_{n-1}$. Then it is sufficient to show that
$\underline{\lim} S^{IV}_{n-1}(C,D)>\tilde{c}_{n-1}$ when
$\tilde{c}_{n-1}\downarrow \tilde{c}^*(d_{n-1})$ and
$\overline{\lim}S^{IV}_{n-1}(C,D)<\tilde{c}_{n-1}$ when
$\tilde{c}_{n-1}\uparrow c_n$.

Suppose that $\underline{\lim} S^{IV}_{n-1}(C,D)\leq
\tilde{c}_{n-1}$ when $\tilde{c}_{n-1}\downarrow
\tilde{c}^*(d_{n-1})$. There exists a subsequence of
$\tilde{c}_{n-1}\downarrow \tilde{c}^*(d_{n-1})$, such that $\lim
S^{IV}_{n-1}(C,D)\leq \tilde{c}_{n-1}$. Since $d_i<\omega_i$,
$i=1,\ldots,n-2$, by (ii) of Proposition \ref{property}, we must
have $S^{IV}_{n-1}(C,D)>
S^{IV}_{n-1}(C,\Omega_{d_{n-1}})=\tilde{c}^*(d_{n-1})$. This
implies that for the subsequence of $\tilde{c}_{n-1}\downarrow
\tilde{c}^*(d_{n-1})$, $\lim S^{IV}_{n-1}(C,D)=
\tilde{c}^*(d_{n-1})$. By (vii) of Proposition \ref{property} and
the fact that $d_i<\omega_i$, $i=1,\ldots,n-2$, we have
$d_i\rightarrow \omega_i$, $i=1,\ldots,n-2$. Consequently, we have
\begin{equation}\begin{split}
S^{IV}_{n-2}(C,D)\rightarrow S^{IV}_{n-2}(C,\Omega_{d_{n-1}})<c_{n-1} \text{ and } S^{III}_{n-2}(C^{-n},\Omega^{\overline{D}},\tilde{c}_0,\tilde{c}_{n-1})\rightarrow c_{n-1},
\end{split}\end{equation}
by  (iii) and (v) of Proposition \ref{property}, respectively.
This is a contradiction to (\ref{dedu2:0}) when $j=n-2$.

Suppose that $\overline{\lim}S^{IV}_{n-1}(C,D)\geq
\tilde{c}_{n-1}$ when $\tilde{c}_{n-1}\uparrow c_n$. There exists
a subsequence of $\tilde{c}_{n-1}\uparrow c_n$, such that $\lim
S^{IV}_{n-1}(C,D)\geq \tilde{c}_{n-1}$. By the assumption that
Lemma \ref{exist} holds when $k=2n-2$, we have
$S^{IV}_{n-1}(C,D)<c_n$. This implies that for the subsequence of
$\tilde{c}_{n-1}\uparrow c_n$, $S^{IV}_{n-1}(C,D)\rightarrow c_n$.
By (vi) of Proposition \ref{property}, there exists a
sub-subsequence of $\tilde{c}_{n-1}\uparrow c_n$ and $i_1(<n-1)$
(notice that $d_{n-1}$ is fixed), such that $\lim d_{i_1}=0$ and
$\lim S^{IV}_{j}(C,D)=c_{j+1}$ for $i_1\leq j\leq n-1$. From the
proof of step (i), $\lim d_{i_1}=0$ means $\lim d_{i}=0$ for
$i\leq i_1$. On the other hand, we have
$S^{III}_{i_1}(C^{-n},\Omega^{\overline{D}},\tilde{c}_0,\tilde{c}_{n-1})=c_{i_1+1}$
by (\ref{dedu2:0}) holds for $j=i_1$.  By (vi) of Proposition
\ref{property}, there exists a sub-sub-subsequence of
$\tilde{c}_{n-1}\uparrow c_n$ and $i_2(\leq i_1)$, such that $\lim
d_{i_2}=\omega_{i_2}$. This is a contradiction to $\lim
d_{i_2}=0$. This proves step (ii).

Now we are going to prove step (iii). By similar arguments as used
in the proof of $\lim d_{p}=0$ in step (i), which implies $\lim
d_{i}=0$ for $i\leq p$, we can show that $\lim d_{i}=0$ for
$i=1,\ldots,n-1$ when $d_{n-1}\downarrow 0$. Recall the definition
of $(\widetilde{C},\widetilde{\Omega})$ at the beginning of the
proof,
\begin{equation}\begin{split}
G_{2n-2}(C,\Omega,\widetilde{C},\widetilde{\Omega})\rightarrow G_{2n-2}(C^{-n},\Omega^{-n},\widetilde{C}^{-n},\widetilde{\Omega}^{-n})<0. \label{dedu2:11}
\end{split}\end{equation}
Here,  $(\widetilde{C}^{-n},\widetilde{\Omega}^{-n})$ is the solution set of $(C^{-n},\Omega^{-n})=\{(c_i,\omega_i), i=1,\ldots,n-1\}$ with  given
$\tilde{c}_0$ and $\tilde{c}_n(=c_n)$ under Case (c) of Lemma \ref{exist} when $k=2n-2$.

Similarly, we can show that $\lim d_{i}=\omega_i$ for
$i=1,\ldots,n-1$ when $d_{n-1}\uparrow \omega_{n-1}$. Therefore,
we have
\begin{equation}\begin{split}
G_{2n-2}(C,\Omega,\widetilde{C},\widetilde{\Omega})\rightarrow G_{2n-2}(C,\Omega,\widetilde{C}',\widetilde{\Omega}')>0. \label{dedu2:12}
\end{split}\end{equation}
Here,  $(\widetilde{C}',\widetilde{\Omega}')$ is the solution set
of $(C, \Omega)$ under Case (d) of Lemma \ref{exist} while
$k=2n-2$. (\ref{dedu2:11}) and (\ref{dedu2:12}) give the proof of
step (iii). This completes the proof proposition \ref{dedu2}.

\end{proof}

\begin{Proposition}\label{dedu1}
If Lemma \ref{exist} holds when $k\leq 2n-1$. Then it also holds when $k=2n$.
\end{Proposition}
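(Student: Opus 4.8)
The plan is to follow the template of Proposition \ref{dedu2}, but with the parity shifted by one: to establish Lemma \ref{exist} at the even level $k=2n$ I would use the odd-level cases $k=2n-1$, namely Cases (a) and (b), as the two building blocks. As the remark preceding Proposition \ref{dedu2} notes, it suffices to produce a set $(\widetilde{C},\widetilde{\Omega})$ meeting (\ref{exist:1}) and (\ref{exist:2}); the sign of (\ref{exist:3}) is then automatic from Proposition \ref{even}. So the whole task reduces to an existence statement for the support points and weights. I would treat Case (c) first, since both boundary points $\tilde{c}_0,\tilde{c}_n$ are prescribed there and it is the cleanest, and then note that Case (d) is completely analogous.

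For Case (c), I would split each given weight as $\omega_i=d_i+(\omega_i-d_i)$, $i=1,\dots,n$, and feed the two halves into the two odd-level solution maps: apply $S^{I}$ (Case (a) at $k=2n-1$) to the data $(\{c_i\},\{d_i\})$ with the given left boundary $\tilde{c}_0$, and apply $S^{II}$ (Case (b) at $k=2n-1$) to $(\{c_i\},\{\omega_i-d_i\})$ with the given right boundary $\tilde{c}_n$. Each block already matches the moments $\Psi_1,\dots,\Psi_{2n-2}$. If the $n-1$ interior points produced by the two blocks can be made to coincide, $S^{I}_j=S^{II}_j=:\tilde{c}_j$ for $j=1,\dots,n-1$, then merging the two designs (keeping $\tilde{c}_0,\tilde{c}_n$ and adding the overlapping masses at each shared $\tilde{c}_j$, with positive weights inherited from the two blocks) yields a design on the $n+1$ points $\tilde{c}_0<\tilde{c}_1<\dots<\tilde{c}_{n-1}<\tilde{c}_n$ that satisfies (\ref{exist:1}) and (\ref{exist:2}) for $l=1,\dots,2n-2$. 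The $n-1$ matching conditions leave exactly one free degree of freedom among $d_1,\dots,d_n$, which I would spend on the last equation $l=2n-1$ by an intermediate-value argument: as the driving weight runs from $0$ to its maximum the matching forces the two blocks to degenerate to pure Case (b) and pure Case (a) data, and the residual $G_{2n-1}$ changes sign between these two extremes (the analog of (\ref{dedu2:11})--(\ref{dedu2:12}) in Step (iii) of Proposition \ref{dedu2}). Once $l=1,\dots,2n-1$ all hold, Proposition \ref{even} gives (\ref{exist:3})$<0$.

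The hard part, exactly as in Proposition \ref{dedu2}, is the matching step: showing that the $d_i$ can be chosen so that all $n-1$ interior points of the two blocks coincide. I would prove this by the same nested intermediate-value scheme used in Steps (i)--(ii) there, driving one $d_i$ at a time and invoking Proposition \ref{property} at level $2n-1$ --- uniqueness (i), the monotone dependence of each solution point on the weights (ii), continuity (iii), and especially the degeneration behaviour (iv)--(vi) that pins down where a solution point travels as a weight tends to $0$. The only genuine difference relative to Proposition \ref{dedu2} is bookkeeping: Case (c) carries both boundary points for free, whereas Case (d), having no prescribed boundary, must introduce two chosen interior boundaries $\tilde{c}_1\in(c_1,c_2)$ and $\tilde{c}_n\in(c_n,c_{n+1})$ (playing the roles that $\tilde{c}_0,\tilde{c}_n$ play in Case (c)), split only the interior weights $\omega_2,\dots,\omega_n$, and attach $\omega_1,\omega_{n+1}$ to single blocks; its matching argument is then the closest mirror of Proposition \ref{dedu2}, with two boundary positions tuned in place of the single $\tilde{c}_{n-1}$ there. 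With the existence of $(\widetilde{C},\widetilde{\Omega})$ secured in each case, Proposition \ref{even} closes the induction and Lemma \ref{exist} holds at $k=2n$.
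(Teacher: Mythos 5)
Your proposal is correct and follows essentially the same route as the paper: for Case (c) it splits each $\omega_i$ into two parts fed to the $k=2n-1$ Case (a) and Case (b) solution maps with the prescribed boundaries $\tilde{c}_0$ and $\tilde{c}_n$, matches the $n-1$ interior points via the nested intermediate-value scheme of Proposition \ref{dedu2} using Proposition \ref{property}, spends the one remaining degree of freedom on $G_{2n-1}$ by a sign change at the two extremes, and closes with Proposition \ref{even}; your handling of Case (d) (two introduced boundaries $\tilde{c}_1,\tilde{c}_n$, only interior weights split, $\omega_1$ and $\omega_{n+1}$ attached whole to single blocks) is likewise exactly the paper's construction.
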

\begin{proof}
We first prove that Case (c) holds. In this case,\\
$(C,\Omega)=\{(c_i,\omega_i), i=1,\ldots,n\}$, $\tilde{c}_0$ and
$\tilde{c}_n$ are given. It is sufficient to show that there
exists a solution set
$(\widetilde{C},\widetilde{\Omega})=\{(\tilde{c}_j,\tilde{\omega}_j),
j=0,\ldots,n\}$ which satisfies (\ref{exist:1}) and
(\ref{exist:2}) of Lemma \ref{exist}. The proof is similar to that
in  Proposition \ref{dedu2}. Here we will provide an
outline of the proof.

Define $D=\{d_i\in (0,\omega_i),i=1,\ldots,n\}$ and $\Omega^{\overline{D}}=\{\omega_i-d_i, i=1,\ldots,n\}$. We are going to show that for any given $d_n\in(0,\omega_n)$, there exist $d_i\in (0,\omega_i)$, $i=1,\ldots,n-1$, such that
\begin{equation}\begin{split}
S^{I}_j(C,\Omega^{\overline{D}},\tilde{c}_0)=S^{II}_j(C,D,\tilde{c}_n) \label{dedu1:0}
\end{split}\end{equation}
for $j=1,\ldots,n-1$. Once we show that (\ref{dedu1:0}) holds, we
can let $\tilde{\omega}'_j$ be the corresponding weight of
$S^{I}_j(C,\Omega^{\overline{D}},\tilde{c}_0)$, $j=0,\ldots,n-1$
and $\tilde{d}_j$ be the corresponding weight of
$S^{II}_j(C,D,\tilde{c}_n)$, $j=1,\ldots,n$. Then define
$\widetilde{C}=\{\tilde{c}_0,
S^{I}_j(C,\Omega^{\overline{D}},\tilde{c}_0), j=1,\ldots,n-1,
\tilde{c}_n\}$ and $\widetilde{\Omega}=\{\tilde{\omega}'_0,
\tilde{\omega}'_j+\tilde{d}_j,j=1,\ldots,n-1,\tilde{d}_{n} \}$.
Then we have
\begin{equation}\begin{split}
G_l(C,D,\widetilde{C},\widetilde{\Omega})=0, l=1,\ldots,2n-2. \label{dedu1:1}
\end{split}\end{equation}
If we further show that $G_{2n-1}(C,D,\widetilde{C},\widetilde{\Omega})$ has different signs when $d_{n}\downarrow 0$ and $d_{n}\uparrow \omega_n$, then there must exists a $d_{n}\in(0,\omega_n)$, such that $G_{2n-1}(C,D,\widetilde{C},\widetilde{\Omega})=0$.
Then our conclusion follows.

This strategy will be achieved in two steps: (i) for any given
$d_{n}\in (0,\omega_{n})$, there exist $d_i$, $i=1,\ldots,n-1$,
such that (\ref{dedu1:0}) holds for $j=1,\ldots,n-1$; and (ii)
$G_{2n-1}(C,D,\widetilde{C},\widetilde{\Omega})$ has different
signs when $d_{n}\downarrow 0$ and $d_{n}\uparrow \omega_{n}$. The
two steps can be proven similarly as in steps (i) and (iii) of
Proposition \ref{dedu2}.

Now we shall show that Case (d) holds. In this case,
$(C,\Omega)=\{(c_i,\omega_i), i=1,\ldots,n+1\}$. The proof is
similar to the proof of Case (a) in Proposition \ref{dedu2}.

Let $D=\{d_2, \ldots, d_{n}, \omega_{n+1}\}$, where $d_i\in (0,\omega_i)$, $i=2,\ldots,n$. Define
$\Omega^{\overline{D}}=\{\omega_1,\omega_2-d_2,\ldots,\omega_{n}-d_{n}\}$, $C^{-1}=\{c_2,\ldots,c_{n+1}\}$ and  $C^{-(n+1)}=\{c_1,\ldots,c_{n}\}$. We are going to show that for any given $d_{n}\in$
$(0,\omega_{n})$, there exists $d_i$, $i=2,\ldots,n-1$, $\tilde{c}_1$,  and $\tilde{c}_n$ such that
\begin{equation}\begin{split}
S^{II}_j(C^{-(n+1)},\Omega^{\overline{D}},\tilde{c}_n)=S^{I}_{j-1}(C^{-1},D,\tilde{c}_1) \label{dedu1:8}
\end{split}\end{equation}
for $j=1,\ldots,n$. Once we show that (\ref{dedu1:8}) holds, we
can define
$\widetilde{C}=\{S^{II}_j(C^{-(n+1)},\Omega^{\overline{D}},\tilde{c}_n),
j=1,\ldots, n\}$ with appropriate  $\widetilde{\Omega}$ (similar
as that of Case (a)). Then we have
\begin{equation}\begin{split}
G_l(C,\Omega,\widetilde{C},\widetilde{\Omega})=0, l=1,\ldots,2n-2. \label{dedu1:9}
\end{split}\end{equation}
If we further show that $G_{2n-1}(C,\Omega,\widetilde{C},\widetilde{\Omega})$ has different signs when $d_{n}\downarrow 0$ and $d_{n}\uparrow \omega_{n}$, then there must exist a $d_{n}\in
(0,\omega_{n})$, such that $G_{2n-1}(C,\Omega,\widetilde{C},\widetilde{\Omega})=0$.
Thus our conclusion follows.

This strategy will be achieved with the following three steps: (i)
for any given $d_{n}\in (0,\omega_{n})$ and
$\tilde{c}_{n}\in(c_{n},c_{n+1})$, there exists
$\tilde{c}_{1}\in(c_{1},c_2)$, $d_i$, $i=2,\ldots,n-1$, such that
(\ref{dedu1:8}) holds for $j=1,\ldots,n-1$; (ii) for any given
$d_{n}\in (0,\omega_{n})$, there exists $\tilde{c}_{n}\in
(c_{n},c_{n+1})$, $\tilde{c}_{1}\in(c_{1},c_2)$,  and $d_i$,
$i=2,\ldots,n-1$, such that (\ref{dedu1:8}) holds for
$j=1,\ldots,n$; (iii)
$G_{2n-1}(C,D,\widetilde{C},\widetilde{\Omega})$ has different
signs when $d_{n}\downarrow 0$ and $d_{n}\uparrow \omega_{n}$.

Define
$\tilde{c}_1=S^{II}_j(C^{-(n+1)},\Omega^{\overline{D}},\tilde{c}_n)$
for given $d_2,\ldots,d_n$ and $\tilde{c}_n$. Thus, we have
$\tilde{c}_1\in (c_1,c_2)$ and (\ref{dedu1:8}) holds $j=1$.

The proof of steps (i), (ii), and (iii) are almost exactly the
same as that of Case (a). One only needs to change the notations
and make two modifications in step (ii): first,
$\tilde{c}^*(d_{n})=S^{I}_{n-1}(C^{-1},\Omega_{d_n},c_1)$ where
$\Omega_{d_n}=\{\omega_2,\ldots,\omega_{n-1},d_n,\omega_{n+1}\}$;
second, use (viii) instead of (vii) of Proposition \ref{property}.
This completes the proof of Proposition \ref{dedu1}.
\end{proof}

\end{document}